\newcommand{\M}{\mathcal{M}}
\newcommand{\N}{\mathcal{N}}
\newcommand{\Nat}{\mathbb{N}}
\newcommand{\Obs}{\mathcal{T}}
\newcommand{\Hyp}{\mathcal{H}}
\newcommand{\bigO}{\mathcal{O}}
\newcommand{\lsharp}{L^{\#}}
\newcommand{\lstar}{L^{\ast}}
\newcommand{\converges}{\ensuremath{\mathord{\downarrow}}}
\newcommand{\diverges}{\ensuremath{\mathord{\uparrow}}}
\newcommand{\code}[1]{\texttt{#1}}
\newcommand{\treeNodeLabel}[1]{\contour{white}{#1}}
\tikzset{
  initial text={},
	treenode/.style = {align=center, inner sep=0pt, text centered},
  basis/.style = {
    pattern=north east lines,
    pattern color=magenta!80!black!80!white,
  },
  frontier/.style={
    pattern=crosshatch dots,
    pattern color=yellow!80!black,
  },
  q0class/.style={
    pattern=vertical lines,
    pattern color=red!60!white,
  },
  q1class/.style={
    pattern=north west lines,
    pattern color=blue!60!white,
  },
  q2class/.style={
    pattern=crosshatch,
    pattern color=green!50!black!60!white,
  },
}
\def\@citecolor{blue}%
\def\@urlcolor{blue}%
\def\@linkcolor{blue}%
\def\orcidID#1{\smash{\protect\raisebox{-1.25pt}{\protect\href{http://orcid.org/#1}{\includegraphics{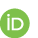}}}}}
\newcommand{\appendixref}[1]{Appendix~\ref{#1}}
\tikzset{
  do guard/.style={
    inner xsep=0pt,
  },
  guard line offset/.style={
    xshift=2mm,
  },
  bigtalloblong/.style={
    draw=black,
    minimum width=3pt,
    minimum height=1em,
    inner xsep=0pt,
  },
}
\newcommand{\connectDoGuards}[2]{%
  \draw[overlay,draw=black!40!white] ([guard line offset]#1) -- ([guard line offset]#2);
}
\newcommand{\StateIf}[1]{\State\textbf{if}~#1~\textbf{then:}}
\newcommand{\StateElse}{\State\textbf{else:}~}
\newcommand{\id}{\ensuremath{\mathsf{id}}}
\newcommand{\access}{\ensuremath{\mathsf{access}}}
\newcommand{\partialto}{\ensuremath{\rightharpoonup}}
\newcommand{\fpair}[1]{\ensuremath{{\langle #1 \rangle}}}
\newcommand{\textqt}[1]{`#1'}
\newcommand{\apart}{\ensuremath{\mathrel{\#}}}
\newcommand{\defined}{\ensuremath{\mathord{\downarrow}}}
\newcommand{\semantics}[1]{\ensuremath{{\llbracket #1\rrbracket}}}
\newcommand{\stateCan}[2]{\ensuremath{\delta(#1,#2)\converges}}
\newcommand{\qedhere}{\qed}
\newcommand{\doiurl}[1]{\href{https://dx.doi.org/#1}{\texttt{#1}}}
\newcommand{\citet}[1]{\cite{#1}}
\newcommand{\takeout}[1]{\relax}
\newenvironment{proofappendix}[2][Proof of]{%
  \subsection*{#1~\autoref{#2}}%
    \addcontentsline{toc}{subsection}{#1~\autoref{#2}}
  }{%
}
\newtheorem{assumption}[theorem]{Assumption}
\newcommand{\defineApiFunction}[2][]{%
  \ifthenelse{\equal{#1}{}}{
    \expandafter\newcommand\csname #2\endcsname{\text{\upshape\textsc{%
          #2%
    }}\xspace}%
  }{%
    \expandafter\newcommand\csname #2\endcsname{\text{\upshape\textsc{%
          #1%
        }}\xspace}%
  }%
}
\begin{document}
\title{A New Approach for Active Automata Learning Based on
  Apartness\thanks{%
    Research supported by NWO TOP project 612.001.852 ``Grey-box learning of Interfaces for Refactoring Legacy Software (GIRLS)''.}}
\titlerunning{A New Approach for Active Automata Learning}
\author{Frits Vaandrager$^\text{\Envelope}$ \orcidID{0000-0003-3955-1910}
	 \and
  Bharat Garhewal \orcidID{0000-0003-4908-2863} \and \\
  Jurriaan Rot \and
  Thorsten Wi{\ss}mann \orcidID{0000-0001-8993-6486}
}
\authorrunning{F. Vaandrager et al.}
\institute{
  Institute for Computing and Information Sciences, \\
  Radboud University, Nijmegen, the Netherlands
}
\maketitle              %
\begin{abstract}
We present $L^{\#}$, a new and simple approach to active automata learning. Instead of focusing on equivalence of observations, like the $L^{\ast}$ algorithm and its descendants, $L^{\#}$ takes a different perspective: it tries to establish \emph{apartness}, a constructive form of inequality. $L^{\#}$ does not require auxiliary notions such as observation tables or discrimination trees, but operates directly on tree-shaped automata. $L^{\#}$  has the same asymptotic query and symbol complexities as the best existing learning algorithms, but we show that adaptive distinguishing sequences can be naturally integrated to boost the performance of $L^{\#}$ in practice. Experiments with a prototype implementation, written in Rust, suggest that $L^{\#}$ is competitive with existing algorithms.
\keywords{$\lsharp$ algorithm \and active automata learning \and Mealy machine \and apartness relation \and adaptive distinguishing sequence \and observation tree \and conformance testing}
\end{abstract}

\section{Introduction}

In 1987, Dana Angluin published a seminal paper \cite{Ang87}, in which she showed that the class of regular languages can be learned efficiently using queries.  
In Angluin's approach of a \emph{minimally adequate teacher (MAT)}, learning is viewed as a game in which a learner has to infer
a deterministic finite automaton (DFA) for an unknown regular language $L$ by asking queries to a teacher.
The learner may pose two types of queries:
	``Is the word $w$ in $L$?'' (\emph{membership queries}), and
	``Is the language recognized by DFA $H$ equal to $L$?'' (\emph{equivalence queries}).
	In case of a \emph{no} answer to an equivalence query, the teacher supplies a counterexample that distinguishes hypothesis $H$ from $L$.
The $L^{\ast}$ algorithm proposed by Angluin \cite{Ang87} is able to learn $L$ by asking a polynomial
number of membership and equivalence queries (polynomial in the size of the corresponding canonical DFA).

Angluin's approach triggered a lot of subsequent research on active automata learning and has
numerous applications in the area of software and hardware analysis, for instance for generating conformance test
suites of software components \cite{HMNSBI2001}, finding bugs in
implementations of security-critical protocols
\cite{FJV16,FiterauEtAl17,FH17}, learning interfaces of classes in
software libraries \cite{HowarISBJ12},
inferring interface protocols of legacy software components \cite{AslamCSB20},
 and checking that a legacy
component and a refactored implementation have the same behavior
\cite{SHV16}. We refer to \cite{Vaa17,HowarS2018} for surveys and
further references.

Since 1987, major improvements of the original $L^{\ast}$ algorithm have been proposed, for instance by \cite{RivestS89,RivestS93,KearnsV94,MalerP95,ShahbazG09,IrfanOG10,MertenHSM11,Petrenko0GHO14,ThesisFalk,Isberner2014,Frohme19}. 
Yet, all these improvements are variations of $L^{\ast}$ in the sense that they approximate the Nerode congruence by means of refinement.
Isberner \cite{Isberner15} shows that these \emph{descendants} of $L^{\ast}$ can be described in a single, general framework.\footnote{Except for the ZQ algorithm of \cite{Petrenko0GHO14}, which was developed independently, and the ADT algorithm of \cite{Frohme19}, that was developed later and uses adaptive distinguishing sequences which are not covered in Isberner's framework.}

Variations of $L^{\ast}$ have also been used as a basis for learning extensions of DFAs such as
Mealy machines \cite{Nie03},
I/O automata \cite{AV10},
non-deterministic automata~\cite{BolligHKL09},
alternating automata~\cite{AngluinEF15},
register automata \cite{CEGAR12,CasselHJS16},
nominal automata \cite{MoermanS0KS17},
symbolic automata \cite{MalerM17,ArgyrosD18},
weighted automata \cite{Bergadano,BalleM15,HeerdtKR020},
Mealy machines with timers \cite{VBE21},
visibly pushdown automata \cite{Isberner15},
and categorical generalisations of automata~\cite{UrbatS20,Heerdt20,BarloccoKR19,ColcombetPS21}.
It is fair to say that $L^{\ast}$-like algorithms completely dominate the research area of active automata learning.

In this paper we present $\lsharp$, a fresh approach to automata learning that differs from $L^{\ast}$
and its descendants. Instead of focusing on equivalence of observations, $\lsharp$ tries to establish \emph{apartness},
a constructive form of inequality \cite{troelstra_schwichtenberg_2000,GJapartness}.  
The notion of apartness is standard in constructive real analysis and goes back to Brouwer, 
with Heyting giving an axiomatic treatment in \cite{Heyting27}. 
This change in perspective has several key consequences, developed and presented in this paper:
\begin{itemize}[topsep=5pt]
\item 
$\lsharp$ does not maintain auxiliary data structures such as observation tables or discrimination trees, but operates directly on the observation tree. This tree is a partial Mealy machine itself, and is very close to an actual hypothesis that can be submitted to the teacher.
As a result, our algorithm is \emph{simple}. 
\item 
The asymptotic query complexity of $\lsharp$ is $\bigO(k n^2 + n \log m)$ and
the asymptotic symbol complexity\footnote{The symbol complexity is the number of input symbols required to learn an automaton. This is a relevant measure for practical learning scenarios, where the total time needed to learn a model is proportional to the number of input symbols. }
 is $\bigO(k m n^2 + n m \log m)$.
Here $k$ is the number of input symbols, $n$ is the number of states, and $m$ is the length of the longest counterexample. 
These are the \emph{same asymptotic complexities} as the best existing ($L^{\ast}$-like) learning algorithms \cite{RivestS89,RivestS93,ThesisFalk,Isberner2014,Isberner15,Frohme19}.
\item 
The use of observation trees as primary data structure makes it easy to \emph{integrate 
concepts from conformance testing to improve the performance} of $\lsharp$.
In particular, adaptive distinguishing sequences \cite{LYa94}, which we can compute directly from the observation tree, turn out to be an effective 
boost in practice, even if their use does not affect asymptotic complexities.
Through $\lsharp$ testing and learning become even more intertwined \cite{BergGJLRS05,AichernigMMTT16}.
\item
Experiments on benchmarks of \cite{NeiderSVK97}, with a \emph{prototype
  implementation} written in Rust, suggest that $\lsharp$ is competitive with
existing, highly optimized algorithms implemented in LearnLib \cite{RSBM09}.
\end{itemize}

\paragraph{Related work.}
Despite the different data structures, $\lsharp$ and $\lstar$~\cite{Ang87} still
have many similarities, since both store all the information gained from all queries
so far. Moreover, both maintain a set of those states that have been learned
with absolute certainty already.
A few other algorithms have been proposed that follow a different approach than $L^{\ast}$.
Meinke \cite{Meinke10,MeinkeNS11} developed a dual approach where, instead of starting with a maximally coarse approximating relation and refining it during learning, one starts with a maximally fine relation and coarsens it by merging equivalence classes.  Although Meinke reports superior performance in the application to learning-based testing, these algorithms have exponential worst-case query complexities.
Using ideas from \cite{RivestS93}, Groz et al.\ \cite{GrozBSO20} use a combination of homing sequences and characterization sets to develop an algorithm for active model learning that does not require the ability to reset the system.
Via an extensive experimental evaluation involving benchmarks from \cite{NeiderSVK97} they show that the performance of their algorithm is competitive with the  $L^{\ast}$ descendant of \cite{ShahbazG09}, but there can be huge differences in the performance of their algorithm for models that are similar in size and structure.
Several authors have explored the use of SAT and SMT solvers for obtaining learning algorithms, see for instance \cite{PetrenkoAGO17,SmetsersFV18}, but these approaches suffer from fundamental scalability problems.
In a recent paper, Soucha \& Bogdanov~\cite{SouchaB20} outline an active learning algorithm which also takes the observation tree as the primary data structure, and use results from conformance testing to speed up learning.
They report that an implementation of their approach outperforms standard learning algorithms like $L^{\ast}$, but they have no explicit apartness relation and associated theoretical framework. It is precisely this theoretical underpinning which allowed us to establish complexity and correctness results, and define efficient procedures for counterexample processing and computing adaptive distinguishing sequences.

In the present paper, we first define partial Mealy machines, observation trees,
and apartness (\autoref{sec:mealy}). Then, we present the full $\lsharp$
algorithm (\autoref{sec:learning}) and benchmark our prototype implementation
(\autoref{sec:bench}).
\ifthenelse{\boolean{showappendix}}{%
  The proofs of all theorems can be found in \appendixref{proofs} and complete
  benchmark results in \appendixref{completeBenchmarks}.
}{%
  The proofs of all theorems and complete
  benchmark results can be found in the appendix of the full
  version~\cite{VGRW22} of this paper.
}

\section{Partial Mealy Machines and Apartness}
\label{sec:mealy}
The $\lsharp$ algorithm learns a hidden (complete) Mealy machine, and its primary
data structure is a \emph{partial} Mealy machine. We first fix notation
for partial maps.

We write $f \colon X \partialto Y$ to denote that $f$ is a partial function from $X$
to $Y$ and write $f(x) \converges$ to mean that $f$ is defined on $x$, that is,
$\exists y \in Y \colon f(x)=y$, and conversely write $f(x)\diverges$ if $f$ is
undefined for $x$.
Often, we identify a partial function $f \colon X \rightharpoonup Y$ with the set $\{ (x,y) \in X \times Y \mid f(x)=y \}$. The composition of partial
maps $f\colon X\partialto Y$ and $g\colon Y\partialto Z$ is denoted by $g\circ
f\colon X\partialto Z$, and we have $(g\circ f)(x)\converges$ iff $f(x)\converges$
and $g(f(x)) \converges$.
There is a partial order on
$X\partialto Y$ defined by $f\sqsubseteq g$
for $f,g\colon X\partialto Y$ if for all $x\in X$, $f(x)\converges$ implies $g(x)\converges$ and
$f(x) = g(x)$.

Throughout this paper, we fix a finite set $I$ of \emph{inputs} and a set $O$ of \emph{outputs}.

\begin{definition}
	\label{Mealy}
	A \emph{Mealy machine} is
	a tuple $\M = (Q, q_0, \delta, \lambda)$, where
	\begin{itemize}[beginpenalty=99,topsep=1pt]
	\item
	$Q$ is a finite set of \emph{states}
  and
	$q_0 \in Q$ is the \emph{initial state},
	\item
	$\fpair{\lambda,\delta}\colon Q \times I \partialto O\times Q$ is a partial map
  whose components are
  an \emph{output function} $\lambda\colon Q \times I \partialto O$
  and a \emph{transition function} $\delta\colon Q \times I \partialto Q$
  (hence, $\delta(q,i)\converges \Leftrightarrow \lambda(q,i)
  \converges$, for $q \in Q$ and $i \in I$).
	\end{itemize}
	We use superscript $\M$ to disambiguate to which Mealy machine we refer, e.g.\  $Q^{\M}$, $q^{\M}_0$, $\delta^{\M}$ and $\lambda^{\M}$.
  We write $q\xrightarrow{i/o}q'$, for $q,q'\in Q$, $i\in I$, $o\in O$ to denote $\lambda(q,i) =
  o$ and $\delta(q,i) = q'$.
	We call $\M$ \emph{complete} if $\delta$ is total, i.e., $\delta(q,i)$ is defined for all states $q$ and inputs $i$.
	We generalize the transition and output functions to input words of length
  $n\in \Nat$ by composing $\fpair{\lambda,\delta}$ $n$ times with itself:
  we define maps $\fpair{\lambda_n, \delta_n} \colon Q \times I^{n} \rightarrow O^n \times Q$
  by $\fpair{\lambda_0, \delta_0} = \id_Q$ and 
  \twnote{}
  \[
    \fpair{\lambda_{n+1},\delta_{n+1}}\colon
    \begin{tikzcd}[column sep=20mm]
    Q\times I^{n+1}
    \arrow[harpoon]{r}[overlay]{\fpair{\lambda_n,\delta_n}\times \id_{I}}
    &O^n \times Q\times I
    \arrow[harpoon]{r}
          [pos=0.5,overlay]{\id_{O^{n}}\times \fpair{\lambda,\delta}}
    &
    O^{n+1} \times Q
    \end{tikzcd}
  \]
  Whenever it is clear from the context, we use
  $\lambda$ and $\delta$ also for words.
\end{definition}

\begin{definition}
  The semantics of a state $q$ is a map $\semantics{q}\colon I^*\partialto O^*$
  defined by $\semantics{q}(\sigma) = \lambda(q, \sigma)$.
States $q, q'$ in possibly different Mealy machines are \emph{equivalent},
written $q \approx q'$, if $\semantics{q} = \semantics{q'}$.
Mealy machines $\M$ and $\N$ are \emph{equivalent} if their respective initial
states are equivalent: $q_0^\M \approx q_0^\N$.
\end{definition}

In our learning setting, an \emph{undefined} value in the partial transition map
represents lack of knowledge. We consider maps between Mealy
machines that preserve existing transitions, but possibly extend the
knowledge of transitions:
\begin{definition}
	\label{def refinement}
  For Mealy machines $\M$ and $\N$, a \emph{functional simulation}
  $f\colon \M\to \N$ is a map $f\colon Q^{\M} \to Q^{\N}$ with
  \begin{center}
  $f(q^{\M}_0)= q^{\N}_0$
    \qquad and\qquad $q\xrightarrow{i/o}q'$ implies $f(q)\xrightarrow{i/o}f(q')$.
  \end{center}
\end{definition}
Intuitively, a functional simulation preserves transitions. In the
literature, a functional simulation is also called \emph{refinement
  mapping} \cite{AL91}.

\begin{lemma}
	\label{la:refinement}
  For a functional simulation $f\colon \M\to \N$ and
  $q\in Q^\M$, we have $\semantics{q}\sqsubseteq \semantics{f(q)}$.
\end{lemma}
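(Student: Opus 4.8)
The plan is to prove, for every $n \in \Nat$ and every input word $\sigma \in I^n$, that if $\lambda_n^{\M}(q,\sigma)\converges$ then $\lambda_n^{\N}(f(q),\sigma)\converges$ and $\lambda_n^{\M}(q,\sigma) = \lambda_n^{\N}(f(q),\sigma)$; this is exactly the statement $\semantics{q}\sqsubseteq\semantics{f(q)}$ once we unfold the definition of the order $\sqsubseteq$ on partial maps $I^* \partialto O^*$ and the definition of $\semantics{\cdot}$. The natural route is induction on the length $n$ of $\sigma$, strengthening the claim slightly so that the induction goes through: I would prove simultaneously that $\delta_n^{\M}(q,\sigma)\converges$ implies $\delta_n^{\N}(f(q),\sigma)\converges$ with $f(\delta_n^{\M}(q,\sigma)) = \delta_n^{\N}(f(q),\sigma)$, together with the equality of outputs above. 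Carrying the state-component equality $f(\delta_n^{\M}(q,\sigma)) = \delta_n^{\N}(f(q),\sigma)$ along is what makes the inductive step close, since the next transition is taken from that state.

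The base case $n=0$ is immediate: $\fpair{\lambda_0,\delta_0} = \id$, so $\delta_0^{\M}(q,\varepsilon) = q$ and $\delta_0^{\N}(f(q),\varepsilon) = f(q)$, and $f(q) = f(q)$; the output words are both empty. For the inductive step, write $\sigma = \tau i$ with $\tau \in I^n$ and $i \in I$. Suppose $\lambda_{n+1}^{\M}(q,\sigma)\converges$. By the recursive definition of $\fpair{\lambda_{n+1},\delta_{n+1}}$ as the composite displayed in \autoref{Mealy}, this means $\fpair{\lambda_n^{\M},\delta_n^{\M}}(q,\tau)\converges$ — say it equals $(v, q')$ with $v \in O^n$, $q' \in Q^{\M}$ — and moreover $\fpair{\lambda^{\M},\delta^{\M}}(q',i)\converges$, say equal to $(o, q'')$, so that $\lambda_{n+1}^{\M}(q,\sigma) = vo$ and $\delta_{n+1}^{\M}(q,\sigma) = q''$. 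Applying the induction hypothesis to $\tau$ gives $\fpair{\lambda_n^{\N},\delta_n^{\N}}(f(q),\tau)\converges$ with value $(v, f(q'))$. Then $q' \xrightarrow{i/o} q''$ in $\M$, so by the defining property of a functional simulation (\autoref{def refinement}) we get $f(q') \xrightarrow{i/o} f(q'')$ in $\N$, i.e.\ $\fpair{\lambda^{\N},\delta^{\N}}(f(q'),i)\converges$ and equals $(o, f(q''))$. Composing as in the displayed diagram, $\fpair{\lambda_{n+1}^{\N},\delta_{n+1}^{\N}}(f(q),\sigma)\converges$ with output $vo = \lambda_{n+1}^{\M}(q,\sigma)$ and state $f(q'') = f(\delta_{n+1}^{\M}(q,\sigma))$, which is precisely what we need.

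I do not expect a serious obstacle here; the only thing requiring a little care is the bookkeeping around the $n$-fold composite $\fpair{\lambda_n,\delta_n}$ — making sure that "definedness" propagates correctly through the composition (using the stated fact that $(g\circ f)(x)\converges$ iff $f(x)\converges$ and $g(f(x))\converges$) and that the output word splits as $v$ concatenated with the single new symbol $o$. Strengthening the induction hypothesis to include the state-component equality $f(\delta_n^{\M}(q,\sigma)) = \delta_n^{\N}(f(q),\sigma)$ is the key move that keeps the argument from getting stuck, since a plain induction on the output equality alone would not have access to the right state in $\N$ from which to take the $(n+1)$-st step.
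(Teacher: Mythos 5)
Your proof is correct. The paper actually states this lemma without proof (it is treated as immediate and only used later, e.g.\ in the proof of Lemma~\ref{la: apartness refinement}), so there is no official argument to compare against; your induction on the length of $\sigma$, with the hypothesis strengthened to carry the state-component equality $f(\delta_n^{\M}(q,\sigma)) = \delta_n^{\N}(f(q),\sigma)$, is exactly the standard argument the authors are implicitly relying on, and your decomposition $\sigma = \tau\, i$ matches the paper's recursive definition of $\fpair{\lambda_{n+1},\delta_{n+1}}$.
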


For a given machine $\M$, an observation tree is simply a Mealy machine itself which
represents the inputs and outputs we have observed so far during learning. 
Using functional simulations, we define it formally as follows.

\begin{definition}[(Observation) Tree]
	A Mealy machine $\Obs$ is 
	a \emph{tree} if for each $q \in Q^{\Obs}$ there is a unique sequence $\sigma \in I^*$ s.t.\ $\delta^{\Obs}(q^{\Obs}_0, \sigma) = q$.
	We write $\mathsf{access}(q)$ for the sequence of inputs leading to $q$.	
 A tree $\Obs$ is an \emph{observation tree} for a Mealy machine $\M$ if
  there is a functional simulation $f\colon \Obs\to \M$.
\end{definition}

Figure~\ref{Fig:Obs} shows an observation tree for the Mealy machine displayed on the right. The functional simulation $f$ is indicated via coloring of the states.
\begin{figure}[h]
	\begin{center}
		\begin{tikzpicture}[->,>=stealth',shorten >=1pt,auto,node distance=1.5cm,main node/.style={circle,draw,font=\sffamily\large\bfseries},
      ]
    \def\yoffset{8mm}
		\node[initial,state,q0class] (0) {\treeNodeLabel{$t_0$}};
		\node[state,q0class] (1) [right of=0,yshift=\yoffset] {\treeNodeLabel{$t_1$}};
		\node[state,q1class] (2) [right of=0,yshift=-\yoffset] {\treeNodeLabel{$t_2$}};
		\node[state,q2class] (3) [right of=2] {\treeNodeLabel{$t_3$}};
		\node[state,q0class] (5) [right of=2,yshift=2*\yoffset] {\treeNodeLabel{$t_5$}};
		\node[state,q1class] (4) [right of=3] {\treeNodeLabel{$t_4$}};

		\node[initial,state,q0class] [right of=5,xshift=2cm](q0) {\treeNodeLabel{$q_0$}};
		\node[state,q1class] (q1) [right of=q0] {\treeNodeLabel{$q_1$}};
		\node[state,q2class] (q2) [below of=q1] {\treeNodeLabel{$q_2$}};
    
    \node[anchor=base] (f) at ($ (5.base) !.45! (q0.base) $) {\begin{tikzcd}
        {} \arrow{r}{f}
        &[8mm] {}
      \end{tikzcd}};
		
		\path[every node/.style={font=\sffamily\scriptsize}]
		(0) edge node[sloped,above] {a/A} (1)
		    edge node[sloped,below] {b/B} (2)
		(2) edge node[above] {b/B} (3)
		    edge node[sloped,above] {a/A} (5)
		(3) edge node {a/C} (4)
		(q0) edge [bend left, text width=0.5cm] node {b/B} (q1)
		edge [loop below] node {a/A} (q0)
		(q1) edge [bend left, text width=0.5cm] node {b/B} (q2)
		edge [bend left, text width=0.5cm] node {a/A} (q0)
		(q2) edge [bend left] node[pos=0.4] {a/C} (q1)
		edge [loop right] node {b/B} (q2);
		\end{tikzpicture}
		\caption{An observation tree (left) for a Mealy machine (right).}
		\label{Fig:Obs}
	\end{center}
\end{figure}
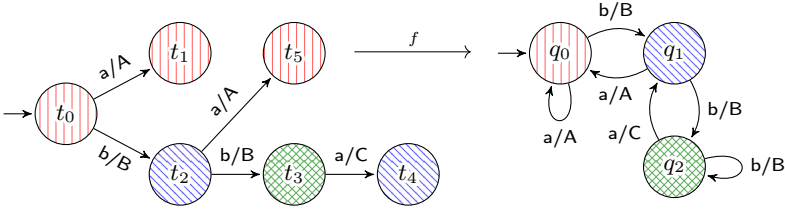

By performing output and equivalence queries, the learner can build an observation tree for the unknown Mealy machine $\M$ of the teacher. However, the learner does not know the functional simulation. Nevertheless, by analysis of the observation tree, the learner may infer that certain states in the tree cannot have the same color, that is, they cannot be mapped to same states of $\M$ by a functional simulation.
In this analysis, the concept of \emph{apartness}, a constructive form of inequality,  plays a crucial role \cite{troelstra_schwichtenberg_2000,GJapartness}.  
A similar concept has previously been studied in the context of automata learning under the name \emph{inequivalence constraints} in work on passive learning of DFAs, see for instance \cite{BiermannF72,FlorencioV14}.

\begin{definition}
	For a Mealy machine $\M$, we say that states $q,p\in Q^{\M}$ are \emph{apart}
  (written $q \apart p$) if there is some $\sigma\in I^*$ such that
  $\semantics{q}(\sigma)\converges$, $\semantics{p}(\sigma)\converges$,
  and $\semantics{q}(\sigma) \neq \semantics{p}(\sigma)$.
  We say that $\sigma$ is the \emph{witness} of $q\apart p$ and write $\sigma
  \vdash q\apart p$.
\end{definition}
Note that the apartness relation $\mathord{\apart}\subseteq Q\times Q$ is irreflexive and symmetric. 
A witness is also called \emph{separating sequence}~\cite{SmetsersMJ16}.
For the observation tree of Figure~\ref{Fig:Obs} we may derive the following apartness pairs and corresponding witnesses:
\[
  a \vdash t_0 \apart t_3 \qquad
  a \vdash t_2 \apart t_3 \qquad
  b \; a \vdash t_0 \apart t_2
\]
The apartness of states $q\apart p$ expresses that there is a conflict in their
semantics, and consequently, apart states can never be identified by a
functional simulation:
\begin{lemma}
	\label{la: apartness refinement}
  For a functional simulation $f\colon \Obs\to \M$,
  \[
    q\apart p\text{ in }\Obs
    \qquad\Longrightarrow
    \qquad
    f(q)\not\approx f(p)\text{ in }\M \qquad\text{for all }q, p\in Q^{\Obs}.
  \]
\end{lemma}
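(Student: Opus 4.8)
The plan is to unfold the definition of apartness in $\Obs$ and transport the witnessing sequence along $f$ using \autoref{la:refinement}. Suppose $q \apart p$ in $\Obs$. By definition, there is a witness $\sigma \in I^*$ with $\semantics{q}(\sigma)\converges$, $\semantics{p}(\sigma)\converges$, and $\semantics{q}(\sigma) \neq \semantics{p}(\sigma)$. The goal is to derive $f(q) \not\approx f(p)$, i.e.\ to produce an input word on which $\semantics{f(q)}$ and $\semantics{f(p)}$ are both defined and disagree; the natural candidate is the very same $\sigma$.

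First I would apply \autoref{la:refinement} to the functional simulation $f$ at the states $q$ and $p$, obtaining $\semantics{q} \sqsubseteq \semantics{f(q)}$ and $\semantics{p} \sqsubseteq \semantics{f(p)}$. By the definition of $\sqsubseteq$ on partial maps $I^* \partialto O^*$, since $\semantics{q}(\sigma)\converges$ we get $\semantics{f(q)}(\sigma)\converges$ with $\semantics{f(q)}(\sigma) = \semantics{q}(\sigma)$, and likewise $\semantics{f(p)}(\sigma)\converges$ with $\semantics{f(p)}(\sigma) = \semantics{p}(\sigma)$. Hence $\semantics{f(q)}(\sigma) = \semantics{q}(\sigma) \neq \semantics{p}(\sigma) = \semantics{f(p)}(\sigma)$, so $\sigma$ witnesses $f(q) \apart f(p)$ in $\M$, and in particular $\semantics{f(q)} \neq \semantics{f(p)}$, i.e.\ $f(q) \not\approx f(p)$.

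There is essentially no obstacle here: the argument is a direct chase through the definitions, and the only external ingredient is \autoref{la:refinement}, which already packages the monotonicity of semantics along functional simulations. The one point to be careful about is that apartness requires the witness to be defined on \emph{both} states, which is exactly what makes the transport along $f$ go through on both sides; the definition of $\sqsubseteq$ guarantees that definedness is preserved (never destroyed) when passing from $\Obs$ to $\M$, so both disambiguated values survive and the inequality is inherited unchanged.
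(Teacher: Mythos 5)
Your proof is correct and follows essentially the same route as the paper: both transport the witness $\sigma$ along $f$ using \autoref{la:refinement} to conclude that $\semantics{f(q)}$ and $\semantics{f(p)}$ are defined on $\sigma$ and disagree there, hence $f(q)\not\approx f(p)$. No gaps.
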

Thus, whenever states are apart in the observation tree $\Obs$, the learner
knows that these are distinct states in the hidden Mealy machine $\M$.

The apartness relation satisfies a weaker version of
\emph{co-transitivity}, stating that if $\sigma\vdash r\apart r'$ and $q$ has
the transitions for $\sigma$, then $q$ must be apart from at least one of $r$
and $r'$, or maybe even both:
\begin{lemma}[Weak co-transitivity]
	\label{la: weak co-transitivity}
  In every Mealy machine $\M$,
  \[
	\sigma \vdash r \apart r' ~\wedge~ \stateCan{q}{\sigma}  ~~\Longrightarrow~~
  r \apart q ~\vee~ r' \apart q
  \qquad\text{for all }r,r',q\in Q^{\M}, \sigma\in I^*.
\]
\end{lemma}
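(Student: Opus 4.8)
The plan is to unfold the definitions of apartness and of the state semantics, and then perform a one-line case distinction on the output word produced by $q$ on $\sigma$; there is essentially no obstacle here beyond bookkeeping.

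First I would note that the hypothesis $\stateCan{q}{\sigma}$, i.e.\ $\delta(q,\sigma)\converges$, is equivalent to $\lambda(q,\sigma)\converges$, and hence to $\semantics{q}(\sigma)\converges$. For a single input this is exactly the biconditional stated in \autoref{Mealy}; for a word $\sigma\in I^*$ it follows by induction on $|\sigma|$, since $\fpair{\lambda_{n+1},\delta_{n+1}}$ is defined as a composition of partial maps whose domains coincide at each stage (the domain of $\fpair{\lambda_n,\delta_n}\times\id_I$ and, on its image, the domain of $\id_{O^n}\times\fpair{\lambda,\delta}$). So we may set $w := \semantics{q}(\sigma)\in O^*$.

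Next, from $\sigma\vdash r\apart r'$ we have $\semantics{r}(\sigma)\converges$, $\semantics{r'}(\sigma)\converges$, and $\semantics{r}(\sigma)\neq\semantics{r'}(\sigma)$. The key (trivial) observation is that the single word $w$ cannot be equal to both $\semantics{r}(\sigma)$ and $\semantics{r'}(\sigma)$, since those two are distinct. Hence at least one of the following holds: (i) $w\neq\semantics{r}(\sigma)$, in which case $\semantics{q}(\sigma)\converges$, $\semantics{r}(\sigma)\converges$ and they differ, so $\sigma\vdash r\apart q$; or (ii) $w\neq\semantics{r'}(\sigma)$, in which case symmetrically $\sigma\vdash r'\apart q$. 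In either case we obtain $r\apart q \vee r'\apart q$, as required.

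The main (and only) point deserving any care is the preliminary step that definedness of $\delta$ and of $\lambda$ agree on input words and not merely on single inputs; everything after that is an immediate consequence of the definition of $\apart$. In particular note that the same witness $\sigma$ that separates $r$ and $r'$ is reused to witness the new apartness, which is exactly the form in which this lemma will be applied in the algorithm.

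\qed
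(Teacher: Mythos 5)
Your proof is correct and is essentially the paper's own argument: the paper derives the contradiction $\lambda(r,\sigma)=\lambda(q,\sigma)=\lambda(r',\sigma)$ from assuming both apartnesses fail, which is just the contrapositive of your direct case split on whether $\semantics{q}(\sigma)$ differs from $\semantics{r}(\sigma)$ or from $\semantics{r'}(\sigma)$. Your preliminary remark that $\delta(q,\sigma)\converges$ iff $\lambda(q,\sigma)\converges$ for words is a point the paper uses silently, so spelling it out is harmless and slightly more careful.
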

We use the weak co-transitivity property during learning. For instance in \autoref{Fig:Obs}, by posing the output query $aba$, consisting of the access sequence for $t_1$ concatenated with the witness $ba$ for $t_0 \apart t_2$, 
co-transitivity ensures that $t_0 \apart t_1$ or $t_2 \apart t_1$. By inspecting the outputs,
the learner may conclude that $t_2 \apart t_1$.

\section{Learning Algorithm}
\label{sec:learning}
The task solved by $\lsharp$ is to find a strategy for the learner in the following game:
\begin{definition}
  In the learning game between a learner and a teacher, the teacher has a
  complete Mealy machine $\M$ and answers the following queries from the
  learner:
  \begin{description}
  \item[$\OutputQuery(\sigma)$:] For $\sigma\in I^*$, the teacher replies with
    the corresponding output sequence $\lambda^{\M}(q_0^{\M},\sigma) \in O^*$.\footnote{In fact, later on we will assume that the teacher responds to slightly more
general output queries to enable the use of \emph{adaptive distinguishing sequences},
see \autoref{secAds}.
}
\item[$\EquivalenceQuery(\Hyp)$:]  For a complete Mealy
    machine $\Hyp$, the teacher replies $\code{yes}$ if $\Hyp\approx \M$ or
    $\code{no}$, providing some $\sigma\in I^*$ with
    $\lambda^{\M}(q_0^{\M},\sigma) \neq \lambda^{\Hyp}(q_0^{\Hyp},\sigma)$.
  \end{description}
\end{definition}

Our $\lsharp$ algorithm operates on an observation tree $\Obs = (Q, q_0, \delta, \lambda)$ for the unknown complete Mealy machine $\M$, where $\Obs$ contains the
results of all output and equivalence queries so far.
An observation tree is similar to the \emph{cache} which is commonly used in
implementations of $L^*$-based learning algorithms to store the answers to
previously asked queries, avoiding duplicates \cite{BalcazarDG97,MargariaRS05}.
But whereas for $L^*$-based learning algorithms the cache is an auxiliary data
structure and only used for efficiency reasons, it is a first-class citizen in $\lsharp$.
\begin{remark}
  The learner has no information about the teacher's hidden Mealy machine. In
  particular, whenever we write $\apart$, we always refer to the apartness
  relation \emph{on the observation tree} $\Obs$.
\end{remark}

The observation tree is structured in a very similar way as Dijkstra's shortest
path algorithm~\cite{Dijkstra59} structures a graph. Recall
that during the execution of Dijkstra's algorithm \textqt{the nodes are
  subdivided into three sets}~\cite{Dijkstra59}:
\begin{enumerate}
\item the nodes $S$ to which a shortest path from the initial node is known. $S$
  initially only contains the initial node and grows from there.
\item the nodes $F$ from which the next node to be added to $S$ will be selected.
\item the remaining nodes.
\end{enumerate}
\begin{figure}[t]
  \begin{subfigure}{.33\textwidth}
    \centering
    \tikzset{
      obstree/.style={
        show feedback=false,
      },
    }
      \def\drawObsTransition#1{
    \begin{scope}[edge/.append style={
        every node/.style={opacity=0},
        }]
      \draw[edge,draw=white,line width=3pt,-] #1;
      \draw[edge,line width=2.5pt,-,draw=white,shorten >= -1pt] #1;
    \end{scope}
    \begin{pgfonlayer}{fg}
    \draw[edge,shorten >= 2pt] #1;
    \end{pgfonlayer}
  }%
  \pgfdeclarelayer{fg}
  \pgfsetlayers{main,fg}
    \begin{tikzpicture}[
      scale=1.0,
      brace scope/.style={},
      frontier transition scope/.style={},
      obstree/.append style={},
      show brace/.store in=\obstreeshowbrace,
      show brace=true,
      show feedback/.store in=\obstreeshowfeedback,
      show feedback=true,
      show frontier/.store in=\obstreeshowfrontier,
      show frontier=true,
      show frontier transition/.store in=\obstreeshowfrontiertrans,
      show frontier transition=true,
      show hyp transition/.store in=\obstreeshowhyptrans,
      show hyp transition=false,
      root name/.store in=\obstreeRootName,
      root name={$q_0^\Obs$},
      input word/.style={
        decorate,
        decoration={coil,aspect=0,amplitude=1pt},
        draw=white,
        double=black,
        double distance=1pt,
        line width=1pt,
      },
      edge/.style={
        ->,
        >=stealth',
        draw=black,
        line cap=round,
        line width=1pt,
        every node/.append style={
          pos=0.4,
          fill opacity=0,
          text opacity=0,
          font=\scriptsize,
          text=black,
          fill=white,
          inner sep=1pt,
          rounded corners=1pt,
          execute at begin node=\(,
          execute at end node=\),
        },
      },
      vertex/.style={
        shape=circle,
        inner sep=.5pt,
        fill=white,
      },
      iolabel/.style={
        fill=white,
        text=\footnotesize,
      },
      obstree,%
      ]
    \coordinate (root) at (0,0);
    \node[outer sep=0mm,anchor=south west]
          at (root) {\smash{\obstreeRootName}};
    \pgfmathsetmacro{\treeAngle}{24} %
    \pgfmathsetmacro{\sigmaAngle}{11} %
    \def\braceDistance{1pt} %
    \def\basisRadius{2cm} %
    \def\frontierRadius{2.5cm} %
    \def\obsTreeHeight{3cm} %
    \def\suffixlen{1cm} %
    \draw[basis,draw=none] (root) -- +(-90-\treeAngle:\basisRadius)
         arc (-90-\treeAngle:-90+\treeAngle:\basisRadius)
         -- cycle;
    \ifthenelse{\boolean{\obstreeshowfrontier}}{
    \draw[frontier,draw=none]
         ($ (root) +  (-90-\treeAngle:\basisRadius)    $)
         arc (-90-\treeAngle:-90+\treeAngle:\basisRadius)
         -- ++(-90+\treeAngle:\frontierRadius-\basisRadius)
         arc (-90+\treeAngle:-90-\treeAngle:\frontierRadius)
         -- cycle;
      }{}

    \draw (root) -- +(-90+\treeAngle:\obsTreeHeight)
          (root) -- +(-90-\treeAngle:\obsTreeHeight);
    \draw[dotted]
          ($ (root) + (-90+\treeAngle:\obsTreeHeight) $) -- +(-90+\treeAngle:5mm)
          ($ (root) + (-90-\treeAngle:\obsTreeHeight) $) -- +(-90-\treeAngle:5mm);
    \ifthenelse{\boolean{\obstreeshowbrace}}{
    \begin{scope}[
      mybrace/.style={
        decorate,
        decoration={
          brace,
          mirror,
          raise=5pt,
          amplitude=4pt,
        },
        every label/.append style={
          pos=0.5,sloped,
          text depth=32pt, %
        },
      },
      ]
    \draw [mybrace]
    (root) -- node[every label] {~~~~basis $S$}
    (-90-\treeAngle:\basisRadius-\braceDistance)
    ;
    \draw [mybrace]
    ($ (root) + (-90-\treeAngle:\basisRadius+\braceDistance) $)
    -- node[every label] {frontier $F$}
    (-90-\treeAngle:\frontierRadius)
    ;
    \end{scope}
    }{}

    \coordinate (q) at (-90+\sigmaAngle: .5*\basisRadius + .5*\frontierRadius);
    \coordinate (p) at ($ (q) + (-90+\sigmaAngle: \suffixlen)$);
    \coordinate (r) at (-90-\sigmaAngle: .5*\basisRadius + .5*\frontierRadius);
    \coordinate (m) at (-90: .5*\basisRadius + .5*\frontierRadius);
    \coordinate (low) at (-90: .5*\basisRadius + .5*\frontierRadius + \suffixlen);
    \coordinate (r plus sigma) at ($ (r) + (-90-\sigmaAngle: \suffixlen)$);

    \begin{pgfonlayer}{fg}
      \node[vertex] (root node) at (root) {$\bullet$};
      \node[vertex] (a node) at ($ (root) !.6! (q) $) {$\bullet$};
      \ifthenelse{\boolean{\obstreeshowfrontier}}{
        \node[vertex] (r node) at (r) {$\bullet$};
        \node[vertex] (m node) at (m) {$\bullet$};
        \node[vertex] (q node) at (q) {$\bullet$};
      }{}
    \end{pgfonlayer}
    \begin{scope}
      \drawObsTransition{(root node) to node{b/o} (a node)}
      \ifthenelse{\boolean{\obstreeshowfrontier}}{
        \drawObsTransition{(root node) to node[left]{a/o} (r node)}
        \drawObsTransition{(a node) to node[left]{a/p} (m node)}
        \drawObsTransition{(a node) to node[right]{b/o} (q node)}
      }{}
    \end{scope}
    \ifthenelse{\boolean{\obstreeshowfrontiertrans}}{
    \begin{scope}
      \begin{pgfonlayer}{fg}
        \node[overlay,vertex] (low  node) at (low) {$\bullet$};
        \node[overlay,vertex] (p node) at (p) {$\bullet$};
        \node[overlay,vertex] (r plus sigma node) at (r plus sigma) {$\bullet$};
      \end{pgfonlayer}
      \foreach \domain/\codomain in {%
        r node/r plus sigma node,
        q node/p node,
        r node/low node} {
        \drawObsTransition{(\domain) to (\codomain)}
      }
    \end{scope}
    }{}
    \ifthenelse{\boolean{\obstreeshowfeedback}}{
    \begin{scope}[
        edge/.append style={
          draw=blue!40!black,
          commutative diagrams/.cd,
          every diagram,
          every arrow,
          mapsto,
          overlay,
        },
      ]
      \drawObsTransition{
        (r node)
        .. controls +(-8mm,-15mm) and +(-8mm,-8mm)
        .. (root node)}
      \drawObsTransition{
        (m node)
        .. controls +(-22mm,-25mm) and +(-8mm,0mm)
        .. (root node)}
      \drawObsTransition{
        (q node)
        .. controls +(8mm,-20mm) and +(8mm,-8mm)
        .. (a node)}
    \end{scope}
    }{}
    \ifthenelse{\boolean{\obstreeshowhyptrans}}{
      \begin{scope}[overlay]
        \drawObsTransition{
          (root node)
          .. controls +(-4mm,-18mm) and +(-8mm,-8mm)
          .. (root node)}
        \drawObsTransition{
          (a node)
          .. controls +(-14mm,-28mm) and +(-18mm,-0mm)
          .. (root node)}
        \drawObsTransition{
          (a node)
          .. controls +(14mm,-28mm) and +(18mm,-10mm)
          .. (root node)}
      \end{scope}
    }{}
  \end{tikzpicture}
    \caption{$\Obs$, $S$, and $F$}
    \label{fig:obsTreeSF}
  \end{subfigure}%
  \begin{subfigure}{.33\textwidth}
    \centering
    \tikzset{
      obstree/.style={
        show brace=false,
        show feedback=true,
        show frontier transition=false,
      },
    }
      \def\drawObsTransition#1{
    \begin{scope}[edge/.append style={
        every node/.style={opacity=0},
        }]
      \draw[edge,draw=white,line width=3pt,-] #1;
      \draw[edge,line width=2.5pt,-,draw=white,shorten >= -1pt] #1;
    \end{scope}
    \begin{pgfonlayer}{fg}
    \draw[edge,shorten >= 2pt] #1;
    \end{pgfonlayer}
  }%
  \pgfdeclarelayer{fg}
  \pgfsetlayers{main,fg}
    \begin{tikzpicture}[
      scale=1.0,
      brace scope/.style={},
      frontier transition scope/.style={},
      obstree/.append style={},
      show brace/.store in=\obstreeshowbrace,
      show brace=true,
      show feedback/.store in=\obstreeshowfeedback,
      show feedback=true,
      show frontier/.store in=\obstreeshowfrontier,
      show frontier=true,
      show frontier transition/.store in=\obstreeshowfrontiertrans,
      show frontier transition=true,
      show hyp transition/.store in=\obstreeshowhyptrans,
      show hyp transition=false,
      root name/.store in=\obstreeRootName,
      root name={$q_0^\Obs$},
      input word/.style={
        decorate,
        decoration={coil,aspect=0,amplitude=1pt},
        draw=white,
        double=black,
        double distance=1pt,
        line width=1pt,
      },
      edge/.style={
        ->,
        >=stealth',
        draw=black,
        line cap=round,
        line width=1pt,
        every node/.append style={
          pos=0.4,
          fill opacity=0,
          text opacity=0,
          font=\scriptsize,
          text=black,
          fill=white,
          inner sep=1pt,
          rounded corners=1pt,
          execute at begin node=\(,
          execute at end node=\),
        },
      },
      vertex/.style={
        shape=circle,
        inner sep=.5pt,
        fill=white,
      },
      iolabel/.style={
        fill=white,
        text=\footnotesize,
      },
      obstree,%
      ]
    \coordinate (root) at (0,0);
    \node[outer sep=0mm,anchor=south west]
          at (root) {\smash{\obstreeRootName}};
    \pgfmathsetmacro{\treeAngle}{24} %
    \pgfmathsetmacro{\sigmaAngle}{11} %
    \def\braceDistance{1pt} %
    \def\basisRadius{2cm} %
    \def\frontierRadius{2.5cm} %
    \def\obsTreeHeight{3cm} %
    \def\suffixlen{1cm} %
    \draw[basis,draw=none] (root) -- +(-90-\treeAngle:\basisRadius)
         arc (-90-\treeAngle:-90+\treeAngle:\basisRadius)
         -- cycle;
    \ifthenelse{\boolean{\obstreeshowfrontier}}{
    \draw[frontier,draw=none]
         ($ (root) +  (-90-\treeAngle:\basisRadius)    $)
         arc (-90-\treeAngle:-90+\treeAngle:\basisRadius)
         -- ++(-90+\treeAngle:\frontierRadius-\basisRadius)
         arc (-90+\treeAngle:-90-\treeAngle:\frontierRadius)
         -- cycle;
      }{}

    \draw (root) -- +(-90+\treeAngle:\obsTreeHeight)
          (root) -- +(-90-\treeAngle:\obsTreeHeight);
    \draw[dotted]
          ($ (root) + (-90+\treeAngle:\obsTreeHeight) $) -- +(-90+\treeAngle:5mm)
          ($ (root) + (-90-\treeAngle:\obsTreeHeight) $) -- +(-90-\treeAngle:5mm);
    \ifthenelse{\boolean{\obstreeshowbrace}}{
    \begin{scope}[
      mybrace/.style={
        decorate,
        decoration={
          brace,
          mirror,
          raise=5pt,
          amplitude=4pt,
        },
        every label/.append style={
          pos=0.5,sloped,
          text depth=32pt, %
        },
      },
      ]
    \draw [mybrace]
    (root) -- node[every label] {~~~~basis $S$}
    (-90-\treeAngle:\basisRadius-\braceDistance)
    ;
    \draw [mybrace]
    ($ (root) + (-90-\treeAngle:\basisRadius+\braceDistance) $)
    -- node[every label] {frontier $F$}
    (-90-\treeAngle:\frontierRadius)
    ;
    \end{scope}
    }{}

    \coordinate (q) at (-90+\sigmaAngle: .5*\basisRadius + .5*\frontierRadius);
    \coordinate (p) at ($ (q) + (-90+\sigmaAngle: \suffixlen)$);
    \coordinate (r) at (-90-\sigmaAngle: .5*\basisRadius + .5*\frontierRadius);
    \coordinate (m) at (-90: .5*\basisRadius + .5*\frontierRadius);
    \coordinate (low) at (-90: .5*\basisRadius + .5*\frontierRadius + \suffixlen);
    \coordinate (r plus sigma) at ($ (r) + (-90-\sigmaAngle: \suffixlen)$);

    \begin{pgfonlayer}{fg}
      \node[vertex] (root node) at (root) {$\bullet$};
      \node[vertex] (a node) at ($ (root) !.6! (q) $) {$\bullet$};
      \ifthenelse{\boolean{\obstreeshowfrontier}}{
        \node[vertex] (r node) at (r) {$\bullet$};
        \node[vertex] (m node) at (m) {$\bullet$};
        \node[vertex] (q node) at (q) {$\bullet$};
      }{}
    \end{pgfonlayer}
    \begin{scope}
      \drawObsTransition{(root node) to node{b/o} (a node)}
      \ifthenelse{\boolean{\obstreeshowfrontier}}{
        \drawObsTransition{(root node) to node[left]{a/o} (r node)}
        \drawObsTransition{(a node) to node[left]{a/p} (m node)}
        \drawObsTransition{(a node) to node[right]{b/o} (q node)}
      }{}
    \end{scope}
    \ifthenelse{\boolean{\obstreeshowfrontiertrans}}{
    \begin{scope}
      \begin{pgfonlayer}{fg}
        \node[overlay,vertex] (low  node) at (low) {$\bullet$};
        \node[overlay,vertex] (p node) at (p) {$\bullet$};
        \node[overlay,vertex] (r plus sigma node) at (r plus sigma) {$\bullet$};
      \end{pgfonlayer}
      \foreach \domain/\codomain in {%
        r node/r plus sigma node,
        q node/p node,
        r node/low node} {
        \drawObsTransition{(\domain) to (\codomain)}
      }
    \end{scope}
    }{}
    \ifthenelse{\boolean{\obstreeshowfeedback}}{
    \begin{scope}[
        edge/.append style={
          draw=blue!40!black,
          commutative diagrams/.cd,
          every diagram,
          every arrow,
          mapsto,
          overlay,
        },
      ]
      \drawObsTransition{
        (r node)
        .. controls +(-8mm,-15mm) and +(-8mm,-8mm)
        .. (root node)}
      \drawObsTransition{
        (m node)
        .. controls +(-22mm,-25mm) and +(-8mm,0mm)
        .. (root node)}
      \drawObsTransition{
        (q node)
        .. controls +(8mm,-20mm) and +(8mm,-8mm)
        .. (a node)}
    \end{scope}
    }{}
    \ifthenelse{\boolean{\obstreeshowhyptrans}}{
      \begin{scope}[overlay]
        \drawObsTransition{
          (root node)
          .. controls +(-4mm,-18mm) and +(-8mm,-8mm)
          .. (root node)}
        \drawObsTransition{
          (a node)
          .. controls +(-14mm,-28mm) and +(-18mm,-0mm)
          .. (root node)}
        \drawObsTransition{
          (a node)
          .. controls +(14mm,-28mm) and +(18mm,-10mm)
          .. (root node)}
      \end{scope}
    }{}
  \end{tikzpicture}
    \caption{A choice $h\colon F\to S$}
    \label{fig:feedback}
  \end{subfigure}
  \begin{subfigure}{.33\textwidth}
    \centering
    \tikzset{
      obstree/.style={
        show brace=false,
        show feedback=false,
        show frontier=false,
        show frontier transition=false,
        show hyp transition=true,
        root name=$q_0^{\Hyp}$,
      },
    }
      \def\drawObsTransition#1{
    \begin{scope}[edge/.append style={
        every node/.style={opacity=0},
        }]
      \draw[edge,draw=white,line width=3pt,-] #1;
      \draw[edge,line width=2.5pt,-,draw=white,shorten >= -1pt] #1;
    \end{scope}
    \begin{pgfonlayer}{fg}
    \draw[edge,shorten >= 2pt] #1;
    \end{pgfonlayer}
  }%
  \pgfdeclarelayer{fg}
  \pgfsetlayers{main,fg}
    \begin{tikzpicture}[
      scale=1.0,
      brace scope/.style={},
      frontier transition scope/.style={},
      obstree/.append style={},
      show brace/.store in=\obstreeshowbrace,
      show brace=true,
      show feedback/.store in=\obstreeshowfeedback,
      show feedback=true,
      show frontier/.store in=\obstreeshowfrontier,
      show frontier=true,
      show frontier transition/.store in=\obstreeshowfrontiertrans,
      show frontier transition=true,
      show hyp transition/.store in=\obstreeshowhyptrans,
      show hyp transition=false,
      root name/.store in=\obstreeRootName,
      root name={$q_0^\Obs$},
      input word/.style={
        decorate,
        decoration={coil,aspect=0,amplitude=1pt},
        draw=white,
        double=black,
        double distance=1pt,
        line width=1pt,
      },
      edge/.style={
        ->,
        >=stealth',
        draw=black,
        line cap=round,
        line width=1pt,
        every node/.append style={
          pos=0.4,
          fill opacity=0,
          text opacity=0,
          font=\scriptsize,
          text=black,
          fill=white,
          inner sep=1pt,
          rounded corners=1pt,
          execute at begin node=\(,
          execute at end node=\),
        },
      },
      vertex/.style={
        shape=circle,
        inner sep=.5pt,
        fill=white,
      },
      iolabel/.style={
        fill=white,
        text=\footnotesize,
      },
      obstree,%
      ]
    \coordinate (root) at (0,0);
    \node[outer sep=0mm,anchor=south west]
          at (root) {\smash{\obstreeRootName}};
    \pgfmathsetmacro{\treeAngle}{24} %
    \pgfmathsetmacro{\sigmaAngle}{11} %
    \def\braceDistance{1pt} %
    \def\basisRadius{2cm} %
    \def\frontierRadius{2.5cm} %
    \def\obsTreeHeight{3cm} %
    \def\suffixlen{1cm} %
    \draw[basis,draw=none] (root) -- +(-90-\treeAngle:\basisRadius)
         arc (-90-\treeAngle:-90+\treeAngle:\basisRadius)
         -- cycle;
    \ifthenelse{\boolean{\obstreeshowfrontier}}{
    \draw[frontier,draw=none]
         ($ (root) +  (-90-\treeAngle:\basisRadius)    $)
         arc (-90-\treeAngle:-90+\treeAngle:\basisRadius)
         -- ++(-90+\treeAngle:\frontierRadius-\basisRadius)
         arc (-90+\treeAngle:-90-\treeAngle:\frontierRadius)
         -- cycle;
      }{}

    \draw (root) -- +(-90+\treeAngle:\obsTreeHeight)
          (root) -- +(-90-\treeAngle:\obsTreeHeight);
    \draw[dotted]
          ($ (root) + (-90+\treeAngle:\obsTreeHeight) $) -- +(-90+\treeAngle:5mm)
          ($ (root) + (-90-\treeAngle:\obsTreeHeight) $) -- +(-90-\treeAngle:5mm);
    \ifthenelse{\boolean{\obstreeshowbrace}}{
    \begin{scope}[
      mybrace/.style={
        decorate,
        decoration={
          brace,
          mirror,
          raise=5pt,
          amplitude=4pt,
        },
        every label/.append style={
          pos=0.5,sloped,
          text depth=32pt, %
        },
      },
      ]
    \draw [mybrace]
    (root) -- node[every label] {~~~~basis $S$}
    (-90-\treeAngle:\basisRadius-\braceDistance)
    ;
    \draw [mybrace]
    ($ (root) + (-90-\treeAngle:\basisRadius+\braceDistance) $)
    -- node[every label] {frontier $F$}
    (-90-\treeAngle:\frontierRadius)
    ;
    \end{scope}
    }{}

    \coordinate (q) at (-90+\sigmaAngle: .5*\basisRadius + .5*\frontierRadius);
    \coordinate (p) at ($ (q) + (-90+\sigmaAngle: \suffixlen)$);
    \coordinate (r) at (-90-\sigmaAngle: .5*\basisRadius + .5*\frontierRadius);
    \coordinate (m) at (-90: .5*\basisRadius + .5*\frontierRadius);
    \coordinate (low) at (-90: .5*\basisRadius + .5*\frontierRadius + \suffixlen);
    \coordinate (r plus sigma) at ($ (r) + (-90-\sigmaAngle: \suffixlen)$);

    \begin{pgfonlayer}{fg}
      \node[vertex] (root node) at (root) {$\bullet$};
      \node[vertex] (a node) at ($ (root) !.6! (q) $) {$\bullet$};
      \ifthenelse{\boolean{\obstreeshowfrontier}}{
        \node[vertex] (r node) at (r) {$\bullet$};
        \node[vertex] (m node) at (m) {$\bullet$};
        \node[vertex] (q node) at (q) {$\bullet$};
      }{}
    \end{pgfonlayer}
    \begin{scope}
      \drawObsTransition{(root node) to node{b/o} (a node)}
      \ifthenelse{\boolean{\obstreeshowfrontier}}{
        \drawObsTransition{(root node) to node[left]{a/o} (r node)}
        \drawObsTransition{(a node) to node[left]{a/p} (m node)}
        \drawObsTransition{(a node) to node[right]{b/o} (q node)}
      }{}
    \end{scope}
    \ifthenelse{\boolean{\obstreeshowfrontiertrans}}{
    \begin{scope}
      \begin{pgfonlayer}{fg}
        \node[overlay,vertex] (low  node) at (low) {$\bullet$};
        \node[overlay,vertex] (p node) at (p) {$\bullet$};
        \node[overlay,vertex] (r plus sigma node) at (r plus sigma) {$\bullet$};
      \end{pgfonlayer}
      \foreach \domain/\codomain in {%
        r node/r plus sigma node,
        q node/p node,
        r node/low node} {
        \drawObsTransition{(\domain) to (\codomain)}
      }
    \end{scope}
    }{}
    \ifthenelse{\boolean{\obstreeshowfeedback}}{
    \begin{scope}[
        edge/.append style={
          draw=blue!40!black,
          commutative diagrams/.cd,
          every diagram,
          every arrow,
          mapsto,
          overlay,
        },
      ]
      \drawObsTransition{
        (r node)
        .. controls +(-8mm,-15mm) and +(-8mm,-8mm)
        .. (root node)}
      \drawObsTransition{
        (m node)
        .. controls +(-22mm,-25mm) and +(-8mm,0mm)
        .. (root node)}
      \drawObsTransition{
        (q node)
        .. controls +(8mm,-20mm) and +(8mm,-8mm)
        .. (a node)}
    \end{scope}
    }{}
    \ifthenelse{\boolean{\obstreeshowhyptrans}}{
      \begin{scope}[overlay]
        \drawObsTransition{
          (root node)
          .. controls +(-4mm,-18mm) and +(-8mm,-8mm)
          .. (root node)}
        \drawObsTransition{
          (a node)
          .. controls +(-14mm,-28mm) and +(-18mm,-0mm)
          .. (root node)}
        \drawObsTransition{
          (a node)
          .. controls +(14mm,-28mm) and +(18mm,-10mm)
          .. (root node)}
      \end{scope}
    }{}
  \end{tikzpicture}
    \caption{Hypothesis $\Hyp$ for $h$}
    \label{fig:hyp}
  \end{subfigure}
    \caption{From the observation tree to the hypothesis ($|I| = 2$)}
\end{figure}
This scheme adapts to the observation tree as follows and is visualized in
\autoref{fig:obsTreeSF}.
\begin{enumerate}
\item The states $S\subseteq Q^\Obs$, which already have been fully identified,
  i.e.~the learner found out that these must represent distinct states in the
  teacher's hidden Mealy machine. We call $S$ the \emph{basis}. Initially, $S :=
  \{q_0^\Obs\}$, and throughout the execution $S$ forms a subtree of $\Obs$ and
  all states in $S$ are pairwise apart: $\forall p,q\in S, p\neq q\colon p\apart
  q$.
\item the \emph{frontier} $F\subseteq Q^\Obs$, from
  which the next node to be added to $S$ is chosen. Throughout the execution, $F$ is the
  set of immediate non-basis successors of basis states:
  $
    F ~:=~ \{ q' \in Q \setminus S \mid \exists q \in S, i \in I : q' = \delta(q, i) \}.
  $
\item the remaining states $Q\setminus (S\cup F)$.
\end{enumerate}

Initially, $\Obs$ consists
of only an initial state $q_0^{\Obs}$ with no transitions. For every
$\OutputQuery(\sigma)$ during the execution, the input $\sigma\in I^*$ and the corresponding response
of type $O^*$ is added automatically to the observation tree $\Obs$, and similarly
every negative response to a $\EquivalenceQuery$ leads to new states and
transitions in the observation tree.
With every extension $\Obs'$ of the
observation tree $\Obs$, the apartness relation can only grow:
whenever $p\apart q$ in $\Obs$, then still $p\apart q$ in $\Obs'$.
Thus, along the learning game, $\Obs$ and $\apart$ grow steadily:
\begin{assumption}
  We implicitly require that via output and equivalence queries, the observation
  tree $\Obs$ and the basis $S$ are gradually extended, with the frontier $F$
  automatically moving along while $S$ grows.
\end{assumption}

\subsection{Hypothesis construction}
At almost any point during the learning game, the learner can come up with a
hypothesis $\Hyp$ based on the knowledge in the observation tree $\Obs$. Since
the basis $S$ contains the states already discovered, the set of states
of such a hypothesis is simply set to $Q^{\Hyp} := S$, and it contains every
transition between basis states (in $\Obs$). The hypothesis must also reflect
the transitions in $\Obs$ that leave the basis $S$, i.e.~the transitions to the
frontier. Those are resolved by finding for every frontier state a base state,
for which the learner conjectures that they are equivalent states in the hidden
Mealy machine. This choice boils down to a map $h\colon F\to S$
($\mapsto$ in \autoref{fig:feedback}). Then, a transition $q\xrightarrow{i/o} p$ in $\Obs$
with $q\in S$, $p\in F$ leads to a transition $q\xrightarrow{i/o} h(p)$ in
$\Hyp$ (\autoref{fig:hyp}). These ideas are formally defined as follows.
\begin{definition}
	Let $\Obs$ be an observation tree with basis $S$ and frontier $F$.
  \begin{enumerate}
  \item A Mealy machine $\Hyp$ \emph{contains the basis}
    if $Q^{\Hyp} = S$ and $\delta^{\Hyp}(q_0^{\Hyp},\access(q)) = q$ for all $q\in S$.
  \item A \emph{hypothesis} is a complete Mealy machine $\Hyp$ containing the basis such that
      $q\xrightarrow{i/o'} p'$ in $\Hyp$ ($q\in S$)
      and $q\xrightarrow{i/o} p$ in $\Obs$ imply
      $o=o'$ and $\neg (p\apart p')$ (in $\Obs$).

  \item A hypothesis $\Hyp$ is \emph{consistent} if there is a functional
  simulation $f\colon \Obs\to \Hyp$.
  \item For a Mealy machine $\Hyp$ containing the basis, an input sequence $\sigma
  \in I^*$ is said to \emph{lead to a conflict} if
  $\delta^{\Obs}(q_0^\Obs,\sigma) \apart \delta^{\Hyp}(q_0^{\Hyp},\sigma)$ (in $\Obs$).
  \end{enumerate}
\end{definition}
Intuitively, the first three notions describe how confident we are in the correctness
of the \textqt{back loops} in $\Hyp$ obtained from a choice $h\colon F\to S$.
Notion 1 does not provide any warranty, notion 2 asserts that $\neg (q\apart
h(q))$ for all $q\in F$, and notion 3 (by definition) means that $\Obs$ is an
observation tree for $\Hyp$, that is, all observations so far are consistent
with the hypothesis $\Hyp$. The learner can verify the consistency of a hypothesis
without querying the teacher (algorithm is in \autoref{secConsistency} below). The
existence and uniqueness of a hypothesis are related to criteria on $\Obs$:

\begin{definition}
	In an observation tree $\Obs$,
	a state in $F$ is 1.~\emph{isolated} if it is apart from all states in $S$
  and 2.~is \emph{identified} if it is apart from all states in $S$ except one. 
  3.~The basis $S$ is \emph{complete} if each state in $S$ has a transition for each input in $I$.
\end{definition}

\begin{lemma} \label{hypExistence}
For an observation tree $\Obs$,
	if $F$ has no isolated states then there exists a hypothesis $\Hyp$ for $\Obs$.
	If $S$ is complete and all states in $F$ are identified then the hypothesis is unique.
\end{lemma}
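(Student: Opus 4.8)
The plan is to route everything through the choice map $h\colon F\to S$ that underlies any hypothesis. For the existence statement, note first that ``$F$ has no isolated states'' says exactly that for every $q\in F$ the set $S_q := \{p\in S \mid \neg(q\apart p)\}$ is nonempty, so we may choose $h(q)\in S_q$. I would then build $\Hyp$ by hand: put $Q^{\Hyp}:=S$, $q_0^{\Hyp}:=q_0^{\Obs}$, and for $q\in S$, $i\in I$: if $q\xrightarrow{i/o}p$ in $\Obs$, set $\lambda^{\Hyp}(q,i):=o$ together with $\delta^{\Hyp}(q,i):=p$ when $p\in S$ and $\delta^{\Hyp}(q,i):=h(p)$ when $p\in F$; if $\delta^{\Obs}(q,i)\diverges$, fill in an arbitrary transition (possible since $O\neq\emptyset$, as $\M$ is a complete Mealy machine). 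Checking the three defining clauses of a hypothesis is then routine: $\Hyp$ is complete and deterministic by construction; it contains the basis because $S$ is a subtree of $\Obs$, so $\access(q)$ for $q\in S$ passes only through basis states and $\delta^{\Hyp}$ agrees with $\delta^{\Obs}$ there; and the compatibility clause holds because whenever $q\xrightarrow{i/o'}p'$ in $\Hyp$ and $q\xrightarrow{i/o}p$ in $\Obs$ with $q\in S$, we have $o'=o$ by construction, and $\neg(p\apart p')$ since either $p\in S$ and $p'=p$ (use irreflexivity) or $p\in F$ and $p'=h(p)\in S_p$.

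For uniqueness, assume $S$ is complete and every state of $F$ is identified, and let $\Hyp$ be any hypothesis; I would argue that all of its data is forced. Applying ``contains the basis'' to $q_0^{\Obs}$ (whose access word is empty) forces $q_0^{\Hyp}=q_0^{\Obs}$. Now fix $q\in S$ and $i\in I$. Since $S$ is complete there is a transition $q\xrightarrow{i/o}p$ in $\Obs$ with $p\in S\cup F$, and since $\Hyp$ is complete there is $q\xrightarrow{i/o'}p'$ in $\Hyp$; the compatibility clause gives $o'=o$ and $\neg(p\apart p')$. If $p\in S$, then $p'=p$ because distinct basis states are pairwise apart and $p'\in S$. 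If $p\in F$, then being identified, $p$ is not apart from a \emph{unique} basis state $s$, so $p'=s$; thus the underlying choice map is forced to be $h(p)=s$. Hence the initial state, transition function and output function of $\Hyp$ are uniquely determined, so the hypothesis is unique (and it exists, because ``identified'' implies ``not isolated'', so the first part applies).

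The delicate point is the uniqueness half: it needs both structural invariants of the tree simultaneously --- pairwise apartness of the basis, to rigidify basis-to-basis transitions, and the uniqueness built into ``identified'', to rigidify basis-to-frontier transitions --- and it genuinely uses completeness of $S$ to exclude the arbitrary ``fill-in'' transitions, which are exactly what breaks uniqueness when $S$ is not complete. The only subtlety on the existence side is the innocuous use of $O\neq\emptyset$ to complete partial basis transitions.
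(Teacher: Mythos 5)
Your proof is correct and follows essentially the same route as the paper's: the paper packages your choice map $h$ as a relation $\Delta\subseteq (S\times I)\times S$ relating $(q,i)$ to any $q'$ with $\delta^{\Obs}(q,i)\diverges$ or $\neg(\delta^{\Obs}(q,i)\apart q')$, observes that absence of isolated states makes $\Delta$ left-total (so a function can be selected inside it), and that completeness of $S$ plus identification makes $\Delta$ itself functional. Your version just spells out the same argument in more detail, including the verification of the hypothesis axioms and the explicit use of pairwise apartness of basis states for the basis-to-basis case.
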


With a growing observation tree $\Obs$, the hidden Mealy machine is found as
soon as the basis is big enough:

\begin{theorem} \label{hypSize}
	Suppose $\Obs$ is an observation tree for a (hidden) Mealy machine $\M$
	such that $S$ is complete, all states in $F$ are identified, and $|S|$ is
  the number of equivalence classes of $\approx^{\M}$. Then $\Hyp\approx \M$ for the unique hypothesis $\Hyp$.
\end{theorem}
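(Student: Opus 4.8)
The plan is to build a functional simulation $g \colon \M \to \Hyp$ for the unique hypothesis $\Hyp$ and then read off $\Hyp \approx \M$ from Lemma~\ref{la:refinement}. Indeed, a functional simulation preserves initial states, so Lemma~\ref{la:refinement} gives $\semantics{q_0^\M} \sqsubseteq \semantics{q_0^\Hyp}$; since $\M$ and $\Hyp$ are both complete, both semantic maps are total, so this inclusion is an equality and hence $\Hyp \approx \M$. The hypothesis exists and is unique by Lemma~\ref{hypExistence}: $S$ is complete, and no frontier state is isolated (each is identified, hence not apart from all basis states, i.e.\ not isolated).

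To define $g$ I would first exploit the cardinality assumption. Fix the functional simulation $f \colon \Obs \to \M$ witnessing that $\Obs$ is an observation tree for $\M$. All basis states are pairwise apart in $\Obs$, so by Lemma~\ref{la: apartness refinement} the map sending $q \in S$ to the $\approx^\M$-class of $f(q)$ is injective; as $|S|$ equals the number of such classes, this map is a bijection. Thus for every $r \in Q^\M$ there is a \emph{unique} basis state $q \in S$ with $r \approx f(q)$, and I set $g(r) := q$. For the root, $f(q_0^\Obs) = q_0^\M$ and $q_0^\Obs \in S$, so $g(q_0^\M) = q_0^\Obs = q_0^\Hyp$.

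The core of the argument is that $g$ preserves transitions. Take $r \xrightarrow{i/o} r'$ in $\M$ and put $q := g(r)$, so $r \approx f(q)$. Since $S$ is complete, $\Obs$ contains a transition $q \xrightarrow{i/o'} p$, and applying $f$ gives $f(q) \xrightarrow{i/o'} f(p)$ in $\M$; comparing with $r \xrightarrow{i/o} r'$ via $\semantics{r} = \semantics{f(q)}$ (evaluated on $i$ and on $i$ followed by an arbitrary word) yields $o = o'$ and $r' \approx f(p)$. The target $p$ lies in $S \cup F$. If $p \in S$: completeness of $\Hyp$ and the hypothesis condition give a transition $q \xrightarrow{i/o} p''$ in $\Hyp$ with $p'' \in S$ and $\neg(p \apart p'')$, and pairwise apartness of $S$ forces $p'' = p$; with $r' \approx f(p)$ and uniqueness this gives $g(r') = p$. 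If $p \in F$: since $p$ is identified there is a unique $p^\ast \in S$ with $\neg(p \apart p^\ast)$, and the hypothesis condition forces the $i$-transition of $q$ in $\Hyp$ to be $q \xrightarrow{i/o} p^\ast$; it remains to see $g(r') = p^\ast$, i.e.\ $r' \approx f(p^\ast)$. Let $q' \in S$ be the unique basis state with $f(p) \approx f(q')$ (here the bijection above is used); were $p \apart q'$, Lemma~\ref{la: apartness refinement} would contradict $f(p) \approx f(q')$, so $\neg(p \apart q')$, whence $q' = p^\ast$ by identification of $p$; therefore $r' \approx f(p) \approx f(p^\ast)$ and $g(r') = p^\ast$. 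In every case $g(r) \xrightarrow{i/o} g(r')$ in $\Hyp$, so $g$ is a functional simulation and the theorem follows.

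I expect the main obstacle to be exactly the case $p \in F$: not being apart from $p^\ast$ in the finite observation tree does not, on its own, guarantee that $f$ maps $p$ and $p^\ast$ to $\approx$-equivalent states of $\M$ (the converse of Lemma~\ref{la: apartness refinement} fails in general). The assumption that $|S|$ equals the number of $\approx^\M$-classes is precisely what bridges this gap, by making $f$ restricted to $S$ surjective up to $\approx$, so that $f(p)$ must be equivalent to some $f(q')$ with $q' \in S$; identification then pins $q'$ down as $p^\ast$. Everything else is routine bookkeeping about how the transitions of the complete hypothesis are determined by the hypothesis condition together with pairwise apartness of $S$.
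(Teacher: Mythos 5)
Your proof is correct, and its mathematical core coincides with the paper's: the case split on whether the $\Obs$-successor of a basis state lies in $S$ or in $F$, and, in the frontier case, the counting argument that injectivity of $q \mapsto [f(q)]_{\approx^{\M}}$ on $S$ (from Lemma~\ref{la: apartness refinement} and pairwise apartness of the basis) together with $|S|$ equalling the number of equivalence classes forces $f(p) \approx^{\M} f(p^\ast)$ for the unique basis state $p^\ast$ not apart from $p$ --- this is exactly what the paper calls the ``Sherlock Holmes principle''. The packaging differs: the paper defines the relation $R = \{(q,r) \in S \times Q^{\M} \mid f(q) \approx^{\M} r\}$, proves it is a bisimulation between $\Hyp$ and $\M$, and concludes via an auxiliary lemma that bisimilarity and equivalence coincide for complete Mealy machines; you instead observe that the cardinality assumption makes the converse of this relation functional, obtaining a functional simulation $g \colon \M \to \Hyp$, and conclude from Lemma~\ref{la:refinement} plus totality of $\semantics{q_0^{\M}}$. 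Your route is slightly more economical in that it reuses a lemma already present in the main text rather than introducing bisimulations, while the paper's relational formulation spares it from having to justify well-definedness of $g$ up front; both uses of the cardinality hypothesis occur in the same place, so the two proofs are interchangeable.
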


The theorem itself is not necessary for the correctness of $\lsharp$, but
guarantees feasibility of learning.

\subsection{Main loop of the algorithm}
The $\lsharp$ algorithm is listed in \autoref{alg:lsharp} in pseudocode.\fvnote{}
The code uses Dijkstra's guarded command notation~\cite{Dij75}, which means that
the following rules are applied non-deterministically until none of them can be
applied anymore:
\begin{algorithm}[t]
	\begin{algorithmic}
		\Procedure{LSharp}{}
		\DoIf{$q$ isolated, for some $q \in F$}
      \Comment{Rule (R1)}
      \State $S \gets S \cup \{ q \}$
		\ElsDoIf{$\delta^{\Obs}(q,i)\uparrow$, for some $q \in S, i \in I$}
      \Comment{Rule (R2)}
      \State $\OutputQuery(\mathsf{access}(q) \; i)$
		\ElsDoIf{$\neg(q\apart r)$, $\neg(q\apart r')$, for some $q \in F$, $r,r'\in
      S$, $r\neq r'$}
      \Comment{Rule (R3)}
      \State $\sigma \gets \text{witness of \(r\apart r'\)}$
      \State $\OutputQuery(\mathsf{access}(q) \; \sigma)$
		\ElsDoIf{$F$ has no isolated states and basis $S$ is complete}
      \Comment{Rule (R4)}
		  \State $\Hyp \gets \BuildHypothesis$ 
		  \State $(b, \sigma) \gets \CheckConsistency(\Hyp)$		
		  \If{$b = \code{yes}$}
        \State $(b, \rho) \gets \EquivalenceQuery(\Hyp)$
        \StateIf{$b = \code{yes}$} \Return $\Hyp$
        \StateElse
          $\sigma \gets$ shortest prefix of $\rho$ such that
          $\delta^{\Hyp}(q_0^\Hyp, \sigma) \apart \delta^{\Obs}(q_0^\Obs, \sigma)$
        (in $\Obs$)
		  \EndIf
		  \State $\ProcessCounterexample(\Hyp, \sigma)$
    \EndDoIf
		\EndProcedure
	\end{algorithmic}
	\caption{Overall $L^{\#}$ algorithm}
	\label{alg:lsharp}
\end{algorithm}
\begin{description}
\item[(R1)]
If $F$ contains an isolated state, then this means that we have discovered a new
state not yet present in $S$, hence we move it from $F$ to $S$.

\item[(R2)] When a state $q \in S$ has no outgoing $i$-transition, for some $i \in I$,
  the output query for $\mathsf{access}(q) \; i$ will add the
  generated $i$ successor, implicitly extending the frontier $F$.

\item[(R3)] When $q\in F$ is a state in the frontier that is not yet identified,
  then there are at least two states in $S$ that are not apart from $q$. In this
  case, the algorithm picks a witness $\sigma\in I^*$ for $r\apart r'$. After the
  $\OutputQuery(\access(q)\,\sigma)$, the observation tree is extended and thus
  $q$ will be apart from at least $r$ or $r'$ by weak co-transitivity
  (\autoref{la: weak co-transitivity}).

\item[(R4)] When $F$ has no isolated states and $S$ is complete,
  \BuildHypothesis picks a hypothesis $\Hyp$ (at least one exists
  \autoref{hypExistence}). If $\Hyp$ is not consistent with observation tree
  $\Obs$ we get a conflict $\sigma$ for free. Otherwise, we pose an equivalence
  query for $\Hyp$. If the hypothesis is correct, $\lsharp$ terminates, and
  otherwise we obtain a counterexample $\rho$. The counterexample decomposes
  into two words $\sigma \eta$, where $\sigma$ leads to a conflict and $\eta$
  witnesses it. The conflict $\sigma$ means that one of the frontier states was
  merged with an apart basis state in $\Hyp$, causing a wrong transition in
  $\Hyp$. Since $\sigma$ can be very long, the task of
  $\ProcessCounterexample(\sigma)$ is to shorten $\sigma$ until we know which
  frontier state caused the conflict. So after $\ProcessCounterexample$, $\Hyp$
  is not a hypothesis for the updated $\Obs$ anymore.
\end{description}

We will show the correctness of $\lsharp$ in a top-down approach discussing the
subroutines later and only assuming now that:
\begin{enumerate}
\item $\BuildHypothesis$ picks one of the possible hypotheses (\autoref{hypExistence})
\item $\CheckConsistency(\Hyp)$ tells if there is a functional
  simulation $\Obs\to \Hyp$, and if not, provides $\sigma\in I^*$ leading to a
  conflict (\autoref{algConsistencyCorrect} below).
\item If $\Hyp$ contains the basis and $\sigma$ leads to a conflict, then
  $\ProcessCounterexample(\Hyp, \sigma)$, extends $\Obs$ such that $\Hyp$ is not
  a hypothesis anymore (\autoref{algProcCountCorrect} below).
\end{enumerate}

Whenever the algorithm terminates, the learner has found the correct model. 
Therefore, correctness amounts to showing termination.
The rough idea is that each rule will let $S$, $F$, or $\apart$ restricted to
$S\times F$ grow, and each of these sets are bounded by the hidden
Mealy machine $\M$. We define the norm $N(\Obs)$ by
\begin{equation}
\frac{ | S | \cdot (|S|+1)}{2} ~+~ | \{ (q, i) \in S \times I \mid \delta^\Obs(q,i) \converges \} | ~+~  | \{ (q, q') \in S \times F \mid q \apart q' \} |
  \label{norm}
\end{equation}
The first summand increases whenever a state is moved from $F$ to $S$ (R1);
it is quadratic in $|S|$ because (R1) reduces the third summand. The second
summand records the progress achieved by extending the frontier (R2). The third
summand counts how much the states in the frontier are identified (R3). Rule (R4)
extends the apartness relation, leading to an increase of the third
summand.

\begin{theorem} %
  \label{thmProgress}
  Every rule application in $L^\#$ increases the norm $N(\Obs)$ in \eqref{norm}.
\end{theorem}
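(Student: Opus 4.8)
The plan is to show that each of the four rules (R1)--(R4) strictly increases the value of $N(\Obs)$ defined in \eqref{norm}. Since the norm is a sum of three non-negative integer terms, it suffices to check, for each rule, that the sum increases — keeping in mind that one term may drop while another grows, so the terms must be weighed against each other. I would treat each rule in turn.

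\textbf{Rule (R1).} Moving an isolated state $q$ from $F$ to $S$ increases $|S|$ by one, so the first summand goes up by $|S|+1$ (where $|S|$ is the size before the move). The second summand is unchanged at the moment of the move, since $q$, being a frontier state freshly added to the basis, contributes the transitions it already has, and in any case this term is non-decreasing. The key point is the third summand: the pairs $(q',q)$ with $q' \in S$ all disappear from $S \times F$ because $q$ leaves $F$ (there are exactly $|S|$ of them, since $q$ was isolated, hence apart from every basis state), and no new pairs are created except possibly for the new successors of $q$ that enter $F$ — but those contribute to the second and third summand non-negatively. So the third summand decreases by at most $|S|$, while the first increases by $|S|+1$; the net change is at least $+1$. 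This is the one place where the quadratic shape $|S|(|S|+1)/2$ of the first summand is essential, and I expect it to be the main obstacle to get exactly right: one has to carefully account for the frontier states that appear when $q$ joins the basis and confirm they cannot make the third summand drop further than the first summand rises.

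\textbf{Rules (R2) and (R3).} For (R2), the output query $\OutputQuery(\access(q)\,i)$ adds the previously-missing $i$-successor of $q \in S$, so the pair $(q,i)$ newly satisfies $\delta^\Obs(q,i)\converges$: the second summand strictly increases. The first summand is unchanged, and the third summand is non-decreasing since $\Obs$ only grows and apartness is monotone under extension (by the remark preceding \autoref{thmProgress}). Hence $N$ strictly increases. For (R3), the output query $\OutputQuery(\access(q)\,\sigma)$ with $\sigma \vdash r \apart r'$ extends $\Obs$; by weak co-transitivity (\autoref{la: weak co-transitivity}), after the extension $q$ is apart from $r$ or from $r'$ (at least one new pair in $S \times F$), whereas before neither $(q,r)$ nor $(q,r')$ was in the third summand's set. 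So the third summand strictly increases; the first is unchanged and the second is non-decreasing.

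\textbf{Rule (R4).} Here \BuildHypothesis, \CheckConsistency, \EquivalenceQuery and \ProcessCounterexample may all extend $\Obs$, but the essential claim is that the branch actually executed produces a conflict $\sigma$ and that $\ProcessCounterexample(\Hyp,\sigma)$ extends $\Obs$ so that $\Hyp$ is no longer a hypothesis (assumption 3 in the list preceding the theorem). By the definition of ``hypothesis'', $\Hyp$ failing to be a hypothesis for the updated $\Obs$ means some frontier state $p$ with transition $q \xrightarrow{i/o} p$ in $\Obs$, $q \in S$, now satisfies $p \apart h^{-1}\text{-target}$ — concretely, the previously non-apart pair $(q', p)$ with $q' = \delta^\Hyp(q_0^\Hyp, \access(q)\,i)$ becomes apart. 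Since $q' \in S$ and $p \in F$, this is a new pair in the third summand's set. The first two summands are non-decreasing throughout. Hence $N$ strictly increases. I would phrase the (R4) case by invoking assumption 3 to get a new apartness pair in $S \times F$, and note that all of $|S|$, the set of defined basis transitions, and the apartness relation are monotone under extensions of $\Obs$, so nothing already counted is lost. Combining the four cases gives the theorem.
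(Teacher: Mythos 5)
Your treatment of rules (R1), (R2) and (R3) matches the paper's proof essentially line for line: the $|S|+1$ gain in the quadratic term against the loss of at most $|S|$ apartness pairs for (R1), the new defined pair $(q,i)$ for (R2), and weak co-transitivity for (R3) are exactly the arguments used, and they are correct.

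The one genuine gap is in your case (R4). By assumption 3, after $\ProcessCounterexample$ the machine $\Hyp$ is no longer a hypothesis for the updated tree $\Obs'$; but by the definition of \emph{hypothesis} this failure can occur in \emph{two} ways: either some $q\in S$ has $q\xrightarrow{i/o}p$ in $\Hyp$ and $q\xrightarrow{i/o'}r$ in $\Obs'$ with $o\neq o'$ (an output mismatch), or with $p\apart r$. You silently assume the second alternative, which is the only one that yields a new pair in the third summand; if the failure were an output mismatch, the norm argument would not go through. The paper closes this by observing that $S$ is complete, so every basis transition already existed in $\Obs$ with an output matching $\Hyp$ (since $\Hyp$ was a hypothesis for $\Obs$ when it was built), and extensions of the observation tree never alter existing transitions — hence $o=o'$ is forced and the apartness alternative must hold. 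A related small omission: you assert that the newly apart state $p$ lies in $F$, but this also requires an argument (the paper derives $r\neq q$ from $q\apart r$ and then uses the fact that $\Hyp$ contains the basis to rule out $r\in S$). Both points are easily repaired, but as written the (R4) case does not yet establish that the third summand strictly increases.
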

The norm $N(\Obs)$ and therefore also the number of rule applications is bounded:
\begin{theorem}
	\label{thmBound}
  If $\Obs$ is an observation tree for $\M$ with $n$ equivalence classes of
  states and $|I| = k$, then
  \(
    N(\Obs) \le
    \frac{1}{2} \cdot n \cdot (n+1) + k n + (n-1)(kn+1)
    \in \bigO(k n^2).
  \)
\end{theorem}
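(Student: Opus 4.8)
The plan is to bound each of the three summands of $N(\Obs)$ in \eqref{norm} separately, using the fact that $\Obs$ is an observation tree for $\M$, so there is a functional simulation $f\colon \Obs\to \M$, and that by \autoref{la: apartness refinement} apart states in $\Obs$ map to $\not\approx$-distinct states in $\M$. Write $n$ for the number of equivalence classes of $\approx^{\M}$ and $k = |I|$.

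First I would bound $|S|$. Since $S$ is part of the invariant with all states in $S$ pairwise apart (stated in the description of the basis), \autoref{la: apartness refinement} gives that $f$ is injective \emph{up to $\approx$} on $S$: distinct $p,q\in S$ satisfy $f(p)\not\approx f(q)$. Hence $|S|$ is at most the number of $\approx^{\M}$-equivalence classes, i.e.\ $|S|\le n$. This immediately bounds the first summand by $\tfrac12 n(n+1)$. For the second summand, $\{(q,i)\in S\times I \mid \delta^\Obs(q,i)\converges\}$ has size at most $|S|\cdot|I| \le kn$.

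The third summand, $|\{(q,q')\in S\times F \mid q\apart q'\}|$, is the one requiring a little more care, and I expect it to be the main (though still mild) obstacle. The naive bound $|S|\cdot|F|$ is too weak because $|F|$ can be as large as $k|S|$, which would give a cubic term. Instead I would argue as follows. The frontier $F$ consists of the immediate non-basis successors of basis states, so $|F|\le |S|\cdot k \le kn$; but more importantly, whenever \emph{every} state of $F$ is identified (apart from all of $S$ except one) and $S$ is complete, the algorithm builds a hypothesis and the relevant quantity is controlled. A cleaner route: for a \emph{fixed} $q'\in F$, the set $\{q\in S\mid q\apart q'\}$ has size at most $|S|-1 \le n-1$, because if $q'$ were apart from all of $S$ it would be isolated and (R1) would have moved it into $S$ — but we are bounding the norm over the whole run, so at the moment we evaluate the bound $q'$ may be isolated; the safe statement is simply $\le |S| \le n$. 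Combined with $|F|\le kn$ this again only gives $kn^2$ up to constants, which is in fact what the stated bound $(n-1)(kn+1)$ achieves — so I would instead pair each frontier state with the basis state it descends from: $F$ is partitioned by the map sending $\delta^\Obs(q,i)\in F$ (with $q\in S$) to $q$, and for the counting it suffices to observe $|F|\le kn$ and, for each $q'\in F$, $|\{q\in S: q\apart q'\}|\le n-1$ once $q'$ is identified, but at most $|S|\le n$ in general. The cleanest bookkeeping matching the paper's constant $(n-1)(kn+1)$ is: there are at most $n-1$ non-basis states that can ever be "charged", each contributing at most $kn+1$; alternatively, rewrite the summand as $\sum_{q'\in F}|\{q\in S: q\apart q'\}|$ and bound $|F|\le n-1$ (frontier states eventually absorbed into $S$, of which there are at most $n$, minus the root already in $S$) times $|\{q\in S:\cdots\}|\le kn+1$ — I would reconcile this with the invariant rather than belabor it here.

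Finally I would add the three bounds: $\tfrac12 n(n+1) + kn + (n-1)(kn+1)$. A short asymptotic estimate — $\tfrac12 n(n+1) = O(n^2)$, $kn = O(kn)$, $(n-1)(kn+1) = O(kn^2)$ — shows the sum lies in $\bigO(kn^2)$, completing the proof. The whole argument is essentially an exercise in applying \autoref{la: apartness refinement} to get $|S|\le n$ and then elementary counting; the only point where one must be slightly attentive is ensuring the third summand does not blow up to cubic, which is handled by noting $|F|$ is linear in $n$ (being bounded by the number of states ever added to $S$, i.e.\ $\le n-1$ beyond the root, times the branching, or more directly by the invariant) rather than treating $S\times F$ as a full product.
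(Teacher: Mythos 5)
Your treatment of the first two summands is exactly the paper's: $|S|\le n$ via \autoref{la: apartness refinement} applied to the pairwise-apart basis, and $|S|\cdot|I|\le kn$ for the defined transitions. The gap is in the third summand. The stated constant $(n-1)(kn+1)$ arises as (number of frontier states) $\times$ (basis states apart from a fixed frontier state): $|F|\le|S\cup F|\le kn+1$, and for each $q'\in F$, $|\{q\in S : q\apart q'\}|\le n-1$. You never establish this second factor. The clean argument is a pigeonhole step: if $q'$ is apart from $j$ basis states, those $j$ states together with $q'$ form $j+1$ pairwise apart states of $\Obs$, which by \autoref{la: apartness refinement} map to $j+1$ pairwise inequivalent states of $\M$, so $j+1\le n$. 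This holds unconditionally --- whether $q'$ is identified, isolated, or neither --- so your hedge ``$\le n-1$ once identified, but $\le n$ in general'' is both unnecessary and insufficient: with the weaker factor $n$ you only get $|F|\cdot n\le kn\cdot n=kn^2$, which strictly exceeds $(n-1)(kn+1)=kn^2-kn+n-1$ for $k\ge 1$, so the precise inequality asserted by the theorem would not follow (only the asymptotic membership in $\bigO(kn^2)$ would).

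Two smaller errors occur in the same passage. First, $|S|\cdot|F|\le n\cdot kn=kn^2$ is quadratic in $n$, not cubic, so the ``naive'' product bound you dismiss is already of the right order --- the only thing it misses is the exact constant. Second, your alternative bookkeeping ``$|F|\le n-1$ times $kn+1$'' assigns the two factors the opposite roles from the correct ones and its first half is false: the frontier can contain up to $kn$ states, many of which are mapped by the functional simulation to the same equivalence class of $\M$, so $|F|$ is not bounded by $n-1$.
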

At any point of execution, either rule (R1), (R2), or (R4) is
applicable, so $\lsharp$ never blocks. As soon as the norm $N(\Obs)$ hits the
bound, the only applicable rule is rule (R4) with the teacher
accepting the hypothesis. Thus, the correct Mealy machine is
learned within $\bigO(k\cdot n^2)$ rule applications.
The complexity in terms of the input parameters is studied in \autoref{secComplexity}.

We now continue defining the subroutines and proving them correct.

\subsection{Consistency checking}
\label{secConsistency}
A hypothesis $\Hyp$ is not necessarily \emph{consistent} with $\Obs$, in the
sense of a functional simulation $\Obs\to\Hyp$.
Via a breadth-first search of the Cartesian product of $\Obs$ and $\Hyp$
(\autoref{alg:consistency}), we may check in time linear in the size of $\Obs$
whether a functional simulation $\Obs\to\Hyp$ exists. In the negative case, we
obtain $\sigma\in I^*$ leading to a conflict without any equivalence or output
query to the teacher needed. Thus, this is also called \textqt{counterexample
milking}~\cite{BalcazarDG97}.
\begin{algorithm}[t]
	\begin{algorithmic}
		\Procedure{CheckConsistency}{$\Hyp$}
		\State $Q \gets \code{new}~queue \subseteq S\times S$ 
		\State $enqueue(Q, (q^{\Obs}_0, q^{\Hyp}_0)))$

		\While{$(q,r) \gets dequeue(Q)$\hspace*{4pt}}
    \StateIf{$q\apart r$}
      {\Return \code{no:} $\mathsf{access}(q)$}

		\ForAll{$q\xrightarrow{i/o} p$ in $\Obs$}
      \State $enqueue(Q, (p, \delta^{\Hyp}(r,i)))$
		\EndFor
		\EndWhile
		\State \Return \code{yes}
		\EndProcedure
	\end{algorithmic}
	\caption{Check if hypothesis $\Hyp$ is consistent with observation tree $\Obs$}
	\label{alg:consistency}
\end{algorithm}
\begin{lemma}
  \label{algConsistencyCorrect}
  \autoref{alg:consistency} terminates and is correct, that is, if~$\Hyp$ is a
  hypothesis for $\Obs$ with a complete basis, then
  $\CheckConsistency(\Hyp)$
  \begin{enumerate}[topsep=1pt]
  \item returns $\code{yes}$, if $\Hyp$ is consistent, 
  \item returns $\code{no}$ and $\rho\in I^*$, if
    $\rho$ leads to a conflict ($\delta^{\Obs}(q_0^\Obs,\rho)\apart\delta^{\Hyp}(q_0^\Hyp,\rho)$ in $\Obs$).
  \end{enumerate}
\end{lemma}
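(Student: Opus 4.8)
The plan is to thread a single invariant through the breadth-first search and read off all the required statements from it. The invariant is: \emph{whenever a pair $(q,r)$ is enqueued, $r = \delta^{\Hyp}(q_0^{\Hyp},\access(q))$} (with $q\in Q^{\Obs}$ and $r\in S = Q^{\Hyp}$). It holds for the initial pair $(q_0^{\Obs},q_0^{\Hyp})$ since $\access(q_0^{\Obs}) = \varepsilon$, and it is preserved: if $(q,r)$ satisfies it and $q\xrightarrow{i/o}p$ in $\Obs$, then $\access(p) = \access(q)\,i$ because $\Obs$ is a tree, so the enqueued pair $(p,\delta^{\Hyp}(r,i))$ again satisfies the invariant; here completeness of $\Hyp$ guarantees $\delta^{\Hyp}(r,i)\converges$. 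Termination is then immediate: $\Obs$ is a finite tree, so each of its states has a unique incoming transition and hence occurs as the first component of at most one enqueued pair, bounding the number of iterations by $|Q^{\Obs}|$.

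For claim~2, if the algorithm returns $\code{no}$ it does so on a dequeued pair $(q,r)$ with $q\apart r$; by the invariant $q = \delta^{\Obs}(q_0^{\Obs},\access(q))$ and $r = \delta^{\Hyp}(q_0^{\Hyp},\access(q))$, so $\rho := \access(q)$ leads to a conflict. For claim~1, assume $\Hyp$ is consistent, witnessed by a functional simulation $f\colon\Obs\to\Hyp$. Since functional simulations commute with transitions, $f(q) = \delta^{\Hyp}(q_0^{\Hyp},\access(q))$ for every $q\in Q^{\Obs}$, i.e.\ $f(q)$ is exactly the second component associated to $q$ by the invariant. By \autoref{la:refinement}, $\semantics{q}\sqsubseteq\semantics{f(q)}$, so $q$ and $f(q)$ agree wherever $\semantics{q}$ is defined and therefore $\neg(q\apart f(q))$. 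Hence the $\code{no}$-branch is never taken and the algorithm returns $\code{yes}$.

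To obtain the converse direction (which (R4) relies on: a $\code{yes}$ answer must certify consistency), I would observe that when $\code{yes}$ is returned the queue has been fully drained, and by the tree structure this means every state of $\Obs$ has been dequeued, each with $\neg(q\apart\delta^{\Hyp}(q_0^{\Hyp},\access(q)))$. Setting $f(q) := \delta^{\Hyp}(q_0^{\Hyp},\access(q))$ yields a functional simulation $\Obs\to\Hyp$: $f(q_0^{\Obs}) = q_0^{\Hyp}$ is immediate, $f(q') = \delta^{\Hyp}(f(q),i)$ holds by construction for every $q\xrightarrow{i/o}q'$ in $\Obs$, and the output matches because $\semantics{q}(i) = o$ is defined while $\neg(q\apart f(q))$ forces $\semantics{f(q)}(i) = \lambda^{\Hyp}(f(q),i) = o$.

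I expect the last step to be the main obstacle: reconstructing the functional simulation from a successful run, and in particular the small output-agreement argument, which hinges on noticing that a length-one witness would already contradict $\neg(q\apart f(q))$. Everything else is routine bookkeeping, powered by the two structural facts that $\Obs$ is a tree (so access sequences are unique, which makes both the invariant and the ``each state processed once'' counting trivial) and that $\Hyp$ is complete (so $\delta^{\Hyp}$ is always defined along the way).
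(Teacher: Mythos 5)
Your proof takes essentially the same route as the paper's: both reduce the problem to checking whether the canonical map $f(q) := \delta^{\Hyp}(q_0^{\Hyp},\access(q))$ is the (necessarily unique) functional simulation $\Obs\to\Hyp$, and your queue invariant is precisely the statement that the algorithm traverses this $f$. The structure is sound, and you in fact spell out two things the paper leaves implicit (that the returned $\rho$ literally satisfies the ``leads to a conflict'' condition, and the yes-implies-consistent direction). Two steps need tightening, both stemming from conflating $\semantics{f(q)}$ computed in $\Obs$ with $\semantics{f(q)}$ computed in $\Hyp$. First, in the output-agreement step you assert that $\neg(q\apart f(q))$ forces $\semantics{f(q)}(i)=\lambda^{\Hyp}(f(q),i)=o$; but apartness is a relation on $\Obs$, so this argument only has any bite if $\lambda^{\Obs}(f(q),i)\converges$ --- which is exactly where the \emph{complete basis} hypothesis of the lemma enters, and which you never invoke (completeness of $\Hyp$ is not the same thing) --- and it then needs $\lambda^{\Obs}(f(q),i)=\lambda^{\Hyp}(f(q),i)$, which is item~2 of the definition of hypothesis rather than a definitional identity. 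Without the complete basis, there is no length-one witness to contradict and the step fails. Second, in the consistent-implies-yes direction, \autoref{la:refinement} gives $\semantics{q}^{\Obs}\sqsubseteq\semantics{f(q)}^{\Hyp}$, whereas $q\apart f(q)$ compares both semantics in $\Obs$; to conclude $\neg(q\apart f(q))$ you must also apply the lemma to the state $f(q)$ itself, using that $f$ is idempotent ($f(f(q))=f(q)$ because $\Hyp$ contains the basis) --- this is exactly the observation the paper's proof makes explicit. Both fixes are one line each, so the proof is repairable as written.
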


\subsection{Counterexample processing}
\fvnote{}
\fvnote{}
The $L^*$ algorithm~\cite{Ang87} performs $\bigO(m)$ queries to analyze a
counterexample of length $m$. So if a teacher returns really long
counterexamples, their analysis will dominate the learning process.  Rivest \&
Schapire \cite{RivestS89,RivestS93} improve counterexample analysis of $L^*$
using binary search, requiring only $\bigO(\log m)$ queries. A similar trick is applied in $\lsharp$.

Suppose $\sigma$ leads to a conflict $q\apart r$ for
$q = \delta^{\Hyp}(q_0^{\Hyp}, \sigma)$ and $r = \delta^{\Obs}(q_0^{\Obs},
\sigma)$. Then, $\ProcessCounterexample(\sigma)$ (\autoref{alg:counterexamplebs}) extends $\Obs$ such
that $\Hyp$ will never be a hypothesis for $\Obs$ again.

\jrnote{}
If $r\in S\cup F$, then the conflict $q\apart r$ is obvious and $\Hyp$ is not a
hypothesis again.
If otherwise $r \not\in S \cup F$,
the binary search will successively reduce the number of transitions of $\sigma$ outside $S\cup F$ by a factor of 2 until we reach the above base case $S\cup F$.
Let $\sigma_1\,\sigma_2 := \sigma$ such that the run of $\sigma_1$ in $\Obs$ ends
halfway between the frontier and $r$. By an additional output query, the binary
search checks whether already $\sigma_1$ leads to a conflict. In the two cases,
we can either avoid $\sigma_1$ or $\sigma_2$, so we reduce the number of
transitions outside $S\cup F$ to half the amount. The precise argument is in:

\begin{algorithm}[t]%
  \begin{minipage}{.65\textwidth}
	\begin{algorithmic}
		\Procedure{\ProcessCounterexample}{$\Hyp$, $\sigma \in I^*$}
		\State $q \gets \delta^{\Hyp}(q_0^{\Hyp}, \sigma)$
		\State $r \gets \delta^{\Obs}(q_0^{\Obs}, \sigma)$
		\If{$r \in S \cup F$}
		\State \Return
		\Else
		\State $\rho \gets$ unique prefix of $\sigma$ with $\delta^{\Obs}(q_0^{\Obs}, \rho) \in  F$
		\State $h \gets \lfloor \frac{| \rho | +| \sigma |}{2} \rfloor$
		\State $\sigma_1 \gets \sigma[1..h]$
		\State $\sigma_2 \gets \sigma[h+1 .. | \sigma |]$
		\State $q' \gets \delta^{\Hyp}(q_0^{\Hyp}, \sigma_1)$
		\State $r' \gets \delta^{\Obs}(q_0^{\Obs}, \sigma_1)$
		\State $\eta \gets$ witness for $q \;\#\; r$
		\State $\OutputQuery(\mathsf{access}(q') \; \sigma_2 \; \eta)$
		\If{$q' \;\#\; r'$}
		\State \ProcessCounterexample($\Hyp$, $\sigma_1$)
		\Else
		\State \ProcessCounterexample($\Hyp$, $\mathsf{access}(q') \; \sigma_2$)
		\EndIf
		\EndIf
		\EndProcedure
	\end{algorithmic}
  \end{minipage}%
  \hfill%
  \begin{minipage}{.33\textwidth}
    \centering
        \begin{tikzpicture}[
      input word/.style={
        decorate,
        decoration={coil,aspect=0,amplitude=1pt},
        draw=white,
        double=black,
        double distance=1pt,
        line width=1pt,
      },
      vertex/.style={
        shape=circle,
        inner sep=.5pt,
        fill=white,
      },
      ]
    \coordinate (root) at (0,0);
    \node[outer sep=0mm,anchor=south west]
          at (root) {$\Obs$};
    \pgfmathsetmacro{\treeAngle}{20} %
    \pgfmathsetmacro{\sigmaAngle}{8} %
    \def\braceDistance{1pt} %
    \def\basisRadius{3.5cm} %
    \def\frontierRadius{4cm} %
    \def\obsTreeHeight{4.2cm} %
    \def\suffixlen{3.8cm} %
    \draw[basis,draw=none] (root) -- +(-90-\treeAngle:\basisRadius)
         arc (-90-\treeAngle:-90+\treeAngle:\basisRadius)
         -- cycle;
    \draw[frontier,draw=none]
         ($ (root) +  (-90-\treeAngle:\basisRadius)    $)
         arc (-90-\treeAngle:-90+\treeAngle:\basisRadius)
         -- ++(-90+\treeAngle:\frontierRadius-\basisRadius)
         arc (-90+\treeAngle:-90-\treeAngle:\frontierRadius)
         -- cycle;

    \draw (root) -- +(-90+\treeAngle:\obsTreeHeight)
          (root) -- +(-90-\treeAngle:\obsTreeHeight);
    \draw[dotted]
          ($ (root) + (-90+\treeAngle:\obsTreeHeight) $) -- +(-90+\treeAngle:7mm)
          ($ (root) + (-90-\treeAngle:\obsTreeHeight) $) -- +(-90-\treeAngle:7mm);

    \begin{scope}[
      mybrace/.style={
        decorate,
        decoration={
          brace,
          mirror,
          raise=5pt,
          amplitude=4pt,
        },
        every label/.append style={
          pos=0.5,sloped,
          text depth=32pt, %
        },
      },
      ]
    \draw [mybrace]
    (root) -- node[every label] {basis}
    (-90-\treeAngle:\basisRadius-\braceDistance)
    ;
    \draw [mybrace]
    ($ (root) + (-90-\treeAngle:\basisRadius+\braceDistance) $)
    -- node[every label] {frontier}
    (-90-\treeAngle:\frontierRadius)
    ;
    \end{scope}

    \coordinate (rho) at (-90+\sigmaAngle: .5*\basisRadius + .5*\frontierRadius);
    \coordinate (p) at ($ (rho) + (-90+\sigmaAngle: \suffixlen)$);
    \coordinate (rp) at ($ (rho) !.5! (p)$);
    \coordinate (qp) at (-90-\sigmaAngle: .88*\basisRadius);
    \coordinate (qp plus sigma) at ($ (qp) + (-90-\sigmaAngle: .6*\suffixlen)$);

    \draw[input word] (root) -- node[pos=0.5] {\treeNodeLabel{$\rho$}} (rho);
    \draw[input word] (rho) -- node[xshift=1mm,pos=0.52] {\treeNodeLabel{$\sigma[|\rho|\mathord{+}1..h]$}} (rp);
    \draw[input word] (rp) -- node[pos=0.52] {\treeNodeLabel{$~\sigma_2$}} (p);
    \draw[input word] (root) --  node[sloped,pos=0.6] {\treeNodeLabel{$\access\,q'$}}(qp);
    \draw[input word] (qp) -- node[pos=0.5] {\treeNodeLabel{$~\sigma_2$}} (qp plus sigma);

    \node[vertex] (rho node) at (rho) {$\bullet$};
    \node[vertex,label=right:{\treeNodeLabel{$r$}}] (p node) at (p) {$\bullet$};
    \node[vertex,label=right:{\treeNodeLabel{$r'$}}] (rp node) at (rp) {$\bullet$};
    \node[vertex] (root node) at (root) {$\bullet$};
    \node[vertex,label={[inner sep=0pt]left:{\treeNodeLabel{$q'$}}}] (qp node) at (qp) {$\bullet$};
    \node[vertex] (qp plus sigma node) at (qp plus sigma) {$\bullet$};
    \end{tikzpicture}
  \end{minipage}
  \caption{Processing $\sigma$
    that leads to a conflict,
    i.e.~$\delta^\Hyp(q_0,\sigma)\apart \delta^\Obs(q_0,\sigma)$
  }
  \label{alg:counterexamplebs}
\end{algorithm}%

\begin{lemma}
  \label{algProcCountCorrect}
  Suppose basis $S$ is complete, $\Hyp$ is a complete Mealy machine containing the
  basis\twnote{}, and $\sigma\in I^*$ leads
  to a conflict.
  Then $\ProcessCounterexample(\Hyp, \sigma)$ terminates, performs at most $\bigO(\log_2|\sigma|)$ output queries and is correct: upon termination, the machine $\Hyp$ is not a hypothesis for $\Obs$ anymore.
\end{lemma}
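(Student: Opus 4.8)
The plan is to prove all three claims — termination, the $\bigO(\log_2|\sigma|)$ query bound, and correctness — by induction on a measure that counts, for the run of $\sigma$ in $\Obs$, how many of its transitions end outside $S\cup F$: let $d(\sigma)$ be the number of positions $j\in\{1,\dots,|\sigma|\}$ with $\delta^{\Obs}(q_0^{\Obs},\sigma[1..j])\notin S\cup F$. Because $\Obs$ is a tree, $S$ a subtree, and $F$ the immediate non-basis successors of $S$, the run of $\sigma$ stays in $S$, then visits exactly one frontier state, and then stays forever in $Q^{\Obs}\setminus(S\cup F)$. Hence $d(\sigma)=0$ exactly when $\delta^{\Obs}(q_0^{\Obs},\sigma)\in S\cup F$, i.e.\ the base case, and otherwise $d(\sigma)=|\sigma|-|\rho|$ for the prefix $\rho$ computed by the algorithm. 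It is crucial that $d$, not $|\cdot|$, is the quantity that gets halved, since the recursive argument $\access(q')\,\sigma_2$ may be longer than $\sigma$.

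I would first record the invariant that every call $\ProcessCounterexample(\Hyp,\sigma)$ receives a $\sigma$ that leads to a conflict (given for the initial call), and that the standing preconditions are preserved: an $\OutputQuery$ only adds states in the remaining region — since $S$ is complete no basis state gains a new outgoing transition — so $S$, $F$, and $\Hyp$ are untouched. Correctness then reduces to the base case $r:=\delta^{\Obs}(q_0^{\Obs},\sigma)\in S\cup F$ with $r\apart q$ for $q:=\delta^{\Hyp}(q_0^{\Hyp},\sigma)$. If $r\in S$ then $\sigma=\access(r)$ (unique access path in the tree), hence $q=\delta^{\Hyp}(q_0^{\Hyp},\access(r))=r$ because $\Hyp$ contains the basis, contradicting irreflexivity of $\apart$; so $r\in F$, say $r=\delta^{\Obs}(s,i)$ with $s\in S,i\in I$, whence $\sigma=\access(s)\,i$ and $q=\delta^{\Hyp}(s,i)$, so $\delta^{\Obs}(s,i)\apart\delta^{\Hyp}(s,i)$ — exactly a violation of the defining condition of a hypothesis. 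As $\apart$ only grows, this persists, so $\Hyp$ is no longer a hypothesis for $\Obs$, and the induction hypothesis carries this property up through all recursive calls.

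The inductive step is where the real work sits, and the hard part is the branch $\neg(q'\apart r')$. If instead $q'\apart r'$, the recursion continues with $\sigma_1=\sigma[1..h]$, which leads to a conflict by construction, and since $|\rho|\le h=\lfloor(|\rho|+|\sigma|)/2\rfloor<|\sigma|$ the prefix reaching $F$ is still $\rho$, so $d(\sigma_1)=h-|\rho|=\lfloor d(\sigma)/2\rfloor$. For $\neg(q'\apart r')$ one must show that $\access(q')\,\sigma_2$ still leads to a conflict. Writing $r''=\delta^{\Obs}(q',\sigma_2)$ and noting $r=\delta^{\Obs}(r',\sigma_2)$, the query $\OutputQuery(\access(q')\,\sigma_2\,\eta)$ makes $\semantics{q'}(\sigma_2\,\eta)$ defined, while $\semantics{r'}(\sigma_2\,\eta)$ is already defined since $\eta\vdash q\apart r$ gives $\semantics{r}(\eta)\converges$; then $\neg(q'\apart r')$ forces $\semantics{q'}(\sigma_2\,\eta)=\semantics{r'}(\sigma_2\,\eta)$, so the length-$|\eta|$ suffixes agree, i.e.\ $\semantics{r''}(\eta)=\semantics{r}(\eta)\neq\semantics{q}(\eta)$; hence $\eta\vdash r''\apart q$ in $\Obs$, and since $\delta^{\Hyp}(q_0^{\Hyp},\access(q')\,\sigma_2)=q$ ($\Hyp$ contains the basis and $\access(q')$ leads to $q'$ in $\Hyp$), $\access(q')\,\sigma_2$ leads to a conflict. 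For the measure, the run of $\access(q')\,\sigma_2$ is in $S$ along $\access(q')$ and takes at least one more step inside $S\cup F$ before leaving it, so $d(\access(q')\,\sigma_2)\le|\sigma_2|-1=|\sigma|-h-1=\lceil d(\sigma)/2\rceil-1\le\lfloor d(\sigma)/2\rfloor$. Thus in both branches the measure drops to at most $\lfloor d(\sigma)/2\rfloor$, which is a strict decrease whenever the recursion continues (i.e.\ $d(\sigma)\ge1$), giving termination; the recursion depth is at most $\lfloor\log_2 d(\sigma)\rfloor+1\le\lfloor\log_2|\sigma|\rfloor+1$, and with one output query per level this yields the $\bigO(\log_2|\sigma|)$ bound.
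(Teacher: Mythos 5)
Your proposal is correct and follows essentially the same route as the paper's proof: the same distance-from-the-frontier measure (your counting definition coincides with the paper's $|\sigma|-|\rho|$), the same halving bounds $d(\sigma_1)=\lfloor d(\sigma)/2\rfloor$ and $d(\access(q')\,\sigma_2)\le\lceil d(\sigma)/2\rceil-1\le\lfloor d(\sigma)/2\rfloor$, and the same suffix-agreement argument in the $\neg(q'\apart r')$ branch showing $\eta\vdash\delta^{\Obs}(q',\sigma_2)\apart q$. The only (harmless) divergence is in the base case, where you split on $r\in S$ versus $r\in F$ while the paper uniformly decomposes $\sigma=\alpha\,i$ and identifies $q'=r'$ via the access sequence.
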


\subsection{Adaptive distinguishing sequences}
\label{secAds}

\jrnote{}
As an optimization in practice, we may extend the rules (R2) and (R3) by incorporating \emph{adaptive distinguishing sequences} ($\Ads$) into the respective output queries.
Adaptive distinguishing sequences, which are commonly used in the area of conformance testing \cite{LYa94}, are input sequences where the choice of an input may depend on the outputs received in response to previous inputs. 
Thus, strictly speaking, an ADS is a decision graph rather than a sequence.
This mild extension of the learning framework reflects the actual black box
behaviour of Mealy machines: for every input in $I$ sent to the hidden Mealy
machine, the learner observes the output $O$ before sending the next input symbol.
Use of adaptive distinguishing sequences may reduce the number of output queries that are required for the identification of frontier states.

As an example, consider the observation tree of Figure~\ref{Fig:Obsbegin}(left). The basis for this tree consists of $5$ states, which are pairwise apart (separating sequences are $a$, $ab$ and $aa$). Frontier states can be identified by the \emph{single} adaptive sequence of Figure~\ref{Fig:Obsbegin}(right). The ADS starts with input $a$. If the response is $2$ we have identified our frontier state as $t_4$. If the response is $0$ then the frontier state is either $t_0$ or $t_2$, and we may identify the state with a subsequent input $a$. Similarly, if the response is $1$ then the frontier state is either $t_1$ or $t_3$, and we may identify the state by a subsequent input $b$. We can therefore identify (or isolate) frontier state $t_5$ with a single (extended) output query that starts with the access sequence for $t_5$ ($bbbba$) followed by the ADS of Figure~\ref{Fig:Obsbegin}(right).
If we used separating sequences, we would need at least 2 output queries.
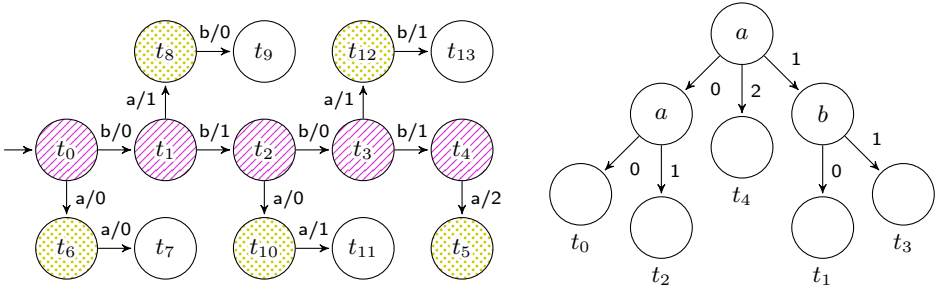
\begin{figure}[h]
		\begin{subfigure}{.60\textwidth}
		\begin{tikzpicture}[->,>=stealth',shorten >=1pt,auto,node distance=1.3cm,main node/.style={circle,draw,font=\sffamily\large\bfseries}]
		\node[initial, state,basis] (0) {\treeNodeLabel{$t_0$}};
		\node[state,basis] (1) [right of=0] {\treeNodeLabel{$t_1$}};
		\node[state,basis] (2) [right of=1] {\treeNodeLabel{$t_2$}};
		\node[state,basis] (3) [right of=2] {\treeNodeLabel{$t_3$}};
		\node[state,basis] (4) [right of=3] {\treeNodeLabel{$t_4$}};
		\node[state,frontier] (5) [below of=4] {\treeNodeLabel{$t_5$}};
		\node[state,frontier] (6) [below of=0] {\treeNodeLabel{$t_6$}};
		\node[state] (7) [right of=6] {\treeNodeLabel{$t_7$}};
		\node[state,frontier] (8) [above of=1] {\treeNodeLabel{$t_8$}};
		\node[state] (9) [right of=8] {\treeNodeLabel{$t_9$}};
		\node[state,frontier] (10) [below of=2] {\treeNodeLabel{$t_{10}$}};
		\node[state] (11) [right of=10] {\treeNodeLabel{$t_{11}$}};
		\node[state,frontier] (12) [above of=3] {\treeNodeLabel{$t_{12}$}};
		\node[state] (13) [right of=12] {\treeNodeLabel{$t_{13}$}};
		
		\path[every node/.style={font=\sffamily\scriptsize}]
		(0) edge node {b/0} (1)
			edge node {a/0} (6)
		(1) edge node {b/1} (2)
			edge node {a/1} (8)	
		(2) edge node {b/0} (3)
			edge node {a/0} (10)
		(3) edge node {b/1} (4)
			edge node {a/1} (12)
		(4) edge node {a/2} (5)
		(6) edge node {a/0} (7)
		(8) edge node {b/0} (9)
		(10) edge node {a/1} (11)
		(12) edge node {b/1} (13);
		\end{tikzpicture}
		\end{subfigure}
		\begin{subfigure}{.40\textwidth}
			\centering
			\begin{tikzpicture}[->,>=stealth',shorten >=1pt,auto,node distance=1.5cm,main node/.style={circle,draw,font=\sffamily\large\bfseries}]
			\node[state] (a) {\treeNodeLabel{$a$}};
			\node[state] (aa) [below left of=a] {\treeNodeLabel{$a$}};
			\node[state] (ab) [below right of=a]{\treeNodeLabel{$b$}};
			\node[state,label={below:$t_4$}] (t4) [below of=a] {\treeNodeLabel{$~$}};
			\node[state,label={below:$t_0$}] (t0) [below left of=aa] {\treeNodeLabel{$~$}};
			\node[state,label={below:$t_2$}] (t2) [below of=aa] {\treeNodeLabel{$~$}};
			\node[state,label={below:$t_1$}] (t1) [below of=ab] {\treeNodeLabel{$~$}};
			\node[state,label={below:$t_3$}] (t3) [below right of=ab] {\treeNodeLabel{$~$}};
			
			\path[every node/.style={font=\sffamily\scriptsize}]
			(a) edge node {0} (aa)
				edge node {1} (ab)
				edge node {2} (t4)
			(aa) edge node {0} (t0)
				 edge node {1} (t2)
			(ab) edge node {0} (t1)
				 edge node {1} (t3);
			\end{tikzpicture}
		\end{subfigure}
	\caption{An observation tree (left) and an ADS for its basis (right)}
	\label{Fig:Obsbegin}
\end{figure}

In the setting of $\lsharp$, we can directly compute an optimal ADS from the current
observation tree. To this end, we recursively define an \emph{expected reward}
function $E$, which sends a set $U\subseteq Q^{\Obs}$ of states to the maximal expected
number of apartness pairs (in the absence of unexpected outputs).
\begin{equation}
  E(U) = \max_{i\in \mathit{inp}(U)} \left(
    \sum_{\substack{o\in O}} \frac{|U\xrightarrow{i/o}| \cdot
      (|U\xrightarrow{i}| - |U\xrightarrow{i/o}|+ E(U\mathord{\xrightarrow{i/o}}))}{|U\xrightarrow{i}|} 
  \right)
  \label{defE}
\end{equation}
where $\mathit{inp}(U) :=  \{i\in I \mid \exists q \in U :  \delta^{\Obs}(q,i)\converges\,\}$,
$U\mathord{\xrightarrow{i}} := \{q\in U \mid \delta^{\Obs}(q,i)\converges\,\}$ and
$U\mathord{\xrightarrow{i/o}} := \{q'\in Q^{\Obs} \mid \exists q\in U : q\xrightarrow{i/o} q'\}$.
We define the maximum over the empty set to be $0$.
Then $\Ads(U)$ is the decision tree constructed as follows:
\begin{itemize}
	\item 
	If $U\mathord{\xrightarrow{i}} = \emptyset$ then $\Ads(U)$ consists of a single node $U$ without a label.
	\item 
		If $U\mathord{\xrightarrow{i}} \neq \emptyset$ then $\Ads(U)$ is constructed by choosing an input $i$ that witnesses the maximum $E(U)$, creating a node $U$ with label $i$, and, for each output $o$ with $U\mathord{\xrightarrow{i/o}} \neq \emptyset$,
		adding an $o$-transition to $ADS(U\mathord{\xrightarrow{i/o}})$.
\end{itemize}
For the observation tree of Figure~\ref{Fig:Obsbegin}(left) we may compute
$E(\{ t_0,\ldots, t_4 \}) = 4$ and obtain the decision tree of Figure~\ref{Fig:Obsbegin}(right) as ADS. Running the ADS from state $t_5$ will create 4 new apartness pairs with basis states (or 5 in case an unexpected output occurs, e.g.\ $a(1) b(2)$).
\takeout{} %

\begin{proposition}
  \label{adscorrect}
  Define $\lsharp_{\Ads}$ by replacing the output queries in $\lsharp$ with
  \begin{description}
  \item[(R2')] $\OutputQuery(\access(q)\;i\;\Ads(S))$ in (R2) and
  \item[(R3')] $\OutputQuery(\access(q)\;\Ads(\{b\in S\mid \neg(b\apart q)\}))$ in (R3).
  \end{description}
  Then, $\lsharp_{\Ads}$ lets the norm $N(\Obs)$ grow for each rule application
  and thus is correct.
\end{proposition}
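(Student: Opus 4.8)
The plan is to exploit that $\lsharp_{\Ads}$ differs from $\lsharp$ only in the \emph{bodies} of rules (R2) and (R3), never in their guards. Every part of the correctness argument for $\lsharp$ that does not inspect those bodies therefore carries over verbatim: the norm $N(\Obs)$ is monotone non-decreasing along any run (output queries only enlarge $\Obs$, hence only enlarge $\apart$, while $S$ and the set of transitions out of $S$ only grow), $N(\Obs)$ is bounded as in \autoref{thmBound} (the bound there depends only on $n$ and $k$), the algorithm still never blocks since the guards are unchanged, and termination still implies that the teacher has accepted a correct hypothesis. So the statement reduces entirely to re-establishing \autoref{thmProgress}, and since (R1) and (R4) are literally unchanged, it suffices to check that each application of (R2') and of (R3') strictly increases $N(\Obs)$.

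Rule (R2') poses $\OutputQuery(\access(q)\,i\,\Ads(S))$, which has $\access(q)\,i$ as a prefix; hence afterwards $\delta^{\Obs}(q,i)\converges$, whereas the guard of (R2) guarantees $\delta^{\Obs}(q,i)\diverges$ beforehand. So the second summand of \eqref{norm} increases by exactly one (the pair $(q,i)$), the first is untouched (an output query does not change $S$), and the third cannot decrease (it is monotone in both $\apart$ and $F$, and the query merely adds one new frontier state and possibly enlarges $\apart$). Thus $N(\Obs)$ strictly increases, exactly as for (R2). For (R3'), let $q\in F$ be the state from the guard and put $U := \{b\in S\mid \neg(b\apart q)\}$; then $U$ contains the two basis states $r,r'$ furnished by the guard, so $|U|\ge 2$. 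Since output queries keep $q$ in the frontier, it is enough to show that after the query $\OutputQuery(\access(q)\,\Ads(U))$ we have $q\apart b$ for some $b\in U$: that adds the pair $(b,q)$ to the third summand of \eqref{norm}, while the other two summands do not shrink.

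To prove this I would trace the run of $\Ads(U)$ from $q$ through the observation tree. Because $U$ is pairwise apart with $|U|\ge 2$, some pair of $U$ has a witness of length $\ge 1$, so $\mathit{inp}(U)\neq\emptyset$ and the root of $\Ads(U)$ carries a label $i$. Feeding $i$ to $q$ yields an output $o^{*}$. If some $b\in U$ has an $i$-transition whose output differs from $o^{*}$, then $i\vdash q\apart b$ and we are done. Otherwise all of $U\mathord{\xrightarrow{i}}$ has output $o^{*}$, and the run descends into $\Ads(U\mathord{\xrightarrow{i/o^{*}}})$ from the state $\delta^{\Obs}(q,i)$, with all outputs matched so far; we repeat. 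Thus along $q$'s path we track a set $W$ of tree nodes: writing $\gamma$ for the inputs read so far, every node of $W$ is $\delta^{\Obs}(u,\gamma)$ for a unique $u\in U$ whose $\gamma$-outputs coincide with those of $q$. As soon as the response of $q$ at some tracked node conflicts with that of a node of $W$ (in particular if $q$ produces an output for which the ADS has no branch), the matched prefix $\gamma$ lets us lift the apartness discovered there back up to $q\apart u$ for the corresponding $u\in U$; since $u\in U$, this pair was not present before, so we are done.

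The main technical obstacle is to guarantee that this descent cannot dead-end uselessly, e.g.\ collapse $W$ to a singleton or to a set of mutually non-apart leaves before any conflict appears. The invariant I would carry is that $W$ always contains an apart pair (so in particular $|W|\ge 2$ and the ADS node for $W$ is labelled, whence the descent can only stop through a conflict). For the initial set $U$ this is immediate; for the successor set $W' := W\mathord{\xrightarrow{i/o^{*}}}$ it is \emph{not} automatic, because the apartness inside $W$ might be witnessed only through inputs other than the input $i$ selected by $\Ads$. I would bridge this with the auxiliary lemma that, for any finite set $W$ of tree nodes, $E(W)\ge 1$ iff $W$ contains an apart pair: the non-trivial direction is an induction on the length of a minimal witness, in which an input splitting $W$ into $\ge 2$ output classes already makes the bracketed expression in \eqref{defE} at least $2 - 2/|W\mathord{\xrightarrow{i}}| \ge 1$, while otherwise a value $E(\cdot)\ge 1$ of a child node propagates upward. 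Combining this with the defining property that $\Ads$ chooses an input \emph{witnessing the maximum} of \eqref{defE}: in a descent step all tracked nodes agree on $i$, so the bracketed expression in \eqref{defE} evaluated at $i$ equals $E(W')$; were $W'$ free of apart pairs that value would be $0$, contradicting $E(W)\ge 1$ from the invariant. Hence $W'$ still contains an apart pair, closing the induction, and the descent terminates because each step moves to strictly deeper nodes of the finite tree $\Obs$. Putting everything together, every application of (R2') and (R3') strictly increases $N(\Obs)$, so \autoref{thmProgress} holds for $\lsharp_{\Ads}$ and its termination-and-correctness argument applies unchanged. I expect this last step — pinning down the descent via the $E$-versus-apartness lemma together with ADS-optimality — to be the crux.
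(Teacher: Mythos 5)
Your proof is correct, and its technical core coincides with the paper's: the identity $E(W)=E(W\mathord{\xrightarrow{i/o}})$ along a non-splitting branch of an optimally chosen ADS, combined with the correspondence between positive expected reward and the existence of apart pairs. The packaging, however, is genuinely different. The paper factors the (R3') case into two static lemmas about the decision tree --- \autoref{adsexists} (an apart pair in $U$ forces $E(U)>0$) and \autoref{adsproductive} (if $E(U)>0$, the label sequence of \emph{every} root-to-leaf path of $\Ads(U)$ separates some pair $r,r'\in U$) --- and then splits on whether the actual run reaches a leaf: early termination immediately makes $q$ apart from all of $U$, while the leaf case follows from one application of weak co-transitivity (\autoref{la: weak co-transitivity}) to the pair supplied by \autoref{adsproductive}. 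You instead simulate the run dynamically, carrying the invariant that the tracked set $W$ still contains an apart pair, and you harvest the new pair $q\apart u$ directly at the first divergence, never invoking co-transitivity; the price is that you also need the converse implication ($E(W)>0$ implies $W$ contains an apart pair), which the paper avoids because \autoref{adsproductive} hands it a concrete witnessing pair. Both routes are sound: the paper's is slightly more economical in reusing already-established lemmas, while yours pinpoints exactly where the new apartness pair arises and yields the sharper quantitative fact $E(W)\ge 1$ whenever $W$ contains an apart pair. Your treatment of (R2'), of the unchanged rules (R1) and (R4), and of the termination/correctness wrap-up via \autoref{thmProgress} and \autoref{thmBound} matches the paper's.
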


\subsection{Complexity}
\label{secComplexity}

Since equivalence queries are costly in practice and since processing of long
counterexamples of length $m$ requires $\bigO(\log m)$ output queries, it makes
sense to postpone equivalence queries as long as possible:

\begin{definition}
  \emph{Strategic $\lsharp$} (resp.~$\lsharp_\Ads$) is the special case of \autoref{alg:lsharp} where
  rule (R4) is only applied if none of the other rules is applicable.

\end{definition} 
Then we obtain the following query complexity for the $\lsharp$ algorithm.

\begin{theorem}
	\label{thmQueryComplexity}
  Strategic $\lsharp$ (resp.~$\lsharp_\Ads$) learns the correct
  Mealy machine within $\bigO(kn^2+n \log m)$ output queries and at most $n-1$
  equivalence queries.
\end{theorem}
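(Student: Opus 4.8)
The idea is to combine the bound on the number of rule applications (\autoref{thmProgress} and \autoref{thmBound}) with the extra leverage that the \emph{strategic} scheduling gives over rule (R4). By \autoref{thmProgress} every rule application strictly increases the norm $N(\Obs)$, and by \autoref{thmBound} this norm never exceeds $\bigO(k n^2)$; hence the algorithm performs $\bigO(k n^2)$ rule applications in total. Rule (R1) issues no query, and rules (R2) and (R3) issue exactly one output query each, so (R1)--(R3) together account for $\bigO(k n^2)$ output queries. For $\lsharp_{\Ads}$ the same holds: by \autoref{adscorrect} the norm still increases at every rule application, and (R2$'$) and (R3$'$) also issue a single output query. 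It therefore remains to bound the cost of the (R4) applications: the equivalence queries themselves, and the output queries consumed inside $\ProcessCounterexample$.

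\textbf{Equivalence queries.} The crucial point is that, in strategic $\lsharp$, rule (R4) is applied only when none of (R1)--(R3) is applicable, which by the guards of those rules means that $S$ is complete and every frontier state is \emph{identified} --- apart from all basis states but exactly one. Consider an application of (R4) that does not terminate the algorithm. Then either $\CheckConsistency(\Hyp)$ returned \code{no}, or the equivalence query returned \code{no}; in either case we obtain $\sigma$ leading to a conflict and call $\ProcessCounterexample(\Hyp,\sigma)$, which by \autoref{algProcCountCorrect} extends $\Obs$ so that $\Hyp$ is no longer a hypothesis. Since $\Hyp$ was built from a map $h\colon F\to S$ with $\neg(q\apart h(q))$ for every frontier state $q$, the only way $\Hyp$ can fail to be a hypothesis for the extended $\Obs$ is that some frontier state $q$ has become apart from its formerly unique non-apart basis state $h(q)$; as $q$ was already apart from all other basis states, $q$ is now \emph{isolated}. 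Hence immediately after this (R4), rule (R1) fires and moves a state into $S$ before the next (R4) can be applied, so $|S|$ strictly increases between consecutive (R4) applications. Since $S$ starts as a singleton, always satisfies $|S|\le n$, and the terminating (R4) --- which by \autoref{algConsistencyCorrect} and \autoref{la: apartness refinement} only succeeds once all $n$ states are in the basis --- occurs at the very end, there are at most $n-1$ equivalence queries. The same reasoning applies verbatim to $\lsharp_{\Ads}$.

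\textbf{Output queries inside counterexample processing.} It remains to count the output queries spent inside the at most $n-1$ calls to $\ProcessCounterexample$. By \autoref{algProcCountCorrect} a single call on a conflict word $\sigma$ performs $\bigO(\log_2|\sigma|)$ output queries. If $\sigma$ is (a prefix of) a counterexample returned by the teacher then $|\sigma|\le m$, contributing $\bigO(\log m)$ per call and $\bigO(n\log m)$ in total. If instead $\sigma$ stems from $\CheckConsistency$, then $|\sigma|$ is at most the depth of $\Obs$; since every output query extends a branch of $\Obs$ by at most an access word, a witness or ADS, and a counterexample suffix, this depth is $\bigO(m+k n)$, so these calls contribute $\bigO(n\log(m+k n))\subseteq\bigO(k n^2+n\log m)$, which is absorbed. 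Adding the $\bigO(k n^2)$ output queries of (R1)--(R3) to the $\bigO(n\log m)$ spent on counterexample processing yields the claimed bound $\bigO(k n^2+n\log m)$ on output queries together with at most $n-1$ equivalence queries, uniformly for $\lsharp$ and $\lsharp_{\Ads}$.

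\textbf{Main obstacle.} The step I expect to be hardest is the middle argument: converting ``$\Hyp$ is no longer a hypothesis'' into ``$|S|$ must grow before the next (R4)''. This relies on strategic scheduling to guarantee the identification invariant precisely at the moment (R4) is entered, and on a careful reading of \autoref{algProcCountCorrect} to ensure that the extension of $\Obs$ genuinely produces the apartness $q\apart h(q)$ for some frontier state $q$ rather than an unrelated change; one also needs some bookkeeping to pin down the exact constant in the equivalence-query bound (distinguishing the terminating (R4) from the non-terminating ones, and checking the behaviour when $|S|=1$). The remaining accounting --- in particular bounding the lengths of conflicts returned by $\CheckConsistency$ so that they disappear into the $\bigO(k n^2)$ term --- is routine once the tree-depth estimate is in place.
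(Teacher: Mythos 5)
Your proof is correct and follows essentially the same route as the paper's: bound the total number of rule applications by $\bigO(kn^2)$ via \autoref{thmProgress} and \autoref{thmBound}, observe that each non-terminating application of (R4) isolates an identified frontier state and hence grows the basis (giving at most $n-1$ equivalence queries), and charge $\bigO(\log m)$ output queries to each counterexample via \autoref{algProcCountCorrect}. Your additional care about the length of conflict words returned by \CheckConsistency{} (bounding them by the depth of $\Obs$ and absorbing the resulting $\bigO(n\log(m+kn))$ term) is a detail the paper's proof passes over silently, but it does not change the argument.
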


The query complexity of $\lsharp$ equals the best known query complexity for active learning algorithms, as achieved by
Rivest \& Schapire's algorithm \cite{RivestS89,RivestS93}, the observation pack
algorithm \cite{ThesisFalk}, the TTT algorithm
\cite{Isberner2014,Isberner15}, and the ADT algorithm \cite{Frohme19}.

In a black box learning setting in practice, answering an output query for
$\sigma \in I^*$ grows linearly with the length $\sigma$. Therefore, the
(asymptotic) total number of input symbols sent by the learner is also a metric
for comparing learning algorithms:
\begin{theorem}
	\label{thmSymbolComplexity}
	Let $n \in \bigO(m)$. Then
	the strategic $\lsharp$ algorithm learns the correct Mealy machine with
	$\bigO(kmn^2+nm \log m)$ input symbols.
\end{theorem}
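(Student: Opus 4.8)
The plan is to bound the total number of input symbols by summing, over all queries posed during a run, the length of each query, and then to feed in the query-count bounds of \autoref{thmQueryComplexity}. The learner sends symbols only when it poses an $\OutputQuery$ (cost = length of the queried word) or an $\EquivalenceQuery$ (whose cost we account for by the length $\le m$ of the returned counterexample, which is subsequently run into $\Obs$). Since strategic $\lsharp$ poses $\bigO(kn^2 + n\log m)$ output queries and at most $n-1$ equivalence queries, it suffices to show that \emph{every output query has length $\bigO(m)$}: then the symbol count is at most $\bigO(kn^2 + n\log m)\cdot\bigO(m) + (n-1)\cdot m = \bigO(kmn^2 + nm\log m)$, as claimed. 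The assumption $n\in\bigO(m)$ enters exactly where we absorb access-sequence lengths into $m$.

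For the length bound, split each output query into its components. In (R2) we query $\access(q)\,i$ with $q\in S$; since $S$ is a subtree of $\Obs$ of size $\le n$, $|\access(q)|\le n-1$, so the query has length $\le n\in\bigO(m)$. In (R3) we query $\access(q)\,\sigma$ with $q\in F$ and $\sigma$ a witness of $r\apart r'$ for basis states $r,r'$, so the query has length $\bigO(n+|\sigma|)$. In $\ProcessCounterexample$ we query $\access(q')\,\sigma_2\,\eta$ with $q'\in S$, $\sigma_2$ a suffix of the current $\sigma$, and $\eta$ a witness, of length $\bigO(n+|\sigma|+|\eta|)$. Hence the whole estimate reduces to one invariant: \textbf{throughout the run, every witness the algorithm stores or uses, and every argument $\sigma$ passed to $\ProcessCounterexample$, has length $\bigO(m)$.} I would prove this by induction on the number of rule applications. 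No witness exists initially; a new one is only ever produced (i) as a suffix of the word queried in (R2)/(R3)/$\ProcessCounterexample$ — and in (R3) the witness of the freshly created apartness pair is literally the supplied $\sigma$, so its length does not grow — or (ii) from a counterexample $\rho$ of length $\le m$. For $\ProcessCounterexample$ one reuses the analysis already behind \autoref{algProcCountCorrect}: the recursion halves the number of transitions of $\sigma$ outside $S\cup F$ while each recursive argument has length $\le n+|\sigma|/2$, so these arguments stay within $\bigO(n+m)=\bigO(m)$, and the witnesses $\eta$ handed down pick up only the geometrically shrinking residuals and hence also stay $\bigO(m)$. This keeps the observation tree within depth $\bigO(m)$, which in turn keeps all newly computed (shortest) witnesses $\bigO(m)$, closing the induction.

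With the $\bigO(m)$ bound on query length in hand, the theorem follows by the arithmetic in the first paragraph: the $\bigO(kn^2)$ output queries coming from (R1)--(R3) contribute $\bigO(kmn^2)$ symbols, the $\bigO(n\log m)$ output queries generated by counterexample processing contribute $\bigO(nm\log m)$ symbols, and the $\le n-1$ equivalence queries contribute $\bigO(nm)$ symbols, for a total in $\bigO(kmn^2+nm\log m)$. The delicate part — the one I expect to take real work — is the witness-length invariant of the second paragraph: verifying that the interaction between (R3), counterexample milking via \CheckConsistency, and the binary-search recursion of $\ProcessCounterexample$ never lets a stored separating sequence (equivalently, the depth of the observation tree) grow beyond $\bigO(m)$. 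Everything else is bookkeeping on top of \autoref{thmQueryComplexity}.
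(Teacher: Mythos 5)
Your proposal is correct and follows essentially the same route as the paper's proof: bound every output query's length by $\bigO(m)$ via the invariant that access sequences have length at most $n\in\bigO(m)$ and all witnesses in play have length $\bigO(m)$, then multiply by the query counts from \autoref{thmQueryComplexity}. The paper merely asserts the witness-length invariant ("at any point during a run, the length of a minimal witness... will be at most $m$"), whereas you sketch the induction that justifies it — a welcome elaboration, not a divergence.
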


This matches the asymptotic symbol complexity of the best known
active learning algorithms. 
Although \ProcessCounterexample\ reduces the length of the sequence leading to the
conflict, the witness of the conflict remains of size $\Theta(m)$ in the worst case.
This means that we need $\bigO(m \log m)$ symbols to process a single counterexample and  $\bigO(n m \log m)$ symbols to process all counterexamples.

\section{Experimental Evaluation}
\label{sec:bench}
In the previous sections, we have introduced and discussed the $\lsharp$ algorithm.
We now present a short experimental evaluation of the algorithm to
demonstrate its performance when compared to other state-of-art algorithms.
We run two versions of $\lsharp$: the base version (\autoref{alg:lsharp}),
and the ADS optimised variant ($\lsharp_{\Ads}$), and compare these with the (highly optimized) LearnLib\footnote{\url{https://learnlib.de/}} implementations of
TTT, ADT,\footnote{The ADT algorithm makes use of some heuristics to guide the learning process,
  we have selected the ``Best-Effort'' settings.}
and `RS', by which we refer to $L^{*}$ with Rivest-Schapire counterexample processing~\cite{RivestS89,RivestS93}.
All source-code and data is available online.\footnote{
  \url{https://gitlab.science.ru.nl/sws/lsharp} and
  \doiurl{10.5281/zenodo.5735533}}

\paragraph{Implementing \EquivalenceQuery:}
We implement equivalence queries using conformance testing,
which also makes output queries.
We have fixed the testing tool to
Hybrid-ADS\footnote{\url{https://github.com/Jaxan/hybrid-ads}} \cite{SmeenkMVJ15}.
Hybrid-ADS has multiple
configuration options, and we have set the state cover mode to ``buggy'', the
number of extra states to check for to 10, the number of infix symbols to 10,
and the mode of execution to ``random'', generating an infinite test-suite.
Note that with these settings, the equivalence queries are not exact in general but approximated via random testing.

\paragraph{Data-set and metrics:}
We use a subset of the models available from the AutomataWiki (see \cite{NeiderSVK97}):
we learn models for the SSH, TCP, and TLS protocols, alongside the BankCard models.
The largest model in this subset has 66 states and 13 input symbols.
We record the number of output queries and input symbols used during learning and
testing, alongside the number of equivalence queries required to learn each model.
An output query is a sequence $\sigma\in I^*$ of $|\sigma|$ input symbols and one \emph{reset} symbol.
A reset symbol returns the \emph{system under test} (SUT) to its initial state.
So \emph{resets} denotes the number of output queries and
\emph{inputs} denotes the total number of symbols sent to the SUT.
We believe that these metrics accurately portray the effort required to learn a model.

\paragraph{Experiment Set-up:}
All experiments were run on a Ryzen 3700X processor with 32GB of memory, running
Linux.
Each experiment refers to completely learning a model of the SUT.
Due to the effects of randomization in the equivalence oracle, we repeat each
experiment 100 times.

\paragraph{Results and Discussion}
\begin{figure}[h]
   \begin{subfigure}{.5\textwidth}
    \includegraphics[draft=false,width=\textwidth]{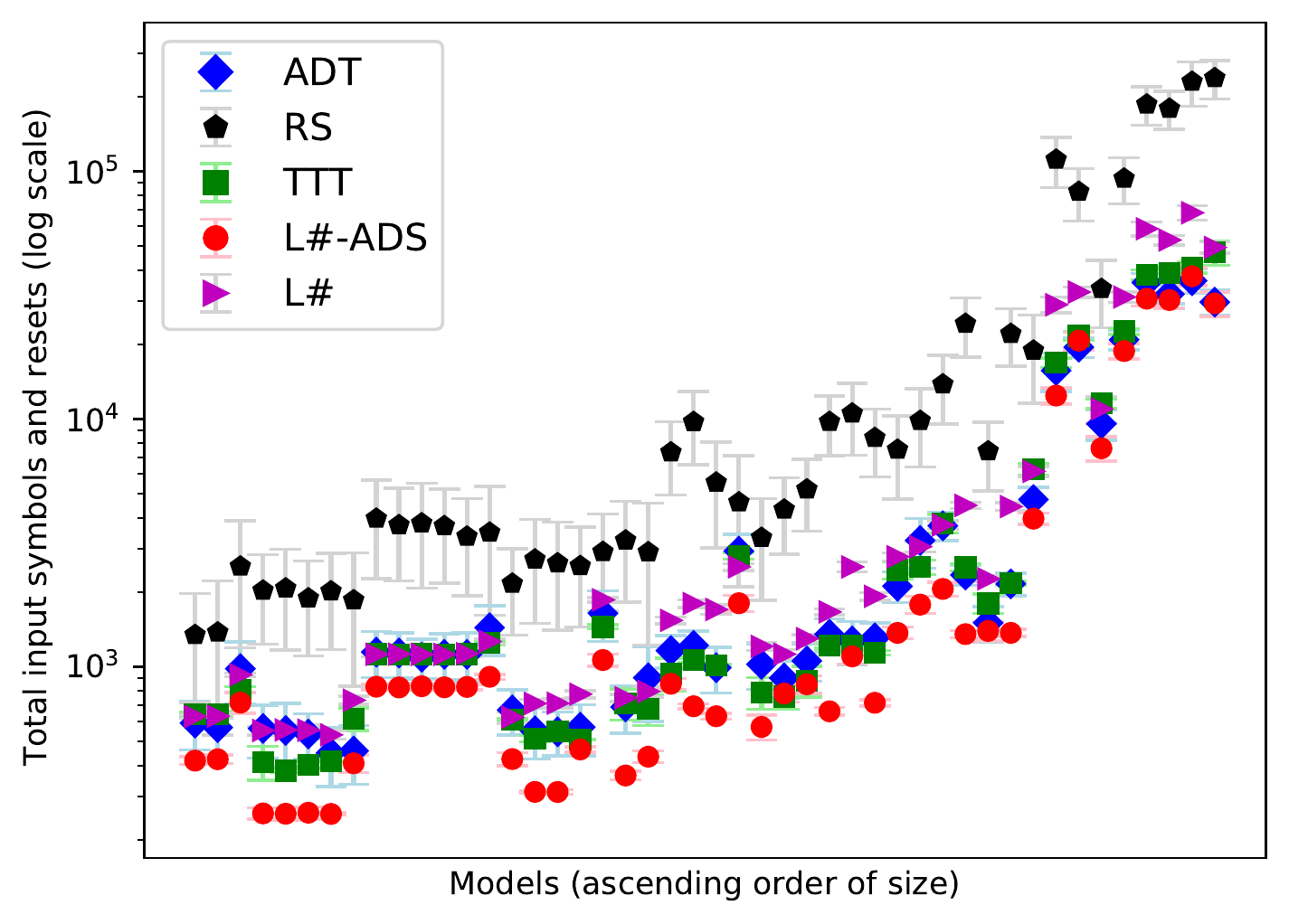}%
    \caption{Symbols used during learning phase}
    \label{fig:learning_metric_plot}
   \end{subfigure}%
  \begin{subfigure}{.5\textwidth}%
    \includegraphics[draft=false,width=\textwidth]{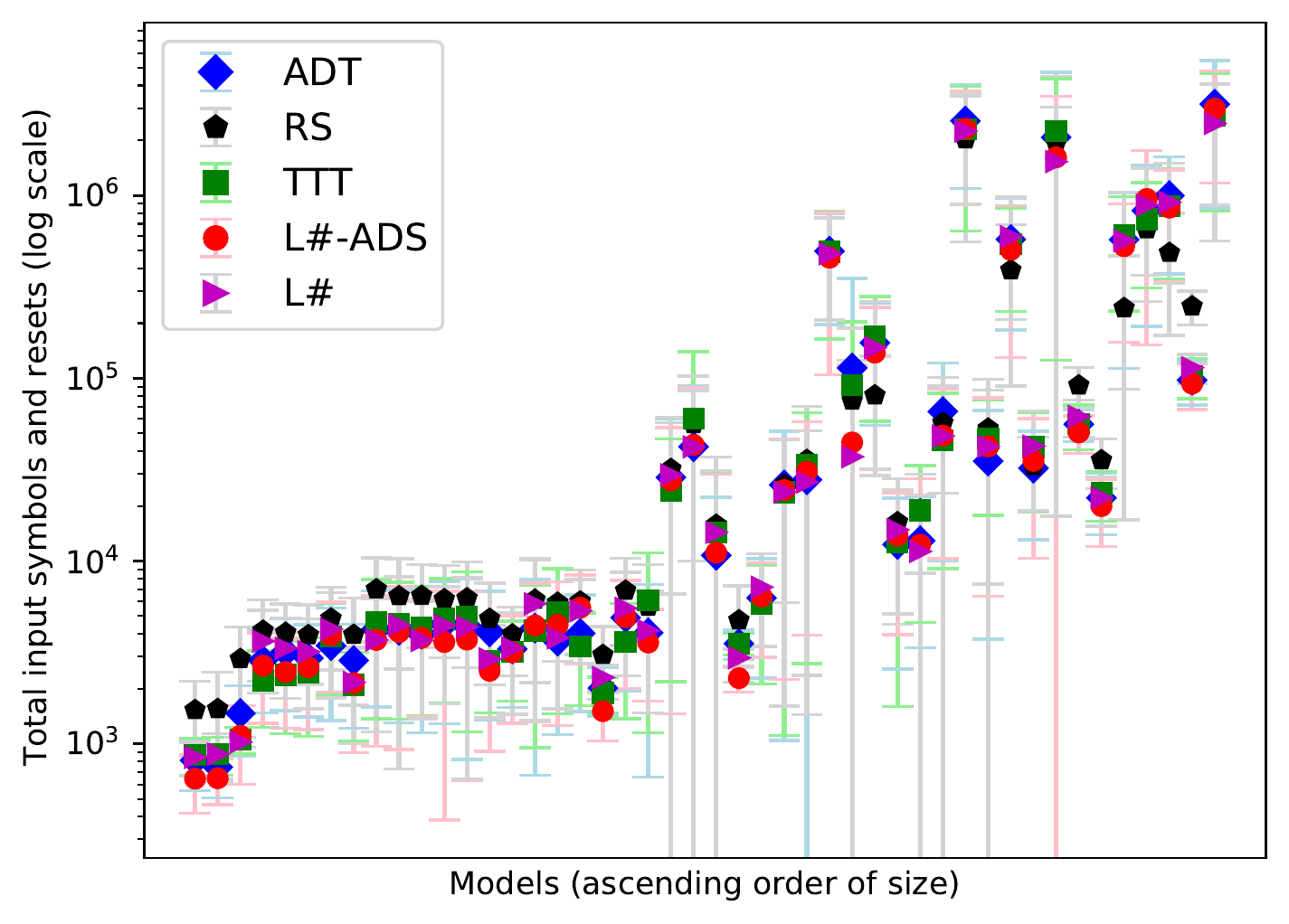}
    \caption{Symbols used both learning and testing}
    \label{fig:total_metric_plot}
  \end{subfigure}
  \caption{Performance plots of the selected learning algorithms (lower is better.)
}
\end{figure}

\autoref{fig:learning_metric_plot} shows the total size of data sent by the
learning algorithms via output queries -- so both the number and the size of
output queries are counted. In order to incorporate the equivalence
queries, \autoref{fig:total_metric_plot} shows the total size of data sent to
the SUT during learning and testing.
Note, in both plots the y-axis is log-scaled.
The x-axis indicates the models, sorted in increasing number of states.
The bars indicate standard deviation.

We can observe from the learning phase plot (\autoref{fig:learning_metric_plot}) that
$\lsharp$ expectedly does not perform better than the TTT and ADT algorithms, while
the RS algorithm performs the worst among all four.
However, $\lsharp_{\Ads}$ usually performs better than -- or, at least, is competitive with --
ADT and TTT.
Furthermore, the error bars in the learning phase are very small, indicating that the measurements
are stable.
Generally, depending on the models a
different algorithm is the fastest, but for every model,
$\lsharp_{\Ads}$ is among the fastest, with and without the exclusion of the testing phase.

\autoref{fig:total_metric_plot} presents the total number of input symbols and resets sent to the SUT.
All algorithms seem to be very close
in performance, which may be explained by the testing phase dominating the process.
Indeed, Aslam et al.~\cite{AslamCSB20} experimentally demonstrated that it is largely the testing phase
which influences learning effort.

The complete benchmark results (\ifthenelse{\boolean{showappendix}}{%
  in \appendixref{completeBenchmarks}%
}{%
  in the appendix of \cite{VGRW22}%
})
show more detailed information of the learned models, and
highlights the smallest number per column and model.
We can see that the number of equivalence queries are roughly similar for almost all the algorithms,
while $\lsharp$ seems to perform better for some models in the learning phase.

\section{Conclusions and Future Work}
\label{sec:concl}

We presented $\lsharp$, a new algorithm for the classical problem of active automata learning. The key idea behind the approach is to focus on establishing \emph{apartness}, or inequivalence of states, instead of approximating equivalence as in $L^*$ and its descendants. Concretely, the table/discrimination tree in $L^*$-like algorithms is replaced in $\lsharp$ by an observation tree, together with an apartness relation. 
This change in perspective leads to a simple but effective algorithm, which reduces the total number of symbols required for learning when compared to state-of-the-art algorithms. In particular, the use of observation trees, which are essentially tree-shaped Mealy machines, enables a modular integration of testing techniques, such as the ADS method, to identify states.
Although the asymptotic output query complexity of $\lsharp$ is $\bigO(kn^2+n \log m)$, in our experiments $\lsharp$ only needs in between $kn$ and $4kn$ output queries (resets) to learn the benchmark models (with $n \leq 66)$, which means that on average $\lsharp$ needs in between 1 and 4 output queries to identify a frontier state.

Of course there are also similarities between $\lsharp$ and $L^*$.
The basis of $\lsharp$ is comparable to the top half of the $L^*$ table: both in $\lsharp$ and in (\cite{RivestS93}'s version of) $L^*$ these prefixes induce a spanning tree. The frontier of $\lsharp$ is comparable to the bottom half of the $L^*$ table.
But whereas $L^*$ constructs residual classes of the language, $\lsharp$  builds an automaton directly from the observation tree. As a consequence, $L^*$ asks redundant queries, and optimizations of $L^*$ try to avoid this redundancy. In contrast, $\lsharp$ does not even think about asking redundant queries since it operates directly on the observation tree and only poses queries that increase the norm.

There is still much work to do to improve our prototype implementation, to include additional conformance testing algorithms, and to extend the experimental evaluation to a richer set of benchmarks and algorithms.
One issue that we need to address is scaling of $\lsharp$ to bigger
models. Our prototype implementation easily learns Mealy machines with hundreds of states, but fails to learn larger models such as the ESM benchmark of \cite{SmeenkMVJ15} (3410 states, 78 inputs) because the observation tree becomes too big ($\approx$25 million nodes will be required for the ESM). We see several ways to address this issue, e.g.,
pruning the observation tree, only keeping short ADSs to separate the basis states, storing parts of the tree on disk, distributing the tree over multiple processors (parallelizing the learning process), and using existing platforms for big graph processing \cite{biggraph21}.

Aslam et al.\ \cite{AslamLSB18} report on experiments in which active learning techniques are applied to 202 industrial software components from ASML. Out of these, interface protocols could be successfully derived for 134 components (within a give time bound). One of the main conclusions of the study is that the equivalence checking phase (i.e.\ conformance testing of hypothesis models) is the bottleneck for scalability in industry. We believe that a tighter integration of learning and testing, as enabled by $\lsharp$, will be key to address this challenging problem. 

It will be interesting to extend $\lsharp$ to richer frameworks such as register automata, symbolic automata and weighted automata. In fact, we discovered $\lsharp$ while working on a grey-box learning algorithm for symbolic automata.

\label{maintextend} %
\bibliographystyle{splncs04}
\bibliography{abbreviations,dbase}

\vfill

{\small\medskip\noindent{\bf Open Access} This chapter is licensed under the terms of the Creative Commons\break Attribution 4.0 International License (\url{http://creativecommons.org/licenses/by/4.0/}), which permits use, sharing, adaptation, distribution and reproduction in any medium or format, as long as you give appropriate credit to the original author(s) and the source, provide a link to the Creative Commons license and indicate if changes were made.}

{\small \spaceskip .28em plus .1em minus .1em The images or other third party material in this chapter are included in the chapter's Creative Commons license, unless indicated otherwise in a credit line to the material.~If material is not included in the chapter's Creative Commons license and your intended\break use is not permitted by statutory regulation or exceeds the permitted use, you will need to obtain permission directly from the copyright holder.}

\medskip\noindent\includegraphics{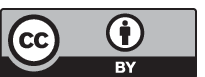}

\ifthenelse{\boolean{showappendix}}{
  \newpage
  \appendix
  
\section{Omitted Proofs}
\label{proofs}
\begin{proofappendix}{la: apartness refinement}
  For the witness $\sigma\vdash q\apart q'$, we have
  $\lambda^{\Obs}(q,\sigma)\converges$ and $\lambda^{\Obs}(q',\sigma)\converges$. 
  Since $f$ is a functional simulation, 
  by \autoref{la:refinement}
  we have $\lambda^{\M}(f(q), \sigma) \converges$ and $\lambda^{\Obs}(q,\sigma) = \lambda^{\M}(f(q),\sigma)$,
  and similarly $\lambda^{\Obs}(q',\sigma) = \lambda^{\M}(f(q'),\sigma)$.
  Hence,
  \[
    \lambda^{\M}(f(q),\sigma)
    = \lambda^{\Obs}(q, \sigma)
    \neq \lambda^{\Obs}(q', \sigma)
    = \lambda^{\M}(f(q),\sigma)
  \]
  which proves $\semantics{f(q)}\neq \semantics{f(q')}$.
  \jrnote{}
\end{proofappendix}

\begin{proofappendix}{la: weak co-transitivity}
  The witness
  $\sigma\vdash r\apart r'$ implies that $\lambda(r,\sigma)\converges$, $\lambda(r',\sigma)\converges$,
  and $\lambda(r,\sigma) \neq \lambda(r',\sigma)$.
  Since $\lambda(q,\sigma)\converges$, $\neg(r\apart q)\wedge
  \neg(r'\apart q)$ leads to the contradiction
  \[
    \lambda(r,\sigma)
    \overset{\neg(r\apart q)}{=}
    \lambda(q,\sigma)
    \overset{\neg(r'\apart q)}{=}
    \lambda(r',\sigma).
    \tag*{\qedhere}
  \]
\end{proofappendix}

\begin{proofappendix}{hypExistence}
  Consider the relation:
  \[
    \Delta \subseteq (S\times I) \times S
    \qquad
    ((q,i),q')
    \in \Delta
    \quad
    \text{if}
    \quad
    \delta^{\Obs}(q,i)\diverges
    ~\text{ or }~
    \neg(\delta^{\Obs}(q,i) \apart q')
  \]
  \begin{enumerate}
  \item If there are no isolated states, then every $(q,i)\in S\times I$
    is related to some $q'\in S$, so there is some functional relation
    $\delta^{\Hyp}\subseteq \Delta$ of type $\delta^{\Hyp}\colon S\times I\to S$.
  \item If $S$ is complete and all states in $F$ are identified, then $\Delta$
    is already a functional relation $\delta^{\Hyp} = \Delta\colon S\times I\to S$.
    \qedhere
  \end{enumerate}
\end{proofappendix}

\begin{proofappendix}{hypSize}
In the proof of \autoref{hypSize},
we characterize equivalence of Mealy machines via  \emph{bisimulations}.

\begin{definition}
	\label{def bisimulation MM}
	A \emph{bisimulation} between Mealy machines $\M$ and $\N$ is a relation $R
  \subseteq Q^{\M} \times Q^{\N}$ satisfying, for all $q \in Q^{\M}$, $r \in
  Q^{\N}$, $i \in I$, $o \in O$,
	\begin{equation*}
		q^{\M}_0 \; R \; q^{\N}_0
    \qquad\text{and}\qquad
		q \mathrel{R} r ~\wedge~
    q\xrightarrow{i/o} q'
    ~\Rightarrow ~
                      \exists r'\colon
                      r\xrightarrow{i/o} r'
                      ~\wedge~
                      q'\mathrel{R} r'
	\end{equation*}
	We write $\M \simeq \N$ if there exists a bisimulation relation between $\M$ and $\N$.
\end{definition}

\begin{lemma}
	\label{La:approx}
  Given complete Mealy machines $\M$ and $\N$, the equivalence relation
  $\mathord{\approx} \subseteq Q^{\M}\times Q^{\N}$ is a bisimulation.
\end{lemma}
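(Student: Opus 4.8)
The plan is to check the two clauses of \autoref{def bisimulation MM} directly for the relation $\mathord{\approx}$. The initial-state clause $q_0^\M \approx q_0^\N$ is exactly the statement that $\M$ and $\N$ are equivalent, so it is available in every situation where the lemma is applied; the substance lies entirely in the transfer clause. As a preliminary I would record the elementary \emph{run decomposition} identities, valid in any Mealy machine for $q\in Q$ and $u,v\in I^*$ as long as the relevant transitions are defined:
\[
  \delta(q,uv) = \delta(\delta(q,u),v)
  \qquad\text{and}\qquad
  \lambda(q,uv) = \lambda(q,u)\,\lambda(\delta(q,u),v).
\]
These follow by a straightforward induction on $|v|$ from the defining diagram of $\fpair{\lambda_{n+1},\delta_{n+1}}$ in \autoref{Mealy}. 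Since $\M$ and $\N$ are complete, $\delta$, $\lambda$ and hence $\semantics{\cdot}\colon I^*\to O^*$ are total, so no definedness side conditions arise.

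For the transfer clause, suppose $q\approx r$ (with $q\in Q^\M$, $r\in Q^\N$) and $q\xrightarrow{i/o}q'$ in $\M$. Because $\N$ is complete there is a (unique) transition $r\xrightarrow{i/o'}r'$ in $\N$, and it remains to show $o=o'$ and $q'\approx r'$. Reading the single input $i$ as a one-letter word, $o=\lambda(q,i)=\semantics{q}(i)=\semantics{r}(i)=\lambda(r,i)=o'$, settling the outputs. For the successors, take an arbitrary $\sigma\in I^*$; by the run decomposition, $\semantics{q}(i\sigma)=\lambda(q,i\sigma)=o\cdot\lambda(q',\sigma)$ and likewise $\semantics{r}(i\sigma)=o'\cdot\lambda(r',\sigma)=o\cdot\lambda(r',\sigma)$. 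Since $\semantics{q}=\semantics{r}$, these output words are equal, and left-cancellation in the free monoid $O^*$ gives $\lambda(q',\sigma)=\lambda(r',\sigma)$. As $\sigma$ was arbitrary, $\semantics{q'}=\semantics{r'}$, i.e.\ $q'\approx r'$, which is precisely what the transfer clause demands.

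The step I would be most careful about is the preliminary run-decomposition identity: the generalized $\fpair{\lambda_n,\delta_n}$ appends symbols at the end, whereas the argument peels off the first symbol $i$ of $i\sigma$, so one must genuinely run the induction rather than read the identity off the definition. Everything else is bookkeeping. Combining the two clauses shows $\mathord{\approx}$ is a bisimulation; in particular, whenever $q_0^\M\approx q_0^\N$ this yields $\M\simeq\N$ with $\mathord{\approx}$ itself as the witnessing relation, which is the form in which the lemma is used in the proof of \autoref{hypSize}.
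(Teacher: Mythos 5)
Your proof is correct. The paper states this lemma without proof (treating it as a routine variation of a classical fact), and your argument --- output agreement on the one-letter word $i$, then left-cancellation in $O^*$ applied to $\semantics{q}(i\sigma)=\semantics{r}(i\sigma)$ to obtain $q'\approx r'$ --- is exactly the standard argument it implicitly relies on. Your remark that the initial-state clause of \autoref{def bisimulation MM} is not automatic for arbitrary $\M,\N$ but holds wherever the lemma is invoked (in the proof of \autoref{hypSize} it is used with $\N=\M$, where reflexivity gives $q_0^\M\approx q_0^\M$, and otherwise only the transfer clause is actually used) correctly addresses the one genuine subtlety in the statement.
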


The next lemma, which is a variation of the classical result of \cite{Pa81}, is again easy to prove.
\begin{lemma}
	\label{la:bisimulation}
	Let $\M$ and $\N$ be complete Mealy machines.
	Then $\M \simeq \N$ iff $\M \approx \N$.
\end{lemma}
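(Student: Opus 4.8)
The plan is to prove the two implications of the ``iff'' separately, using \autoref{La:approx} for one direction and a short induction on input words for the other.

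For the ``if'' direction, suppose $\M \approx \N$, i.e.\ $q_0^{\M} \approx q_0^{\N}$. I would simply take $R := \mathord{\approx}\, \subseteq Q^{\M}\times Q^{\N}$ as the candidate bisimulation. By \autoref{La:approx} the relation $\mathord{\approx}$ meets the transfer requirement of \autoref{def bisimulation MM}, and the hypothesis $q_0^{\M}\approx q_0^{\N}$ supplies the remaining requirement that the initial states are related. Hence $R$ is a bisimulation between $\M$ and $\N$, so $\M\simeq\N$.

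For the ``only if'' direction, suppose $R$ is a bisimulation between $\M$ and $\N$. The key auxiliary claim I would establish is
\[
  q \mathrel{R} r \quad\Longrightarrow\quad \lambda^{\M}(q,\sigma) = \lambda^{\N}(r,\sigma) \qquad\text{for all } \sigma\in I^*,
\]
proved by induction on $|\sigma|$. The base case $\sigma=\varepsilon$ is immediate, since $\fpair{\lambda_0,\delta_0}=\id_Q$ gives $\lambda^{\M}(q,\varepsilon)=\varepsilon=\lambda^{\N}(r,\varepsilon)$. For the step $\sigma = i\sigma'$: because $\M$ is complete (and deterministic) there is a unique transition $q \xrightarrow{i/o} q'$ in $\M$; the transfer property of $R$ yields some $r'$ with $r \xrightarrow{i/o} r'$ in $\N$ --- note the \emph{same} output $o$ --- and $q'\mathrel{R}r'$. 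Applying the induction hypothesis to $q'\mathrel{R}r'$ and $\sigma'$, and prepending $o$, gives $\lambda^{\M}(q,i\sigma') = o\,\lambda^{\M}(q',\sigma') = o\,\lambda^{\N}(r',\sigma') = \lambda^{\N}(r,i\sigma')$. Instantiating the claim at the related pair $q_0^{\M}\mathrel{R}q_0^{\N}$ yields $\semantics{q_0^{\M}}=\semantics{q_0^{\N}}$, that is, $\M\approx\N$.

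The argument is essentially routine; the only point that needs a little care is that the bisimulation transfer must be invoked on the unique $i$-transition available (uniqueness coming from determinism, existence from completeness), and that this matching transition necessarily carries an identical output label --- that is what forces the two output words to agree symbol by symbol. Since completeness removes all partiality, there are no $\converges$ side-conditions to track, so nothing else is delicate.
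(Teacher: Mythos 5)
Your proof is correct. The paper itself gives no proof of this lemma (it is dismissed as ``again easy to prove,'' being a variation of Park's classical result), and your argument --- taking $R := \mathord{\approx}$ for one direction via \autoref{La:approx}, and an induction on $|\sigma|$ along the bisimulation transfer for the other --- is exactly the standard argument the paper leaves implicit, including the correct handling of the small wrinkle that \autoref{La:approx} only supplies the transfer property while the hypothesis $q_0^{\M}\approx q_0^{\N}$ supplies the initial-state clause.
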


We now come to the actual proof of \autoref{hypSize}:

\begin{proof}[of \autoref{hypSize}]
	Let $f$ be a refinement from $\Obs$ to $\M$.  Define relation $R \subseteq S \times Q^{\M}$ by
	\begin{eqnarray*}
		(q, r) \in R & \Leftrightarrow & f(q) \approx^{\M} r.
	\end{eqnarray*}
	We claim that $R$ is a bisimulation between $\Hyp$ and $\M$.
	\begin{enumerate}
		\item 
		Since $f$ is a refinement from $\Obs$ to $\M$, $f(q^{\Obs}_0) = q^{\M}_0$. By construction, $q^{\Obs}_0 = q^{\Hyp}_0$.
		Now the fact that equivalence relation $\approx^{\M}$ is reflexive implies $f(q^{\Hyp}_0) \approx^{\M} q^{\M}_0$, and therefore $(q^{\Hyp}_0, q^{\M}_0) \in R$.
		\item 
		Suppose $(q,r) \in R$ and $i \in I$. Let $q' = \delta^{\Hyp}(q,i)$ and $r' = \delta^{\M}(r,i)$.
		We need to show that $\lambda^{\Hyp}(q,i) = \lambda^{\M}(r,i)$ and $(q', r') \in R$.
		We consider two cases:
		\begin{enumerate}
			\item 
			$\delta^{\Obs}(q,i) \in S$. 
			Then, by construction of $\Hyp$, $\lambda^{\Hyp}(q,i) = \lambda^{\Obs}(q,i)$ and $q' = \delta^{\Obs}(q,i)$.
			Moreover, as $f$ is a refinement from $\Obs$ to $\M$, 
			$f(q')  =  \delta^{\M}(f(q),i)$ and $\lambda^{\Obs}(q,i) = \lambda^{\M}(f(q),i)$. 
			By definition of $R$, $f(q) \approx^{\M} r$. 
			Hence, by Lemma~\ref{La:approx}, $\lambda^{\M}(f(q),i) = \lambda^{\M}(r,i)$ and $\delta^{\M}(f(q),i) \approx^{\M}  r'$.
			By combining the derived equalities we obtain:
			\begin{eqnarray*}
				\lambda^{\Hyp}(q,i) & = & \lambda^{\Obs}(q,i) =  \lambda^{\M}(f(q),i) = \lambda^{\M}(r,i),\\
				f(q') & = &  \delta^{\M}(f(q),i)  \approx{^M}  r'.
			\end{eqnarray*}
			Hence by definition of $R$, $(q', r') \in R$, as required.
			\item 
			$\delta^{\Obs}(q,i) \in F$. Let $q'' = \delta^{\Obs}(q,i) \in F$.
			Then, by construction of $\Hyp$, $\lambda^{\Hyp}(q,i) = \lambda^{\Obs}(q,i)$ and $q'$ is the unique state in $S$ such that $q''$ and $q'$ are not apart.
			By Lemma~\ref{La:approx},
			since all states of $S$ are pairwise apart, all states in the image of $s$ under $f$ are in different equivalence classes of $\approx^{\M}$.			
			Since $\approx^{\M}$ has as many equivalence classes as the number of states of $S$, each state of $\M$ belongs to the same equivalence
			class as $f(s)$, for some $s \in S$.
			Since $q''$ is apart from all states of $S$ except $q'$, $f(q'')$ does not belong to the same equivalence class as $f(s)$, for $s \in S \setminus \{ q' \}$, by Lemma~\ref{La:approx}.
			Hence, by the Sherlock Holmes principle, $f(q'') \approx^{\M} f(q')$.
			Since $f$ is a refinement from $\Obs$ to $\M$, 
			$f(q'')  =  \delta^{\M}(f(q),i)$ and $\lambda^{\Obs}(q,i) = \lambda^{\M}(f(q),i)$. 
			By definition of $R$, $f(q) \approx^{\M} r$. 
			Hence, by Lemma~\ref{La:approx}, $\lambda^{\M}(f(q),i) = \lambda^{\M}(r,i)$ and $\delta^{\M}(f(q),i) \approx^{\M}  r'$.
			By combining the derived equalities we obtain:
			\begin{eqnarray*}
				\lambda^{\Hyp}(q,i) & = & \lambda^{\Obs}(q,i) =  \lambda^{\M}(f(q),i) = \lambda^{\M}(r,i),\\
				f(q') & \approx^{\M} &  f(q'') = \delta^{\M}(f(q),i)  \approx{^M}  r'.
			\end{eqnarray*}
			As equivalence relation $\approx^{\M}$ is transitive, $f(q') \approx^{\M} r'$, and hence by definition of $R$, $(q', r') \in R$, as required.
		\end{enumerate}
	\end{enumerate}
	The theorem now follows by application of Lemma~\ref{la:bisimulation}.
\end{proof}
\end{proofappendix}

\begin{proofappendix}{thmProgress}
  In all cases, let $S,F,\Obs$ denote the values before and $S',F',\Obs'$ denote
  the values after the respective rule application. Also introduce
  abbreviations:
  \[
    N_Q(\Obs) = \frac{|S| \cdot (|S|+1)}{2}
  \]
  \[
    N_{\converges}(\Obs) = \{ (q, i) \in S \times I \mid \delta(q,i) \converges \}
  \]
  \[
    N_{\apart}(\Obs) = \{ (q, q') \in S \times F \mid q \apart q' \}
  \]
  The total norm is:
  \[
    N(\Obs) = N_Q(\Obs) + |N_{\converges}(\Obs)| + |N_{\apart}(\Obs)|
  \]
  \begin{enumerate}
  \item If $q$ is isolated and is thus moved from $F$ to $S$, i.e.~$S' := S\cup
    \{q\}$, then we have
    \begin{align*}
      N_Q(\Obs') &= \frac{|S'| \cdot (|S'|+1)}{2}
      = \frac{(|S|+1) \cdot (|S|+1+1)}{2}
      \\
                 &= \frac{(|S|+1)\cdot |S|}{2} + \frac{(|S|+1)\cdot 2}{2}
                   = N_Q(\Obs) + |S| + 1
    \end{align*}
    \begin{align*}
      N_\downarrow(\Obs') \supseteq N_\downarrow(\Obs)
    \end{align*}
    Finally we have
    \begin{align*}
      N_{\apart}(\Obs') ~~\supseteq~~ N_{\apart}(\Obs) \setminus (S\times \{q\})
    \end{align*}
    and thus
    \[
      |N_{\apart}(\Obs')| ~~\ge~~ |N_{\apart}(\Obs)| - |S|.
    \]
    In total, $N(\Obs') \ge N(\Obs)+1$.

  \item In the second rule, let $\delta^{\Obs}(q,i)$ for some $q\in S$, $i\in I$.
    After the output query for $\mathsf{access}(q) \, i\, \Ads(S)$, we have
    \[
      N_Q(\Obs') = N_Q(\Obs)
      \qquad
      N_\downarrow(\Obs') = N_{\downarrow}(\Obs)\cup\{(q,i)\}
      \qquad
      N_{\apart}(\Obs') \subseteq N_{\apart}(\Obs)
    \]
    and thus $N(\Obs') \ge N(\Obs) + 1$.

  \item For the third rule, consider a state $q\in F$ and distinct $r,r'\in S$
    with $\neg(q\apart r)$ and $\neg(q\apart r')$. The algorithm performs 
    the query
    \[
    \OutputQuery(\mathsf{access}(q) \; \sigma).
    \]
    Hence, $\delta^\Obs(q,\sigma)\converges$ in the updated observation tree, which
    implies $r\apart q$ or $r' \apart q$ by weak
    co-transitivity~(\autoref{la: weak co-transitivity}). Thus,
    \[
    N_{\apart}(\Obs') \supseteq N_{\apart}(\Obs) \cup \{(r,q)\}
    \qquad\text{ or } \qquad 
    N_{\apart}(\Obs') \supseteq N_{\apart}(\Obs) \cup \{(r',q)\}
    \]
    and therefore $|N_{\apart}(\Obs')| \ge |N_{\apart}(\Obs)| + 1$.
    The other components of the norm stay unchanged, thus the norm rises.

  \item If the fourth rule did not terminate the algorithm, we show that $\Hyp$
    is not a hypothesis for $\Obs'$ anymore. By \autoref{algConsistencyCorrect}
    and by \EquivalenceQuery, we have in any case that $\sigma\in I^*$ is such
    that $\delta^{\Hyp}(q_0^\Hyp,\sigma)\apart \delta^{\Obs'}(q_0^{\Obs'},\sigma)$
    (in a possibly extended observation tree $\Obs'$). Moreover, during this rule, in
    \CheckConsistency and \ProcessCounterexample, the basis $S$ is not modified:
    $S= S'$. Even though the observation tree has been updated since \BuildHypothesis, $\Hyp$ still meets the
    criteria of \autoref{algProcCountCorrect}. Hence, after counter example
    processing, $\Hyp$ is not a hypothesis for the updated $\Obs'$ anymore, that is, there
    exist $p\in S$,
    $p\xrightarrow{i/o} q$ in $\Hyp$, and $p\xrightarrow{i/o'}r$ in $\Obs'$
    with $o\neq o'$ or $q\apart r$. But the case $o \neq o'$ does
    not occur: since $S$ is complete, and $\Hyp$ is a hypothesis for $\Obs$,
    the transition $p \xrightarrow{i/o} q$ in $\Hyp$
    implies $p\xrightarrow{i/o}r$ in $\Obs$ and therefore also in the extension $\Obs'$. 
    Hence $q\apart r$, so that $r \neq q$ and consequently $r\in F$. Therefore, we obtain:
      \[
        (q,r) \in N_{\apart}(\Obs') \setminus N_{\apart}(\Obs).
        \tag*{\qedhere}
      \]
  \end{enumerate}
\end{proofappendix}

\begin{proofappendix}{thmBound}
  Since there is a functional simulation $\Obs\to \M$, we have by
  \autoref{la: apartness refinement} that
  \begin{eqnarray*}
    | S | \leq n.
  \end{eqnarray*}
  The number of successors of the basis is bounded by $k \cdot n$:
  \begin{eqnarray*}
    | \{ (q, i) \in S \times I \mid \delta(q,i) \converges \} | & \leq & k n
  \end{eqnarray*}
  The set $S \cup F$ (the basis and all its successor states) contains at most $kn
  +1$ elements. Since each state in the frontier can be apart from at most $n-1$ states in the basis, this means we have
  \begin{eqnarray*}
    | \{ (q, q') \in S \times F \mid q \apart q' \} | & \leq & (n-1)(kn +1)
  \end{eqnarray*}
  In $\bigO$-notation this simplifies to
  \[
    N(\Obs) \le
    \frac{1}{2} n (n+1) + k n + (n-1)(kn+1)
    \in \bigO(k n^2)
    \tag*{\qedhere}
  \]
\end{proofappendix}

\begin{proofappendix}{algConsistencyCorrect}
  The breadth-first search in \autoref{alg:consistency} verifies whether there
  is a functional simulation $f\colon \Obs\to \Hyp$. Since $\Hyp$ is
  deterministic (like all Mealy machines considered here) and since every state
  of $\Obs$ is reachable from the root, there is at most one functional
  simulation $\Obs\to \Hyp$. Thus, consistency checking amounts to verifying
  whether the map
  \[
    f\colon Q^{\Obs} \to Q^{\Hyp}
    \qquad
    f(q) := \delta^{\Hyp}(q_0^{\Hyp}, \access(q))
  \]
  is a functional simulation (\autoref{def refinement}).
  \begin{itemize}
  \item If the procedure returns $\code{no}$, then $q \apart f(q)$ for some
    $q\in \Obs$. Note that $f$ is idempotent, because $\Hyp$ contains $S$: $f(q) = f(f(q))$ (using $Q^{\Hyp}\subseteq Q^{\Obs}$). If $f$ was a
    functional simulation $\Obs\to \Hyp$, this would lead to a contradiction: 
    applying \autoref{la: apartness refinement} to $q\apart f(q)$ (in $\Obs$)
    implies that $f(q)\not\approx f(f(q)) = f(q)$ (in $\M$), a contradiction to
    the reflexivity of $\approx$.

  \item If the procedure returns $\code{yes}$, then $\neg (q\apart f(q))$ for
    all $q\in Q^{\Obs}$. For the verification that $f$ is a functional
    simulation, first note that we trivially have $f(q_0^\Obs) = q_0^{\Hyp}$.
    For the preservation of transitions, consider $q\xrightarrow{i/o} p$ in $\Obs$ and
    $f(q)\xrightarrow{i/o'} p'$ in $\Hyp$.
    Since the basis $S$ is complete in $\Obs$, we have $\lambda^{\Obs}(f(q),i) =
    o'$. Thus $o=o'$, because otherwise we had $i\vdash q\apart f(q)$.
    Note that $\access(p) = \access(q)\,i$ and so
    \begin{align*}
      f(p) &= \delta^{\Hyp}(q_0^{\Hyp},\access(p))
      = \delta^{\Hyp}(q_0^{\Hyp},\access(q)\,i)
      = \delta^{\Hyp}(f(q),i)
      = p'
    \end{align*}
    and thus $f$ is a functional simulation.
    \qedhere
  \end{itemize}
\end{proofappendix}

\begin{proofappendix}{algProcCountCorrect}
  We prove termination by providing a bound on the number of recursive
  calls. For an input word $\sigma\in I^*$ with
  $\delta^\Obs(q_0^\Obs,\sigma)\defined$, we define the \emph{distance from the
    frontier} $d(\sigma)\in \Nat$ by:
  \[
    d(\sigma) = |\sigma| - \max \{|\rho|\mid \rho\text{ prefix of }\sigma,
    \delta^\Obs(q_0^\Obs, \rho) \in S\cup F\}
  \]
  Observe that:
  \begin{itemize}
  \item $d(\sigma) = 0$ iff $r:=\delta^\Obs(q_0^\Obs,\sigma) \in S\cup F$.
  \item If $d(\sigma)>0$ then $d(\sigma) = |\sigma|-|\rho| \ge 1$ with $\rho$ defined as
    in \autoref{alg:counterexamplebs}. For the decomposition
    $\sigma=\sigma_1\cdot \sigma_2$, we have
    \[
      d(\sigma_1) = h - |\rho|
      = \left\lfloor \frac{|\rho|+|\sigma|}{2} \right\rfloor - |\rho|
      = \left\lfloor |\rho|+\frac{|\sigma|-|\rho|}{2} \right\rfloor - |\rho|
      = \left\lfloor \frac{d(\sigma)}{2} \right\rfloor.
    \]
    Since $q':=\delta^{\Hyp}(q_0^{\Hyp},\sigma_1)\in S$ by definition, we have
    that $\delta^{\Obs}(q', i) \in S\cup F$ if $i$ is the first character of $\sigma_2$.
    Note that if $\sigma_2$ is empty, then $d(\access(q')\,\sigma_2) = 0 \le
    \dfrac{d(\sigma)}{2}$, trivially. So if $\sigma_2$ is not empty, then we have:
    \begin{align*}
      d(\access(q')\,\sigma_2)
      \,&\le\, |\sigma_2| - 1
      = |\sigma| - h - 1
      = |\sigma| - \left\lfloor \frac{|\rho|+|\sigma|}{2}\right\rfloor - 1
      \\ &
      = \left\lceil |\sigma| - \frac{|\rho|+|\sigma|}{2}\right\rceil - 1
      \le \left\lfloor |\sigma| - \frac{|\rho|+|\sigma|}{2}\right\rfloor
           \\ &
      = \left\lfloor \frac{|\sigma|-|\rho|}{2}\right\rfloor
      \le \left\lfloor \frac{d(\sigma)}{2}\right\rfloor.
           \hspace*{-1cm} %
    \end{align*}
    So in any of the two recursive calls, if $\sigma'\in I^*$ denotes parameter
    passed to the recursive call, then we have $2\cdot d(\sigma') \le
    d(\sigma)$. This implies termination.
  \end{itemize}
Let $\mathit{OQ}(n)$ denote the maximal number of output queries performed during a run of
 \autoref{alg:counterexamplebs} with $n = d(\sigma)$. Then, using the above observations, we may show by induction on $d(\sigma)$ that
 \begin{eqnarray*}
 	\mathit{OQ}(n) & \leq & \begin{cases}
 		0 & \mbox{if } n=0\\
 		\log_2(2n) & \mbox{if } n>0
 	\end{cases}
 	\end{eqnarray*}
  Since $d(\sigma) < | \sigma |$, this implies that the number of output queries is bounded by
  $\bigO(\log(|\sigma |))$.

  For correctness,
  let $q := \delta^{\Hyp}(q_0^{\Hyp}, \sigma)$
  and $r := \delta^{\Obs}(q_0^{\Obs}, \sigma)$
  as in
  \autoref{alg:counterexamplebs}
  such that $\eta\vdash q\apart r$ for some $\eta\in I^*$,
  i.e.~$\lambda^{\Obs}(q,\eta) \neq \lambda^{\Obs}(r,\eta)$
  \begin{itemize}
  \item In the case of $r\in S\cup F$,
    note that since $q_0^{\Hyp} = q_0^{\Obs}$ and
    $q\apart r$, we have $q_n \neq r_n$ and $|\sigma| \ge 1$.
    Hence, we can decompose $\sigma = \alpha\,i$ into $\alpha\in I^*$ and $i\in I$.
    Let $q' = \delta^{\Hyp}(q_0^{\Hyp},\alpha)$ and $r' =
    \delta^{\Hyp}(q_0,^{\Obs},\alpha)$. Since $\Obs$ is a tree,
    we necessarily have $\alpha = \access(r')$. Hence,
    \[
      q' = \delta^{\Hyp}(q_0, \alpha)
      = \delta^{\Hyp}(q_0, \access(r')) = r'.
    \]
    We have $q'\xrightarrow{i} q$ in $\Hyp$ and $r'\xrightarrow{i} r$ in $\Obs$
    with $q'=r'$ but $q\apart r$, hence $\Hyp$ is not a hypothesis for $\Obs$.

  \item Let $\sigma = \sigma_1\,\sigma_2$ be the decomposition into
    $\sigma_1,\sigma_2\in I^*$, and let $q' := \delta^{\Hyp}(q_0^{\Hyp},
    \sigma_1) \in S$ and $r' := \delta^{\Obs}(q_0^{\Obs}, \sigma_1)$ as in
    \autoref{alg:counterexamplebs}.
    After $\OutputQuery$, we have
    $\lambda^{\Obs}(q', \sigma_2\,\eta)\converges$ and thus:
    \begin{enumerate}
    \item If $q'\apart r'$, then $\delta^{\Hyp}(q_0^{\Hyp}, \sigma_1)\apart
      \delta^{\Obs}(q_0^{\Obs}, \sigma_1)$, so $\sigma_1$ is a valid parameter
      to \ProcessCounterexample and shorter than $\sigma$, so by induction,
      $\Hyp$ is not a hypothesis anymore after the recursive call.
    \item If $\neg(q'\apart r')$, then we necessarily have that
      \[
        \lambda^{\Obs}(q', \sigma_2\,\eta)
        = \lambda^{\Obs}(r', \sigma_2\,\eta)
      \]
      and thus also
      \begin{equation}
        \lambda^{\Obs}(\delta^{\Obs}(q',\sigma_2), \eta)
        = \lambda^{\Obs}(\delta^{\Obs}(r', \sigma_2), \eta).
        \tag*{$(*)$}
      \end{equation}
      We verify that $\access(q')\,\sigma_2$ can be passed to
      $\ProcessCounterexample$:
      \begin{align*}
        \lambda^{\Obs}(\delta^{\Hyp}(q_0^\Hyp, \access(q')\,\sigma_2), \eta)
        &= \lambda^{\Obs}(\delta^{\Hyp}(q',\sigma_2), \eta)
        = \lambda^{\Obs}(q, \eta)
        \intertext{But on the other hand:}
        \lambda^{\Obs}(\delta^{\Obs}(q_0^\Obs, \access(q')\,\sigma_2), \eta)
        &= \lambda^{\Obs}(\delta^{\Obs}(q',\sigma_2), \eta)
        \overset{(*)}{=} \lambda^{\Obs}(\delta^{\Obs}(r',\sigma_2), \eta)
        \\
        &= \lambda^{\Obs}(r, \eta)
        \neq \lambda^{\Obs}(q, \eta)
      \end{align*}
      Hence, $\eta$ is a witness for $\delta^{\Hyp}(q_0^\Hyp,
      \access(q')\,\sigma_2) \apart \delta^{\Obs}(q_0^\Obs,
      \access(q')\,\sigma_2)$ and invoking $\ProcessCounterexample(\access(q')\, \sigma_2)$
      makes that $\Hyp$ is not a hypothesis for $\Obs$ afterwards.
      \qedhere
    \end{enumerate}
  \end{itemize}
\end{proofappendix}

\begin{proofappendix}{adscorrect}

  We first show that if there are at least two states to be distinguished, 
  the expected reward is positive:
  \begin{lemma}
    \label{adsexists}
    Suppose that $U\subseteq Q^{\Obs}$, $r, r'\in U$ with $r \apart r'$. Then $E(U) > 0$.
  \end{lemma}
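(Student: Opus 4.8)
The plan is to prove, by induction on the \emph{height} $h(U):=\max\{|\rho|\mid \exists q\in U,\ \delta^{\Obs}(q,\rho)\converges\}$ of the state set, the stronger pair of claims: (i)~$E(U)\ge 0$ for all $U\subseteq Q^{\Obs}$, and (ii)~if in addition $r\apart r'$ for some $r,r'\in U$, then $E(U)>0$. Claim~(i) is carried along so that in the defining sum~\eqref{defE} all summands are known to be $\ge 0$, whence it suffices to exhibit a single strictly positive summand. The recursion underlying $E$ is well-founded because $h(U\xrightarrow{i/o})<h(U)$ whenever $U\xrightarrow{i/o}\neq\emptyset$: a run of $\rho$ from a state of $U\xrightarrow{i/o}$ lifts to a run of $i\rho$ from a state of $U$. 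I will also use the elementary identity $\sum_{o\in O}|U\xrightarrow{i/o}|=|U\xrightarrow{i}|$, hence $0\le|U\xrightarrow{i/o}|\le|U\xrightarrow{i}|$, which holds because $\Obs$ is a tree: distinct states of $U\xrightarrow{i}$ have distinct $i$-successors, so $|U\xrightarrow{i/o}|$ just counts the states of $U\xrightarrow{i}$ whose output on $i$ is $o$.

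For the inductive step of~(ii), take a witness $\sigma\vdash r\apart r'$. Since $\semantics{r}(\epsilon)=\epsilon=\semantics{r'}(\epsilon)$, the witness is non-empty, say $\sigma=i\,\tau$; then $r,r'\in U\xrightarrow{i}$, so $i\in\mathit{inp}(U)$, the maximum in~\eqref{defE} ranges over a non-empty set, and it is enough to show that the candidate value for this $i$ is positive. I then split on the first outputs. If $\lambda^{\Obs}(r,i)\neq\lambda^{\Obs}(r',i)$, writing $o_1$ for the output of $r$, then $r$ gives $|U\xrightarrow{i/o_1}|\ge 1$ while $r'$ gives $|U\xrightarrow{i}|-|U\xrightarrow{i/o_1}|\ge 1$, so the $o_1$-summand is at least $1/|U\xrightarrow{i}|>0$, the term $E(U\xrightarrow{i/o_1})$ being $\ge 0$ by~(i). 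If instead $\lambda^{\Obs}(r,i)=\lambda^{\Obs}(r',i)=o$, then $\delta^{\Obs}(r,i)$ and $\delta^{\Obs}(r',i)$ are two distinct states of $U\xrightarrow{i/o}$ (distinct because $\Obs$ is a tree and $r\neq r'$), and stripping the common leading output $o$ from $\semantics{r}(\sigma)\neq\semantics{r'}(\sigma)$ yields $\tau\vdash \delta^{\Obs}(r,i)\apart\delta^{\Obs}(r',i)$; since $h(U\xrightarrow{i/o})<h(U)$, induction hypothesis~(ii) gives $E(U\xrightarrow{i/o})>0$, so the $o$-summand is at least $|U\xrightarrow{i/o}|\cdot E(U\xrightarrow{i/o})/|U\xrightarrow{i}|>0$. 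In either case the candidate value for $i$, and hence $E(U)$, is positive. Claim~(i) follows from the same estimates, since every summand is then a non-negative combination; the base case $h(U)=0$ is immediate, as $\mathit{inp}(U)=\emptyset$ forces $E(U)=0$ and $U$ can contain no apart pair (any witness would be non-empty).

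I expect the argument to be essentially bookkeeping once the frame is fixed, and the main point requiring care is the set-up itself: choosing the induction measure to be the height $h(U)$ rather than $|U|$, so that the recursion on $\Obs$ actually decreases, and strengthening the statement with $E(U)\ge 0$ so that the unused summands of~\eqref{defE} can be dropped. The tree-shape of $\Obs$ enters exactly twice, to get $\sum_o|U\xrightarrow{i/o}|=|U\xrightarrow{i}|$ and to get distinctness of the two successors in the second case.
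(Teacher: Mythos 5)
Your proof is correct and follows essentially the same route as the paper's: split on whether the first outputs of the witness $\sigma=i\,\tau$ for $r\apart r'$ agree, obtaining a directly positive summand in \eqref{defE} when they differ and recursing on $U\mathord{\xrightarrow{i/o}}$ with the residual witness $\tau$ when they agree. The only differences are cosmetic refinements: you induct on the height of $U$ rather than on $|\sigma|$, and you explicitly carry along the auxiliary claim $E(V)\ge 0$ (which the paper leaves implicit when concluding that one positive summand makes the whole sum positive).
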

  \begin{proof}
    Let $\sigma \in I^+$ be a witness for $r\apart r'$, that is $\sigma\vdash r\apart r'$.
    We prove $E(U) > 0$ by induction on the length
      of $\sigma$. Let $\sigma = i\;\sigma'$ for $i\in I$ and
      $\sigma'\in I^*$ and put $o = \lambda(r,i)\in O$, and $o' = \lambda(r',i)
      \in O$.
      \begin{itemize}
      \item If $o\neq o'$, then $\delta(r,i) \in U\xrightarrow{i/o}$ and
        $\delta(r',i) \not\in U\xrightarrow{i/o}$, and so
        \[
          |U\xrightarrow{i/o}| \ge 1
          \text{ and }
          (|U\xrightarrow{i}| - |U\xrightarrow{i/o}|) \ge 1
        \]
        Thus, the fraction in \eqref{defE} is greater than $0$ and consequently also $E(U) > 0$.

      \item If $o=o'$, then
        $\delta(r,i),\delta(r',i) \in U\xrightarrow{i/o}$
        and $\sigma' \vdash \delta(r,i)\apart\delta(r',i)$. Hence, we obtain
        $E(U\xrightarrow{i/o}) > 0$ by the induction hypothesis and so the
        fraction for $i$ and $o$ is greater than 0 and so $E(U) > 0$.
        \qed
      \end{itemize}
  \end{proof}

Next we show that if the expected reward is positive, there exist, for every maximal path in $\Ads(U)$, two states in $U$ that are apart.

  \begin{lemma}
    \label{adsproductive}
    If $E(U) > 0$, $\pi$ is a path from the root of $\Ads(U)$ to a leaf, and $\sigma\in I^*$ is the sequence of labels of the states occurring in $\pi$, then there are
    $r,r'\in U$ with $\lambda(r,\sigma) \neq \lambda(r',\sigma)$.
  \end{lemma}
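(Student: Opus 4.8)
The plan is to argue by induction on $|\sigma|$, i.e.\ on the length of the path $\pi$, treating $\lambda(v,\tau)$ as a possibly undefined value and regarding $\lambda(v,\tau)\neq\lambda(v',\tau)$ as satisfied already when exactly one of the two sides is defined. First I would record an auxiliary fact: \emph{for $V\neq\emptyset$, every root-to-leaf path of $\Ads(V)$ is realised by some state of $V$}. If such a path visits nodes $V=V_1,\dots,V_{m+1}$ (with $V_{m+1}$ the unlabelled leaf) along output edges $p_1,\dots,p_m$, then $V_{m+1}\neq\emptyset$ — it is $V$ when $m=0$, and otherwise $\Ads$ creates the edge into it only if $V_m\mathord{\xrightarrow{j_m/p_m}}=V_{m+1}$ is nonempty. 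Picking $v_{m+1}\in V_{m+1}$ and repeatedly using $V_{k+1}=V_k\mathord{\xrightarrow{j_k/p_k}}$ to choose a predecessor $v_k\in V_k$ with $v_k\xrightarrow{j_k/p_k}v_{k+1}$ yields $v_1\in V$ with $\lambda(v_1,j_1\cdots j_m)\converges$. I would also note that $E(U)>0$ forces $\mathit{inp}(U)\neq\emptyset$, since otherwise the maximum in \eqref{defE} is over the empty set and hence $0$; so the root of $\Ads(U)$ carries a label, $|\sigma|\ge 1$, and the base case $|\sigma|=0$ is vacuous.

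For the inductive step, let $i$ be the label of the root of $\Ads(U)$, which by construction witnesses the maximum defining $E(U)$, and let $o$ be the output on the first edge of $\pi$. Then $\pi$ continues with a root-to-leaf path $\pi'$ of $\Ads(U')$ for $U':=U\mathord{\xrightarrow{i/o}}$, and $\sigma=i\,\sigma'$ where $\sigma'$ is the label sequence of $\pi'$; since this edge exists, $U'\neq\emptyset$ and $|U\mathord{\xrightarrow{i/o}}|\ge 1$. If $E(U')>0$, the induction hypothesis (available because $|\sigma'|<|\sigma|$) provides $s,s'\in U'$ with $\lambda(s,\sigma')\neq\lambda(s',\sigma')$; since $s,s'\in U\mathord{\xrightarrow{i/o}}$ we may pick $r,r'\in U$ with $r\xrightarrow{i/o}s$ and $r'\xrightarrow{i/o}s'$, and then $\lambda(r,\sigma)=o\cdot\lambda(s,\sigma')\neq o\cdot\lambda(s',\sigma')=\lambda(r',\sigma)$, where $o\cdot w$ is undefined exactly when $w$ is.

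If instead $E(U')=0$, I would examine the $o$-summand of $E(U)$, namely $\tfrac{|U\mathord{\xrightarrow{i/o}}|\,(|U\mathord{\xrightarrow{i}}|-|U\mathord{\xrightarrow{i/o}}|+E(U'))}{|U\mathord{\xrightarrow{i}}|}$. If $|U\mathord{\xrightarrow{i}}|=|U\mathord{\xrightarrow{i/o}}|$, every $i$-enabled state of $U$ outputs $o$ on $i$, so $U\mathord{\xrightarrow{i/o''}}=\emptyset$ for all $o''\neq o$ and the entire sum for input $i$ collapses to $E(U')=0$; as $i$ witnesses the maximum, $E(U)=0$ follows, contradicting the hypothesis. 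Hence $|U\mathord{\xrightarrow{i}}|>|U\mathord{\xrightarrow{i/o}}|$, so there is $r'\in U$ with $\delta(r',i)\converges$ and $\lambda(r',i)=o''$ for some $o''\neq o$. By the auxiliary fact applied to $U'$ and $\pi'$ there is $s\in U'$ with $\lambda(s,\sigma')\converges$; choosing $r\in U$ with $r\xrightarrow{i/o}s$, the word $\lambda(r,\sigma)=o\cdot\lambda(s,\sigma')$ is defined and starts with $o$, whereas $\lambda(r',\sigma)=o''\cdot\lambda(\delta(r',i),\sigma')$ is either undefined or starts with $o''\neq o$; in either case $\lambda(r,\sigma)\neq\lambda(r',\sigma)$.

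The step I expect to be the real obstacle is this last case distinction: one must see that the combination $E(U')=0$ and $|U\mathord{\xrightarrow{i}}|=|U\mathord{\xrightarrow{i/o}}|$ cannot occur without contradicting the maximality of the input $i$ chosen by $\Ads$, and that in the surviving subcase the realisability fact is exactly what is needed to ensure that one of the two witness states genuinely produces outputs along the whole of $\sigma$ — otherwise both $\lambda(r,\sigma)$ and $\lambda(r',\sigma)$ could be undefined, and the required inequality would collapse. The other ingredients — lifting a distinguishing pair through a single transition, and dismissing the base case — are routine.
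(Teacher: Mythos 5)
Your proof is correct and follows essentially the same route as the paper's: induction on the length of the path, splitting $\sigma = i\,\sigma'$ at the root (whose label witnesses the maximum in $E(U)$), and a two-case analysis that is just the paper's dichotomy $|U\mathord{\xrightarrow{i}}| > |U\mathord{\xrightarrow{i/o}}|$ versus $|U\mathord{\xrightarrow{i}}| = |U\mathord{\xrightarrow{i/o}}|$ reorganised around $E(U')=0$ versus $E(U')>0$, linked by the same observation that equal cardinalities force $E(U)=E(U\mathord{\xrightarrow{i/o}})$. Your auxiliary realisability fact and the explicit convention for comparing partial values address a definedness issue that the paper's own proof glosses over, so this is a slightly more careful rendering of the same argument rather than a different one.
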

  \begin{proof}
    We prove the claim by induction on the length (that is, the number of transitions) of $\pi$.
    
    If the length of $\pi$ is $0$ then none of the states in $U$ has an outgoing transition, which implies that $E(U)=0$, which means that the statement of the lemma holds.

    For the induction step, assume that the length of $\pi$ is greater than $0$.
    Then at least one state in $U$ has an outgoing transition.
    Let $i$ be the input that witnesses the maximum in $E(U)>0$.
    Then the root of $\Ads(U)$ has label $i$ and thus $\sigma = i \;\sigma'$, for some $\sigma'$.
    Suppose that path $\pi$ starts with an $o$-transition from $U$ to $\Ads(U\xrightarrow{i/o})$,
    where $\mathord{U\xrightarrow{i/o}} \neq \emptyset$. We consider two cases:
    \begin{itemize}
    \item If $|U\xrightarrow{i}| > |U\xrightarrow{i/o}|$, then there is some
      $o'\neq o$ with $\mathord{U\xrightarrow{i/o'}}\neq \emptyset$. This implies there exist states $r,r'\in U$ with $\lambda(r,i) =o$
      and $\lambda(r,i) =o'$
      and in particular $\lambda(r,\sigma) \neq \lambda(r',\sigma)$.
    \item If $|U\xrightarrow{i}| = |U\xrightarrow{i/o}|$ then
    $E(U\xrightarrow{i/o}) = E(U) > 0$. This means we can apply the induction hypothesis to obtain $p,p'\in U\xrightarrow{i/o}$
      with $\lambda(p,\sigma') \neq \lambda(p',\sigma')$.
      By definition of $U\xrightarrow{i/o}$, this yields us $r,r'\in U$ with
      $r\xrightarrow{i/o} p$ and
      $r'\xrightarrow{i/o} p'$
      and so $\lambda(r, \sigma) \neq\lambda(r, \sigma)$ for the composed
      $\sigma = i\;\sigma'$.
      \qed
    \end{itemize}
  \end{proof}
  We now come to the main proof of \autoref{adscorrect}, i.e.~that the updated
  output queries induce at least the same apartness pairs and so the norm
  $\N(\Obs)$ grows with each rule application:
  \begin{enumerate}
  \item It is clear that $\OutputQuery(\access(q)\;i\;\Ads(S))$ in the updated (R2')
    discovers at least the same apartness pairs as $\OutputQuery(\access(q)\;i)$
    in plain $\lsharp$.
  \item For $U := \{b\in S\mid \neg(b\apart q)\}$, we show that 
    querying $\access(q)\;\Ads(U)$ in the updated
    (R3') makes $b$ apart from at least one $r\in U$.
    Whenever the rule (R3') is applied, then $U$ contains two states that are apart, so $E(U) > 0$ by
    \autoref{adsexists}. Let $\sigma\in I^+$ be the sequence that is sent in
    total to the teacher, i.e.~$\sigma$ is the path of $\Ads(U)$ that is
    actually run.

    We distinguish two cases:
    \begin{itemize}
    \item If $\sigma$ does not reach to a leaf in the decision tree $\Ads(U)$,
      then this means that the adaptive distinguished sequence terminated
      earlier because of an unexpected output $o\in O$. Concretely, this means
      that $\lambda(q,\sigma) \neq \lambda(b,\sigma)$ for all $b\in U$.
    \item If $\sigma$ reaches a leaf in the decision tree $\Ads(U)$,
      then we obtain that there are $r,r'\in U$ with $\lambda(r,\sigma) \neq
      \lambda(r',\sigma)$, by \autoref{adsproductive}. After the output query,
      $\lambda(q,\sigma)\converges$, and so by weak co-transitivity (\autoref{la:
        weak co-transitivity}), either $q\apart r$ or $q\apart r'$ or both.
  \end{itemize}
  Hence, $\lsharp_\Ads$ lets the norm $\N(\Obs)$ grow with each rule
  application. By \autoref{thmBound}, $\lsharp_\Ads$ must have reached the
  correct hypothesis before $\N(\Obs)$ exceeds the bound, hence it is correct.
  (Note that \autoref{thmBound} is a general observation on observation trees
  and does not involve the algorithm at all).
  \end{enumerate}
\end{proofappendix}

\begin{proofappendix}{thmQueryComplexity}
  Strategic $\lsharp_\Ads$ makes the same amount of output queries and
  equivalence queries as strategic $\lsharp$, so it is sufficient to discuss
  strategic $\lsharp$.

	In the strategic $\lsharp$, every (non-terminating) application of rule (R4) leads to
	an isolated state in the frontier, i.e.~increases the basis by one state before
	another equivalence query can be asked.
	Since the basis may contain at most $n$ elements, this means that there are at most
	$n-1$ applications of rule (R4).
	Processing the counterexamples generated by the resulting consistency checks and equivalence queries of rule (R4) will require
	$\bigO(n \log m)$ output queries.
	By Theorem~\ref{thmProgress} and Theorem~\ref{thmBound} there are at most $\bigO(k n^2)$ rule applications during a run of $\lsharp$.
	Since applications of rule (R1) require no output queries, and each application of rule (R2) and (R3) requires exactly one output query, this means that applications of rules (R1), (R2) and (R3) will require $\bigO(k n^2)$ output queries.
	Altogether, $\lsharp$ will require $\bigO(kn^2+n \log m)$ output queries.
\end{proofappendix}

\begin{proofappendix}{thmSymbolComplexity}
	During counterexample processing, we may create a witness of length at most $m$ between a state $q$ in the frontier and a state $r$ in the basis. When we
	subsequently move state $q$ to the basis, we have created a pair of states in the basis with a witness of length $m$.
	In fact, at any point during a run of $\lsharp$, the length of a minimal witness that distinguishes a state from $F \cup S$ from a state of $S$ will be at most $m$.
	Since $n \in \bigO(m)$, this implies that the number of symbols in any output query will be $\bigO(m)$.
	Since, by Theorem~\ref{thmQueryComplexity}, there are $\bigO(kn^2+n \log m)$ output queries, the result follows.
\end{proofappendix}

  \newpage
  \section{Complete benchmarking results}%
  \label{completeBenchmarks}%
  \newcommand{\totalpagecount}{5}%
  In Tables \ref{table_full_1} to \ref{table_full_\totalpagecount},
  we list the number of queries for every model and every
  learning algorithm. See \autoref{sec:bench} for and description of the
  benchmark setup.
  \foreach \pagenr in {1,...,\totalpagecount} {%
    \begin{table}[H]
      \caption{Benchmark results (Part \pagenr\ of \totalpagecount)}
      \label{table_full_\pagenr}
      \scriptsize{}\centering
      \input{table_full_\pagenr}
    \end{table}
  }
}{}

\end{document}